\renewcommand{\paragraph}{\roman{paragraph}}
\renewcommand\title[1]{\gdef\@title{\reset@font\Large\bfseries #1}}
\renewcommand\section{\@startsection {section}{1}{\z@}%
                                   {-3.5ex \@plus -1ex \@minus -.2ex}%
                                   {2.3ex \@plus.2ex}%
                                   {\normalfont\large\bfseries}}
\renewcommand\subsection{\@startsection{subsection}{2}{\z@}%
                                     {-3ex\@plus -1ex \@minus -.2ex}%
                                     {1.5ex \@plus .2ex}%
                                     {\normalfont\normalsize\bfseries}}
\renewcommand\subsubsection{\@startsection{subsubsection}{3}{\z@}%
                                     {-2.5ex\@plus -1ex \@minus -.2ex}%
                                     {1.5ex \@plus .2ex}%
                                     {\normalfont\normalsize\bfseries}}
\def\@runningauthor{}\newcommand{\runningauthor}[1]{\def\runningauthor{#1}}
\def\@runningtitle{}\newcommand{\runningtitle}[1]{\def\runningtitle{#1}}
\renewcommand{\ps@plain}{%
\renewcommand{\@evenhead}{\footnotesize\scshape \hfill\runningauthor\hfill}
\renewcommand{\@oddhead}{\footnotesize\scshape \hfill\runningtitle\hfill}}
\newcommand{\F}{\mathbb{F}}
\newcommand {\C}{{\mathcal{C}}}
\newcommand {\SQ}{{\mathcal{SQ}}}
\newcommand {\NSQ}{{\mathcal{NSQ}}}
\newcommand{\Rmnum}[1]{\expandafter\@slowromancap\romannumeral #1@}
\theoremstyle{plain}
\newtheorem{theorem}{Theorem}[section]
\newtheorem{lemma}[theorem]{Lemma}
\newtheorem{proposition}[theorem]{Proposition}
\theoremstyle{definition}
\newtheorem{definition}[theorem]{Definition}
\newtheorem{example}[theorem]{Example}
\theoremstyle{remark}
\newtheorem{remark}[theorem]{Remark}
\runningauthor{}
\date{}
\begin{document}
\begin{sloppypar}

  \title{\Large{New ternary self-orthogonal codes and related LCD codes from weakly regular plateaued functions}
  \thanks{The authors are with School of Mathematics, Hefei University of Technology, Hefei, 230601, China (email: dengchengxiemath@163.com, zhushixinmath@hfut.edu.cn, yanglimath@163.com).}
  \thanks{This research is supported by the National Natural Science Foundation of China (Grant Nos. U21A20428 and 12171134).}}
  \author{Dengcheng Xie, Shixin Zhu\thanks{Corresponding author}, Yang Li}

  \maketitle
  
  \begin{abstract}
    A linear code is said to be self-orthogonal if it is contained in its dual. 
    Self-orthogonal codes are of interest because of their important applications, such as for constructing linear complementary dual (LCD) codes and quantum codes. 
    In this paper, we construct several new families of ternary self-orthogonal codes by employing  weakly regular plateaued functions.   
    Their parameters and weight distributions are completely determined. 
    Then we apply these self-orthogonal codes to construct several new families of ternary LCD codes. 
    As a consequence, we obtain many (almost) optimal ternary self-orthogonal codes and LCD codes.

  \end{abstract}
  {\bf Keywords:} Linear code, Self-orthogonal code, LCD code, Weakly regular plateaued function\\
  {\bf Mathematics Subject Classification} 94B05 12E10

  \section{Introduction}
  Throughout this paper, let $p$ be an odd prime and $n$ be a positive integer. 
  Let $\F_{q}$ be the finite field with $q$ elements and $\F_q^*=\F_q\setminus \{0\}$, where $q=p^n$.    
  A $q$-ary {\em linear code} $\C$, denoted by $[n,k,d]$, is a $k$-dimensional linear subspace of $\F^n_{q}$ and 
  $d$ is called the {\em minimum Hamming weight} of $\C$.  
  Let the polynomial $1+A_1z+\cdots+A_nz^n$ denote the {\em weight enumerator} of $\mathcal{C}$, 
  where $A_i$ is the number of codewords with Hamming weight $i$ in $\mathcal{C}$.
  Then the sequence $(1,A_1,A_2,\cdots,A_n)$ is called the {\em weight distribution} of $\mathcal{C}$.
  If the number of $i$ such that $A_i\neq 0$ in the sequence $(1,A_1,A_2,\cdots,A_n)$ equals $t$, 
  then the linear code $\mathcal{C}$ is called a {\em $t$-weight code}.  
  The well-known {\em Sphere-packing bound} \cite{ternary self-orthogonal} on a $q$-ary $[n,k,d]$ linear code yields that  
  \begin{align}\label{eq.Sphere-packing}
      q^n\geq q^k\sum_{i=0}^{\left\lfloor (d-1)/2\right\rfloor}\binom{n}{i}(q-1)^i,  
  \end{align}
  where $\left\lfloor \cdot\right\rfloor$ denotes the floor function.    
  An $[n,k,d]$ linear code $\mathcal{C}$ is said to be {\em optimal} if there is no $[n,k,d']$ linear code such that $d'>d$  
  and said to be {\em almost optimal} if there is an optimal $[n,k,d+1]$ linear code. 
  
\subsection{Self-orthogonal codes and linear complementary dual codes}
For any $q$-ary $[n,k,d]$ linear code, its {\em dual code} is a linear code defined by  
$$\mathcal{C}^\bot=\left\{{\bf{c}}^{\bot}\in \F^n_{q}: \langle{\bf{c}}^{\bot},{\bf{c}}\rangle=0 \ {\rm for \ all}\ {\bf{c}}\in \mathcal{C}\right\}$$  
and $\C^{\perp}$ has parameters $[n,n-k]$, where $\langle\cdot,\cdot\rangle$ denotes the standard inner product. 
We call $\mathcal{C}$ a {\em self-orthogonal code} if $\mathcal{C} \subseteq \mathcal{C}^\bot$.  
Self-orthogonal codes have been widely studied since coding theory was proposed \cite{SO-11}.
There are several reasons why they attracted wide interest.
On the one hand, they are closely connected with other mathematical structures such as 
combinatorial $t$-design theory \cite{t-design}, group theory \cite{C-1} and modular forms \cite{Shi-book}. 
On the other hand, they have vital applications in constructing quantum codes \cite{quantum-1,quantum-2,quantum-3,quantum-4,quantum-5}.  
Note also that there are too many self-orthogonal codes as the dimension $k$ increases and the inner product changes. 
Li $et~al.$ \cite{Li-Xu-Zhao} and Shi $et~al.$ \cite{conjectures} determined the minimum distance of an optimal binary Euclidean self-orthogonal $[n,k]$ code for $k \leq 6$.  
Bouyukliev $et~al.$ \cite{SIAM-SO} also classified ternary Euclidean self-orthogonal codes and quaternary Hermitian self-orthogonal code for small parameters. 
For more details on self-orthogonal codes, we refer to \cite{SO-40,sym-SO,Kim-SO,IEEE-kim-Lee,SO-DM-Vpless} and the references therein.


A linear code $\mathcal{C}$ is said to be {\em linear complementary dual (LCD)} if $\mathcal{C}\cap\mathcal{C}^\bot=\{\textbf{0}\}$, 
where $\textbf{0}$ is a zero vector. 
With \cite{LCD-Massey}, an $[n,k]$ linear code $\mathcal{C}$ generated by a matrix $G$ is LCD if and only if $GG^T$ is nonsingular.     
It is also clear that the dual code of an LCD code is still LCD. 
At first, LCD codes were introduced by Massey \cite{LCD-Massey} in order to solve a problem in information theory.
Sendrier \cite{LCD-is-good} showed that LCD codes meet the asymptotic Gilbert-Varshamov bound.
Carlet and Guilley \cite{LCD code APP} investigated an application of binary LCD codes against Side-Channel Attacks (SCAs) and Fault Injection Attacks (FIAs). 
The study of LCD codes has thus become a hot research topic and been further carried out recently 
(see \cite{LCD(MDS)-1,LCD(MDS)-2,LCD(MDS)-3,LCD(MDS)-5,LCD(MDS)-4 LiS,LCD(MDS)-6 liY,LCD(MDS)-7,IT-Jin,MDS-LCD-equ}).
In particular, Carlet $et~al.$ showed that any $q$-ary linear code is equivalent to some Euclidean LCD code in \cite{LCD-equivalent}. 
This motivates us to study LCD codes over small finite fields.
Therefore, it is always very interesting to construct self-orthogonal codes and LCD codes in the study of coding theory.

\subsection{Related works on the constructions of linear codes}

It is well-known that there are many different methods to construct linear codes. 
On the one hand, we note that one of them is the so-called generic method, which was proposed by Ding $et~al.$ in \cite{secret share 3[1]}. 
Specifically, we consider the {\em absolute trace function} $Tr_{p}^{n}(x)=x+x^p+x^{p^2}+\dots+x^{p^{n-1}}$ 
that maps an element $x\in \F_{q}$ to another element of $\F_p$.   
By fixing a set $D=\{d_1,d_2,\cdots,d_n\}\subseteq \F_q^*$, Ding $et~al.$ \cite{secret share 3[1]} 
proposed a general method to generate linear codes $\mathcal{C}'_D$ with the form of 
\begin{align}\label{simple C_D}
   \mathcal{C}_D'=\{\left(Tr_{p}^{n}(xd_1),Tr_{p}^{n}(xd_2),\cdots,Tr_{p}^{n}(xd_n)\right):\ x\in \F_q\}. 
\end{align}  
Furthermore, the set $D$ is called the {\em defining set} of $\mathcal{C}'_D$.
In \cite{secret share 3[1]}, Ding $et~al.$ chose the defining set $D=\{x\in \F_q^*: x^2=0\}$ and constructed several new families of $2$-weight and $3$-weight linear codes.   
Based on this generic method, many researchers have employed $p$-ary (weakly regular) bent functions and plateaued functions to provide defining sets and 
constructed infinite families of linear codes with few weights or other desired properties 
(see for example \cite{WRP31[2],RF[1],SqF[2],WRB[1],DfSeT-1,DfSeT-2,DfSeT-3,DfSeT-4,HDZ-FFA[1],Heng-DfSeT-2,Heng-multiplicative characters}).  

Note that compared with using only one weakly regular bent or plateaued function, 
it remains an open problem for a long time to construct linear codes with more flexible parameters by mixing two or more (possibly) different functions.  
In 2021, Cheng $et~al.$ fulfilled this gap in \cite{Cheng Y.& Cao[1]}. 
They presented a new construction of linear codes, which gives a linear code 
\begin{align}\label{Double C_D}
  \mathcal{C}_D=\{Tr_{p}^{2n}(\alpha x+\beta y)_{(x,y)\in D}:\alpha,\beta \in \F_q\},
\end{align}
where $D=\{(x,y)\in {\F_q\times \F_q}\backslash\{(0,0)\},f(x)+g(x)=c\}$ for $c \in \F_p^*$ and $f,g: \F_q\rightarrow\F_p $ are weakly regular plateaued functions. 
As a result, they obtained several classes of few-weight linear codes with good and flexible parameters by using the properties of cyclotomic fields and exponential sums.

Very recently, by employing the well-known augmented construction \cite{ternary self-orthogonal}, Heng $et~al.$ \cite{Ternary & bent functions} 
considered the augmented code of $\C'_D$ defined in Equation (\ref{simple C_D}), which has the form of
\begin{align}\label{AC_D}
  \overline{\mathcal{C}'_{D}}=\{\left(Tr_{p}^{n}(xd_1),Tr_{p}^{n}(xd_2),\cdots,Tr_{p}^{n}(xd_n)\right)+\mu{\bf{1}} : x\in \F_q, \mu \in \F_p\},
\end{align}
where the defining set $D=\{d_1,d_2,\cdots,d_n\}\subseteq \F_q^*$ and ${\bf{1}}=(1,1,\cdots,1)\in \F_p^n$. 
This construction can yield linear code with dimension increasing by $1$ if ${\bf{1}} \notin \mathcal{C}'_D$ and hence,   
Heng $et~al.$ obtained several families of ternary self-orthogonal codes of larger dimensions by choosing defining sets from bent functions. 
They also further constructed related ternary LCD codes from these self-orthogonal codes.

\subsection{Motivations and contributions} 
Note that weakly regular plateaued functions are generalizations of bent functions. 
Inspired by the ideas and methods in \cite{Cheng Y.& Cao[1],Ternary & bent functions}, 
a natural and interesting problem rises: \textbf{Can new infinite families of self-orthogonal codes and LCD codes are constructed via the augmented codes of 
$\mathcal{C}_D$, where the defining sets $D$ are chosen by mixing two different weakly regular plateaued functions?}  

Motivated by this problem, we study in this paper the ternary augmented code        
 \begin{align}\label{df: double AC_D}
   \overline{\mathcal{C}_{D}}=\left\{Tr_{3}^{s}(\alpha x+\beta y)_{(x,y)\in D}+\mu {\bf{1}}:  (\alpha,\beta)\in \F_{3^n}\times\F_{3^m}, \mu \in \F_3\right\},  
 \end{align}
 where $s=n+m$ with $n$ and $m$ being two positive integers and the defining set $D$ is chosen from two different weakly regular plateaued functions. 
 According to \cite{Cheng Y.& Cao[1]}, we know that $\mathbf{1}\notin \mathcal{C}_D$, and hence, the augmented construction $\overline{\mathcal{C}_{D}}$ is meaningful.   
 Specifically, for $\lambda \in \F_3^*$, we will consider the following two new defining sets:  
\begin{align}\label{eq.D_f D_g}
  &D_{fg}(0)=\{(x,y)\in \F_{3^n}\times\F_{3^m}: f(x)+g(y)+\lambda=0\}~{\rm and}\\
  &D_{g}(0)=\left\{(x,y)\in \F_{3^n}\times\F_{3^m}:Tr_{3}^{n}(x)+g(y)+\lambda=0\right\}, 
 \end{align}
 where $f(x)$ and $g(y)$ are respectively weakly regular $k_f$-plateaued and $k_g$-plateaued unbalanced functions, 
 $0\leq k_f\leq n$ and $0\leq k_g\leq m$.  
 
Moreover, our main contributions can be summarized as follows:  
\begin{enumerate}
  \item [\rm (1)] Based on these two new defining sets, we obtain several new infinite families of ternary self-orthogonal codes in 
  Theorems \ref{Th. double even ternary}, \ref{Th. double odd ternary}, \ref{Th.simple even 0,SQ,NSQ} and \ref{Th.simple odd 0,SQ,NSQ}, respectively. 
  We completely determine their weight distributions and list respectively the weight distributions in Tables \ref{tab: double even }-\ref{tab:simple odd 0,SQ,NSQ}. 
  We also determine the whole parameters of the dual codes of these self-orthogonal codes in Theorems \ref{Th.double dual} and \ref{Th.simple dual}.  
  As explicit examples, we list some (almost) optimal ternary self-orthogonal codes in Examples \ref{1,4+0+-1, even} and \ref{1,3+0+-1, odd}.
  
  \item [\rm (2)] Based on the ternary self-orthogonal codes given above, we further construct several new families of related ternary LCD codes. 
  We completely determine the whole parameters of these LCD codes as well as their dual codes in Theorems \ref{Th. LCD even ternary}-\ref{Th. LCD simple even ternary}. 
  It is worth noting that they contain several new infinite families of ternary LCD codes, which are at least almost optimal 
  with respect to the Sphere-packing bound given in Equation (\ref{eq.Sphere-packing}).  
\end{enumerate}

The paper is organized as follows. 
In Section \ref{sec2}, we review some useful basic knowledge on cyclotomic fields, weakly regular plateaued functions and the Pless power moments. 
In Section \ref{sec.auxiliary results}, we give some auxiliary results for later use. 
In Section \ref{sec.SO}, we construct several new infinite families of ternary self-orthogonal codes.  
In Section \ref{sec.LCD codes}, we investigate the dual codes of these ternary self-orthogonal codes. 
In Section \ref{sec.LCD codes}, we consider an application of ternary self-orthogonal codes in ternary LCD codes and 
derive several new infinite families of ternary LCD codes that contain (almost) optimal ternary LCD codes. 
In Section \ref{conclude}, we conclude this paper.

\section{Preliminaries}\label{sec2}
In this section, we recall some basic knowledge on cyclotomic fields, weakly regular plateaued functions and Pless power moments.

\subsection{Cyclotomic fields}\label{sec2.1 Basic Note} 
Let $p$ be an odd prime and $\zeta_p$ be a primitive $p$-th complex root of unity. 
Let $\SQ$ and $\NSQ$ denote the set of all nonzero squares and nonsquares in $\F_p$, respectively. 
Let ${\eta_0}$ be the quadratic characters of $\F_p^*$ and $p^*={\eta_0}(-1)p=(-1)^{\frac{p-1}{2}}p$. 
The following lemma are deduced from \cite{quadratic character[2]}. 
\begin{lemma}[\cite{quadratic character[2]}]\label{characters} 
  Let notations be the same as above. Then we have  
  \begin{enumerate}    
      \item [\rm (1)] $\sum_{\kappa    \in \F_p^*}\eta_0(\kappa   )=0$;  
      
      \item [\rm (2)] $\sum_{\kappa   \in \F_p^*}\eta_0(\kappa   )\zeta_p^\kappa  =\sqrt{p^*}$; 
  
      \item [\rm (3)] $\sum_{\kappa   \in \F_p^*}\zeta_p^{\kappa   \tau }=1$ and 
      $\sum_{\kappa   \in \F_p}\zeta_p^{{\kappa ^2}\tau}=\eta_0(\tau)\sqrt{p^*}$ for any $\tau  \in \F_p^*.$  
  \end{enumerate}
\end{lemma}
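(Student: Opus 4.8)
All three assertions are classical facts about the additive and multiplicative characters of $\F_p$ together with the quadratic Gauss sum, so the plan is to assemble the standard short arguments rather than to prove anything new; the one genuinely delicate point is the sign in part~(2), which is Gauss's sign theorem and for which I would simply invoke \cite{quadratic character[2]}.

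First I would dispose of part~(1) and the first identity of part~(3) by elementary counting. Since $p$ is odd, exactly $(p-1)/2$ elements of $\F_p^*$ lie in $\SQ$ and $(p-1)/2$ in $\NSQ$, so $\eta_0$ takes the value $+1$ on half of $\F_p^*$ and $-1$ on the other half, whence $\sum_{\kappa\in\F_p^*}\eta_0(\kappa)=0$. For the first identity of~(3), note that if $\tau\in\F_p^*$ then $\zeta_p^{\tau}$ is again a primitive $p$-th root of unity, so $\sum_{\kappa\in\F_p}\zeta_p^{\kappa\tau}$ is the sum of all $p$-th roots of unity and hence equals $0$; isolating the term $\kappa=0$ then gives the stated value of $\sum_{\kappa\in\F_p^*}\zeta_p^{\kappa\tau}$.

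Next I would handle the Gauss sum in part~(2). Set $g=\sum_{\kappa\in\F_p^*}\eta_0(\kappa)\zeta_p^{\kappa}$. Expanding $g^2=\sum_{\kappa_1,\kappa_2\in\F_p^*}\eta_0(\kappa_1\kappa_2)\zeta_p^{\kappa_1+\kappa_2}$ and substituting $\kappa_2=\kappa_1 c$ (so that $c$ ranges over $\F_p^*$), the identity $\eta_0(\kappa_1^2 c)=\eta_0(c)$ turns this into $\sum_{c\in\F_p^*}\eta_0(c)\sum_{\kappa_1\in\F_p^*}\zeta_p^{\kappa_1(1+c)}$. The inner sum equals $p-1$ when $c=-1$ and $-1$ for the remaining $c$, so using part~(1) one obtains $g^2=\eta_0(-1)(p-1)+\eta_0(-1)=\eta_0(-1)p=p^{*}$, hence $g=\pm\sqrt{p^{*}}$. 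Pinning down which sign occurs — $g=\sqrt{p}$ for $p\equiv1\pmod 4$ and $g=i\sqrt{p}$ for $p\equiv3\pmod 4$ — is the genuine obstacle; I would quote Gauss's classical sign determination (i.e.\ \cite{quadratic character[2]}), and with the convention that $\sqrt{p^{*}}$ denotes exactly that value, part~(2) follows.

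Finally, the second identity of part~(3) is a formal consequence of parts~(1)--(2) and the first identity of~(3). As $\kappa$ runs over $\F_p$, the map $\kappa\mapsto\kappa^2$ hits $0$ once and each element of $\SQ$ twice, so
\begin{align*}
  \sum_{\kappa\in\F_p}\zeta_p^{\kappa^2\tau}
  =1+2\sum_{s\in\SQ}\zeta_p^{s\tau}
  =1+\sum_{a\in\F_p^*}\bigl(1+\eta_0(a)\bigr)\zeta_p^{a\tau}
  =1+\sum_{a\in\F_p^*}\zeta_p^{a\tau}+\sum_{a\in\F_p^*}\eta_0(a)\zeta_p^{a\tau}.
\end{align*}
The first two terms cancel by the first identity of~(3), while the change of variables $a\mapsto a\tau^{-1}$ in the last sum together with $\eta_0(\tau^{-1})=\eta_0(\tau)$ turns it into $\eta_0(\tau)g$; thus $\sum_{\kappa\in\F_p}\zeta_p^{\kappa^2\tau}=\eta_0(\tau)\sqrt{p^{*}}$. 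In short, (1), the first half of (3) and the equation $g^2=p^{*}$ are routine bookkeeping, and essentially all the weight of the lemma rests on the Gauss sign appearing in (2).
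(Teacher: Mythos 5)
Your argument is the standard one and is essentially correct; note, however, that the paper offers no proof of this lemma at all --- it is quoted verbatim from the reference \cite{quadratic character[2]} (Lidl--Niederreiter), so there is no internal proof to compare against. Your treatment of (1), of $g^2=p^*$, and of the reduction of $\sum_{\kappa\in\F_p}\zeta_p^{\kappa^2\tau}$ to the Gauss sum are all sound, and deferring the sign determination in (2) to the classical reference is exactly what the paper itself implicitly does.

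One point you should not gloss over: your derivation of the first identity of (3) yields
\begin{align*}
\sum_{\kappa\in\F_p^*}\zeta_p^{\kappa\tau}=0-\zeta_p^{0}=-1,
\end{align*}
whereas the lemma as printed asserts the value is $1$. You write that isolating the term $\kappa=0$ ``gives the stated value,'' but it does not --- it gives $-1$, and indeed your own final step for the second identity of (3) (``the first two terms cancel'') silently uses $1+(-1)=0$, which would fail if the stated value $1$ were correct. The value $-1$ is also the one the paper actually uses downstream (e.g.\ in the proof of Lemma \ref{P_{f*,g*+c=0}}, where $\sum_{t\in\F_3^*}\zeta_3^{\lambda t}=-1$ for $\lambda\in\F_3^*$ produces the stated count). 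So the ``$=1$'' in the statement is a typo; your proof is right, but you should say explicitly that you are proving the corrected identity rather than claiming agreement with the printed one.
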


A cyclotomic field $\mathbb{Q}(\zeta_p)$ is obtained from the rational field $\mathbb{Q}$ by adjoining $\zeta_p$. 
The field extension $\mathbb{Q}(\zeta_p)/\mathbb{Q}$ is Galois of degree $p-1$ and the Galois group of 
$\mathbb{Q}(\zeta_p)$ over $\mathbb{Q}$ is 
\begin{align*}
  {\rm Gal}(\mathbb{Q}(\zeta_p)/\mathbb{Q})=\{\sigma_\kappa  : \kappa  \in {\F_p^*}\}, 
\end{align*}
where $\sigma_\kappa$ is an automorphism  of $\mathbb{Q}(\zeta_p)$ defined by $\sigma_\kappa   (\zeta_p)=\zeta_p^\kappa $. 
For any $\kappa   \in \F_p^*$ and $\tau   \in \F_p$, 
we clearly have $\sigma_\kappa  (\zeta_p^\tau  )=\zeta_p^{\kappa  \tau  }$ and it follows from Lemma \ref{characters} that 
$\sigma_\kappa   (\sqrt{p^*}^n)=\eta_0^n(\kappa  )\sqrt{p^*}^n$. 
Hence, the cyclotomic field $\mathbb{Q}(\zeta_p)$ has a unique quadratic subfield $\mathbb{Q}(\sqrt{p^*})$ 
and 
\begin{align*}
  {\rm Gal}\left(\mathbb{Q}\left(\sqrt{p^*}\right)/\mathbb{Q}\right)=\{1,\sigma_\mu \}\ {\rm for}\ \mu  \in \NSQ.  
\end{align*}

\subsection{Weakly regular plateaued functions}\label{sec2.3 WRP}

\begin{definition}\label{def.WTrans}
  Assume that $f:\F_{q}\to \F_p$ be a $p$-ary function and $\alpha \in \F_{q}$, then the {\em Walsh transform} of a $p$-ary function $f$ is given by
  \begin{align}\label{WTrans}
    \widetilde{\mathcal{R}}_f(\alpha)=\sum_{x\in \F_{p^n}}\zeta_p^{f(x)-Tr_{p}^{n}(\alpha x)}, 
  \end{align}  
where $\zeta_p$ is a primitive $p$-th complex root of unity.
\end{definition}   
A $p$-ary function $f$ is called {\em balanced} if $\widetilde{\mathcal{R}}_f(0)=0$ 
and {\em bent} if $\vert\widetilde{\mathcal{R}}_f(\alpha)\vert=p^{\frac{n}{2}}$ for every $\alpha \in \F_{q}$. 
A bent function $f$ is called {\em regular} if $\widetilde{\mathcal{R}}_f(\alpha)=\sqrt{p}^n\zeta_p^{f^*(\alpha)}$ 
and called {\em weakly regular} if there is a complex root of unity $\epsilon$ such that 
$\widetilde{\mathcal{R}}_f(\alpha)=\epsilon\sqrt{p}^n\zeta_p^{f^*(\alpha)}$ for all $\alpha \in \F_{q}$, 
where $f^*(x)$ is a $q$-ary function from $\F_{q}$ to $\F_p$ and called the {\em dual} of $f(x)$. 
Moreover, $f^*(x)$ is also  a weakly regular bent function.

As generalizations of bent functions, plateaued functions were introduced over finite fields with characteristic two in \cite{define WRF[2]}. 
Specifically, a $p$-ary function $f: \F_{q} \to \F_p$ is called {\em $k_f$-plateaued} if $|\widetilde{\mathcal{R}}_f(\alpha)|^2\in\{0,p^{n+k_f}\}$ 
for every $\alpha \in \F_{q}$, where $k_f$ is an integer satisfying $0\leq k_f \leq n$. 
It is obvious that every bent function coincides with a $0$-plateaued function. 
The {\em Walsh support} of a $k_f$-plateaued function $f$ is given by
\begin{align}\label{WS}
  \widetilde{\mathcal{S}\mathcal{R}}_f=\{\alpha \in \F_{q}:|\widetilde{\mathcal{R}}_f(\alpha)|^2=p^{n+k_f}\}.
\end{align}  
Clearly, $|\widetilde{\mathcal{S}\mathcal{R}}_f|$=$p^{n-k_f}$, which checks the following proposition. 

\begin{proposition}\label{dim}
  Let $f$ be a $k_f$-plateaued function from $\F_{q}$ to $\F_p$, where $k_f$ is an integer satisfying $0\leq k_f\leq n$. 
  Then there exist $p^{n-k_f}$ (resp. $p^n-p^{n-k_f}$) different $\alpha\in \F_q$ such that 
  $|\widetilde{\mathcal{R}}_f(\alpha )|^2=p^{n+k_f}$ (resp. $0$). 
\end{proposition}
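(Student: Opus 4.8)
The statement to prove is Proposition~\ref{dim}, which asserts that a $k_f$-plateaued function $f:\F_q\to\F_p$ has exactly $p^{n-k_f}$ points $\alpha$ where $|\widetilde{\mathcal R}_f(\alpha)|^2=p^{n+k_f}$ and $p^n-p^{n-k_f}$ points where it vanishes. The plan is to use Parseval's identity for the Walsh transform together with the defining dichotomy of plateaued functions.

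First I would recall Parseval's identity: for any $p$-ary function $f$ on $\F_q$ with $q=p^n$, one has $\sum_{\alpha\in\F_q}|\widetilde{\mathcal R}_f(\alpha)|^2=p^{2n}$. This follows by a standard orthogonality computation, expanding the left side as $\sum_\alpha\sum_{x,y}\zeta_p^{f(x)-f(y)-Tr_p^n(\alpha(x-y))}$, swapping the order of summation, and using that $\sum_{\alpha\in\F_q}\zeta_p^{Tr_p^n(\alpha(x-y))}=p^n$ if $x=y$ and $0$ otherwise (a consequence of the nondegeneracy of the trace form, in the spirit of Lemma~\ref{characters}(3)). The surviving diagonal terms contribute $\sum_x p^n=p^{2n}$.

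Next, let $N$ denote the number of $\alpha\in\F_q$ with $|\widetilde{\mathcal R}_f(\alpha)|^2=p^{n+k_f}$; by the definition of a $k_f$-plateaued function, every other $\alpha$ satisfies $|\widetilde{\mathcal R}_f(\alpha)|^2=0$, so there are $p^n-N$ such points. Substituting into Parseval's identity gives $N\cdot p^{n+k_f}+(p^n-N)\cdot 0=p^{2n}$, hence $N=p^{2n}/p^{n+k_f}=p^{n-k_f}$. This immediately yields that $|\widetilde{\mathcal R}_f(\alpha)|^2=0$ for the remaining $p^n-p^{n-k_f}$ values of $\alpha$, which is exactly the claim; note this also recovers the stated cardinality $|\widetilde{\mathcal S\mathcal R}_f|=p^{n-k_f}$ of the Walsh support in Equation~(\ref{WS}).

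The main (and only) obstacle is really just establishing Parseval's identity cleanly, since everything else is a one-line arithmetic consequence. I would make sure the orthogonality relation $\sum_{\alpha\in\F_q}\zeta_p^{Tr_p^n(\alpha z)}=p^n[z=0]$ is invoked correctly; it holds because $z\mapsto Tr_p^n(\alpha z)$ ranges over a nontrivial additive character of $\F_q$ whenever $z\neq0$, and summing a nontrivial character over the whole group gives $0$. Once that is in place, no further estimates are needed and the count of $p^{n-k_f}$ versus $p^n-p^{n-k_f}$ follows directly.
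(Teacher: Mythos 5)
Your proof is correct and follows the standard route: the paper itself offers no argument beyond asserting ``Clearly, $|\widetilde{\mathcal{S}\mathcal{R}}_f|=p^{n-k_f}$'' just before the proposition, and the Parseval identity $\sum_{\alpha\in\F_q}|\widetilde{\mathcal{R}}_f(\alpha)|^2=p^{2n}$ combined with the two-valued dichotomy of a plateaued function is exactly the computation that justifies that claim. You have simply supplied the details the paper takes for granted, so there is nothing to criticize.
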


\begin{definition}\label{exam model}
  Let $f$ be a $k_f$-plateaued function from $\F_{q}$ to $\F_p$, where $k_f$ is an integer satisfying $0\leq k_f \leq n$. 
  Then $f$ is said to be {\em weakly regular $k_f$-plateaued} if there exists a complex root of unity $u$ such that
   \begin{align}
    \widetilde{\mathcal{R}}_f(\alpha )\in \{0,up^{\frac{n+k_f}{2}}\zeta_p^{f^*(\alpha)}\}    
   \end{align}
for all $\alpha\in \F_{q}$, where $f^*$ is a $p$-ary function over $\F_{q}$ with $f^*(\alpha)=0$ for all $\alpha \in \F_{q}\setminus \widetilde{\mathcal{S}\mathcal{R}}_f$.  
\end{definition}

Let $\varepsilon_f=\pm 1$ be a sign related to the Walsh transform of a $p$-ary function $f$. 
Mesnager $et~al.$ \cite{WRP31[2]} and Sinak $et~al.$ \cite{Exponential sums[3]} presented the following result. 

\begin{proposition}[\cite{WRP31[2],Exponential sums[3]}]\label{prop:gcd}
Let $f:\F_{q} \to \F_p$ be a weakly regular $k_f$-plateaued function, where $k_f$ is an integer with $0\leq k_f\leq n$.    
Let $\mathcal{WRP}$ be the non-trivial class of weakly regular unbalanced $k_f$-plateaued functions $f$ satisfying the following homogeneous conditions: 
\begin{enumerate}    
  \item [\rm (1)] $f(0)=0$;
  \item [\rm (2)] $f(ax)=a^{h_f}f(x)$ for every $a\in \F_p^*$ and $x\in \F_{q}$, where $h_f$ is an even positive integer such that ${\rm gcd}(h_f-1,p-1)=1$. 
\end{enumerate} 
Then if $f \in \mathcal{WRP}$ with $\widetilde{\mathcal{R}}_f(\alpha)=\varepsilon_g{\sqrt{p^*}}^{n+k_f}\zeta_p^{f^*(\alpha)}$ for every $\alpha \in \widetilde{\mathcal{S}\mathcal{R}}_g$,
there exists an even positive integer $l_f$ such that $f^*(b\alpha)=b^{l_f}f^*(\alpha)$ for any $b\in \F_p^*$ and $\alpha \in \widetilde{\mathcal{S}\mathcal{R}}_g$,
where ${\rm gcd}(l_f-1,p-1)=1$.   
\end{proposition}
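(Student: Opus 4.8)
The plan is to combine the homogeneity condition~(2) on $f$ with the action of the Galois group $\mathrm{Gal}(\mathbb{Q}(\zeta_p)/\mathbb{Q})=\{\sigma_c:c\in\F_p^*\}$ on the Walsh transform values, which lie in $\mathbb{Q}(\zeta_p)$. Fix $b\in\F_p^*$. Applying $\sigma_c$ to $\widetilde{\mathcal{R}}_f(\alpha)$ and using that $Tr_{p}^{n}$ is $\F_p$-linear gives $\sigma_c(\widetilde{\mathcal{R}}_f(\alpha))=\sum_{x\in\F_q}\zeta_p^{cf(x)-Tr_{p}^{n}(c\alpha x)}$; substituting $x=ty$ with $t\in\F_p^*$ and invoking $f(ty)=t^{h_f}f(y)$ turns this into $\sum_{y\in\F_q}\zeta_p^{ct^{h_f}f(y)-Tr_{p}^{n}(ct\alpha y)}$. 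First I would choose $t$ and $c$ so that $ct^{h_f}\equiv1$ and $ct\equiv b\pmod p$ hold simultaneously; eliminating $c=t^{-h_f}$ reduces this to the single equation $t^{h_f-1}=b^{-1}$, which has a unique solution in $\F_p^*$ precisely because $\gcd(h_f-1,p-1)=1$. With these choices one obtains the key identity
\[
  \sigma_c\bigl(\widetilde{\mathcal{R}}_f(\alpha)\bigr)=\widetilde{\mathcal{R}}_f(b\alpha)\qquad\text{for all }\alpha\in\F_q .
\]

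Next I would feed in the weakly regular $k_f$-plateaued form. For $\alpha\in\widetilde{\mathcal{S}\mathcal{R}}_f$ we have $\widetilde{\mathcal{R}}_f(\alpha)=\varepsilon_f\sqrt{p^*}^{\,n+k_f}\zeta_p^{f^*(\alpha)}$, and since $\sigma_c$ fixes the sign $\varepsilon_f=\pm1$ and sends $\sqrt{p^*}^{\,n+k_f}$ to $\eta_0(c)^{n+k_f}\sqrt{p^*}^{\,n+k_f}$ (by Lemma~\ref{characters}), the key identity becomes
\[
  \widetilde{\mathcal{R}}_f(b\alpha)=\varepsilon_f\,\eta_0(c)^{n+k_f}\sqrt{p^*}^{\,n+k_f}\zeta_p^{c f^*(\alpha)} .
\]
In particular $|\widetilde{\mathcal{R}}_f(b\alpha)|^2=p^{\,n+k_f}\neq0$, so $b\alpha\in\widetilde{\mathcal{S}\mathcal{R}}_f$ as well, and hence also $\widetilde{\mathcal{R}}_f(b\alpha)=\varepsilon_f\sqrt{p^*}^{\,n+k_f}\zeta_p^{f^*(b\alpha)}$. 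Cancelling the common nonzero factor $\varepsilon_f\sqrt{p^*}^{\,n+k_f}$ leaves $\zeta_p^{f^*(b\alpha)}=\eta_0(c)^{n+k_f}\zeta_p^{c f^*(\alpha)}$; since $p$ is odd, $-1$ is not a power of $\zeta_p$, so the factor $\eta_0(c)^{n+k_f}\in\{\pm1\}$ is forced to be $1$, and therefore $f^*(b\alpha)=c\,f^*(\alpha)$ in $\F_p$.

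It then remains to rewrite $c$ as a fixed power of $b$ and to check the arithmetic of the exponent. Let $e$ be the inverse of $h_f-1$ modulo $p-1$ (it exists because $\gcd(h_f-1,p-1)=1$). Raising $t^{h_f-1}=b^{-1}$ to the $e$-th power and using $t^{p-1}=1$ gives $t=b^{-e}$, hence $c=t^{-h_f}=b^{e h_f}$; since $e h_f=e(h_f-1)+e\equiv 1+e\pmod{p-1}$, one may take $l_f$ to be any positive integer with $l_f\equiv e+1\pmod{p-1}$, which yields $f^*(b\alpha)=b^{l_f}f^*(\alpha)$ simultaneously for all $b\in\F_p^*$. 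Finally I would verify the two asserted properties of $l_f$: reducing $e(h_f-1)\equiv1\pmod{p-1}$ modulo~$2$ (legitimate since $2\mid p-1$) together with $h_f$ even forces $e$ to be odd, so $l_f\equiv e+1$ is even and can be chosen even and positive; and $\gcd(l_f-1,p-1)=\gcd(e,p-1)=1$ because $e$ is a unit modulo $p-1$.

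The main obstacle I anticipate is the first step: correctly tracking how the substitution $x=ty$ acts simultaneously on the $f$-part of the exponent (which scales by $t^{h_f}$) and on the trace part (which scales by $t$), and recognizing that $\gcd(h_f-1,p-1)=1$ is exactly the condition that makes the system $ct^{h_f}=1$, $ct=b$ solvable over $\F_p^*$. Everything past the key identity is a routine comparison of root-of-unity expressions plus elementary modular arithmetic.
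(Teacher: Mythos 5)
The paper states this proposition without proof, importing it from the references it cites (Mesnager--Sinak and Sinak), so there is no in-paper argument to compare against; your proof is correct and follows essentially the same Galois-action route used in those sources, with the identity $\sigma_c\bigl(\widetilde{\mathcal{R}}_f(\alpha)\bigr)=\widetilde{\mathcal{R}}_f(b\alpha)$, obtained from the substitution $x=ty$ and the unique solvability of $t^{h_f-1}=b^{-1}$ under $\gcd(h_f-1,p-1)=1$, as the key step. One minor simplification: since $h_f$ is even, $c=t^{-h_f}$ is automatically a square in $\F_p^*$, so $\eta_0(c)^{n+k_f}=1$ directly, without needing the (also valid) observation that $-1$ is not a $p$-th root of unity.
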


\subsection{Pless power moments}
For any $p$-ary $[n,k,d]$ linear code, 
denote by $(1,A_1,A_2,\cdots,A_n)$ and $(1,A_1^\bot,A_2^\bot,\cdots,A_n^\bot)$ the weight distribution of $\mathcal{C}$ and $\mathcal{C}^\bot$, respectively. 
The following well-knowledge results are called {\em the first four Pless power moments} \cite{ternary self-orthogonal}: 
\begin{align*}
  P_1:\ \ \sum_{j = 0}^{n}A_j=&q^k,\\   
  P_2:\ \sum_{j = 0}^{n}jA_j=&q^k(qn-n-A_1^\bot),\\
  P_3: \sum_{j = 0}^{n}j^2A_j=&q^{k-2}\left((q-1)n(qn-n+1)-(2qn-q-2n+2)A_1^\bot+2A_2^\bot\right),\\
  P_4: \sum_{j = 0}^{n}j^3A_j=&q^{k-3}[(q-1)n(q^2n^2-2qn^2+3qn-q+n^2-3n+2)\\
                        &-(3q^2n^2-3q^2n-6qn^2+12qn+q^2-6q+3n^2-9n+6)A_1^\bot\\
                        &+6(qn-q-n+2)A_2^\bot-6A_3^\bot].\\    
\end{align*} 
It should be emphasized that these four Pless power moments play important roles in determining the minimum weight of the dual code of a linear code.




\section{Some auxiliary results}\label{sec.auxiliary results}
From this section on, we let $p=3$. Then $\sqrt{p^*}=\sqrt{-3}$ and $\eta_0(-1)=-1$. 
Note that we still use the notations $\sqrt{3^*}$ and $\eta_0(-1)$ in the sequel for the uniform representations. 
In the following, we present some auxiliary results, which will be used for determining the parameters and weight distributions  
of ternary linear codes in the next section.  

\begin{lemma}\label{P_{f*,g*+c=0}}
  Let $f,g\in \mathcal{WRP}$ with $\widetilde{\mathcal{R}}_f(\alpha )=\varepsilon_f{\sqrt{3^*}}^{n+k_f}\zeta_3^{f^*(\alpha)}$ for every $\alpha \in \widetilde{\mathcal{S}\mathcal{R}}_f$
  and $\widetilde{\mathcal{R}}_g(\beta)=\varepsilon_g{\sqrt{3^*}}^{m+k_g}\zeta_3^{g^*(\beta)}$ for every $\beta \in \widetilde{\mathcal{S}\mathcal{R}}_g$,
   where $\varepsilon_f,\varepsilon_g\in \{ -1, 1\}$, $0\leq k_f\leq n$ and $0\leq k_g\leq m$.
  Let $s=n+m$, $\lambda\in \F_3$ and 
  \begin{align*}
      P_{f^*g^*}(0)=\#\{(\alpha,\beta) \in \widetilde{\mathcal{S}\mathcal{R}}_f\times\widetilde{\mathcal{S}\mathcal{R}}_g:f^*(\alpha)+g^*(\beta)+\lambda=0\}.    
  \end{align*} 
  Then the following statements hold. 
\begin{enumerate}
  \item [\rm (1)] If $s+k_f+k_g$ is even, then 
  $$ P_{f^*g^*}(0)=\left\{
        \begin{aligned}
        &3^{s-k_f-k_g-1}+2(-1)^{s+1}\varepsilon_f\varepsilon_g{(-3)}^{\frac{s-k_f-k_g-2}{2}}   && \lambda=0, \\    
        &3^{s-k_f-k_g-1}-(-1)^{s+1}\varepsilon_f\varepsilon_g{(-3)}^{\frac{s-k_f-k_g-2}{2}}   && \lambda\in \F_3^*.    
        \end{aligned}
        \right.
        $$
  \item [\rm (2)] If $s+k_f+k_g$ is odd, then 
  $$ P_{f^*g^*}(0)=\left\{
    \begin{aligned}
    &3^{s-k_f-k_g-1}   && \lambda=0, \\
    &3^{s-k_f-k_g-1}+(-1)^{s+1}\varepsilon_f\varepsilon_g{(-3)}^{\frac{s-k_f-k_g-1}{2}}   && \lambda=1,\\
    &3^{s-k_f-k_g-1}-(-1)^{s+1}\varepsilon_f\varepsilon_g{(-3)}^{\frac{s-k_f-k_g-1}{2}}   && \lambda=2.
  \end{aligned} 
\right.
$$ 
\end{enumerate}
\end{lemma}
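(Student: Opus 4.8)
The plan is to evaluate $P_{f^*g^*}(0)$ by expressing the indicator of the condition $f^*(\alpha)+g^*(\beta)+\lambda=0$ through additive characters of $\F_3$, and then recognizing the resulting character sums as values that can be read off from the Walsh transforms of $f$ and $g$. Concretely, I would write
\begin{align*}
  P_{f^*g^*}(0)=\sum_{\alpha\in\widetilde{\mathcal{S}\mathcal{R}}_f}\sum_{\beta\in\widetilde{\mathcal{S}\mathcal{R}}_g}\frac{1}{3}\sum_{z\in\F_3}\zeta_3^{z(f^*(\alpha)+g^*(\beta)+\lambda)}.
\end{align*}
The $z=0$ term contributes $\frac{1}{3}|\widetilde{\mathcal{S}\mathcal{R}}_f|\,|\widetilde{\mathcal{S}\mathcal{R}}_g|=\frac{1}{3}\cdot 3^{n-k_f}3^{m-k_g}=3^{s-k_f-k_g-1}$, which is the main term in every case. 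For $z\in\F_3^*$, I would pull out the $\zeta_3^{z\lambda}$ factor and separate the double sum as
$\bigl(\sum_{\alpha\in\widetilde{\mathcal{S}\mathcal{R}}_f}\zeta_3^{zf^*(\alpha)}\bigr)\bigl(\sum_{\beta\in\widetilde{\mathcal{S}\mathcal{R}}_g}\zeta_3^{zg^*(\beta)}\bigr)$.

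The key step is to compute $\sum_{\alpha\in\widetilde{\mathcal{S}\mathcal{R}}_f}\zeta_3^{zf^*(\alpha)}$ for $z\in\F_3^*$. Since $f\in\WRP$ is unbalanced with $\widetilde{\mathcal{R}}_f(\alpha)=\varepsilon_f\sqrt{3^*}^{\,n+k_f}\zeta_3^{f^*(\alpha)}$ on $\widetilde{\mathcal{S}\mathcal{R}}_f$ (and $0$ off it), the homogeneity property of Proposition \ref{prop:gcd} gives $f^*(b\alpha)=b^{l_f}f^*(\alpha)$ with $l_f$ even and $\gcd(l_f-1,p-1)=1$; over $\F_3$ this means $b^{l_f}=1$ for $b\in\F_3^*$, so $f^*(-\alpha)=f^*(\alpha)$, and moreover for $z=-1$ the substitution $\alpha\mapsto-\alpha$ shows $\sum_\alpha\zeta_3^{-f^*(\alpha)}=\sum_\alpha\zeta_3^{f^*(\alpha)}$. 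To get the value of $\sum_\alpha\zeta_3^{f^*(\alpha)}$ itself, I would use the Fourier-inversion / Parseval-type identity: summing $\widetilde{\mathcal{R}}_f(\alpha)$ over all $\alpha\in\F_q$ gives $\sum_{\alpha}\widetilde{\mathcal{R}}_f(\alpha)=3^n\zeta_3^{f(0)}=3^n$ (using $f(0)=0$), while the left side equals $\varepsilon_f\sqrt{3^*}^{\,n+k_f}\sum_{\alpha\in\widetilde{\mathcal{S}\mathcal{R}}_f}\zeta_3^{f^*(\alpha)}$. Hence
$\sum_{\alpha\in\widetilde{\mathcal{S}\mathcal{R}}_f}\zeta_3^{f^*(\alpha)}=\varepsilon_f^{-1}3^n\sqrt{3^*}^{-(n+k_f)}=\varepsilon_f\,\eta_0(-1)^{n+k_f}3^{n}/\sqrt{3^*}^{\,n+k_f}=\varepsilon_f\,\eta_0(-1)^{n+k_f}\sqrt{3^*}^{\,n-k_f}$, after rationalizing using $\sqrt{3^*}^{\,2}=3^*=-3$. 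The analogous formula holds for $g$ with $(n,k_f,\varepsilon_f)$ replaced by $(m,k_g,\varepsilon_g)$.

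Combining, for each $z\in\F_3^*$ the inner double sum equals $\varepsilon_f\varepsilon_g\,\eta_0(-1)^{n+k_f+m+k_g}\sqrt{3^*}^{\,(n-k_f)+(m-k_g)}=\varepsilon_f\varepsilon_g\,\eta_0(-1)^{s+k_f+k_g}\sqrt{3^*}^{\,s-k_f-k_g}$, a quantity independent of $z$; call it $T$. Then
\begin{align*}
  P_{f^*g^*}(0)=3^{s-k_f-k_g-1}+\frac{T}{3}\sum_{z\in\F_3^*}\zeta_3^{z\lambda}.
\end{align*}
Now I split on the parity of $s+k_f+k_g$. If it is even, $\eta_0(-1)^{s+k_f+k_g}=1$ and $\sqrt{3^*}^{\,s-k_f-k_g}=(-3)^{(s-k_f-k_g)/2}$, but note $s-k_f-k_g$ has the same parity as $s+k_f+k_g$, hence is even, so $T=\varepsilon_f\varepsilon_g(-3)^{(s-k_f-k_g)/2}=-3\,\varepsilon_f\varepsilon_g(-3)^{(s-k_f-k_g-2)/2}$; using $\sum_{z\in\F_3^*}\zeta_3^{z\cdot 0}=2$ and $\sum_{z\in\F_3^*}\zeta_3^{z\lambda}=-1$ for $\lambda\in\F_3^*$ yields case (1) — here one also needs to track the sign $(-1)^{s+1}$, which comes from $-3=(-1)^{s+1}\cdot(-1)^{s}\cdot 3$ bookkeeping and matching with the stated form; I would verify the exponent of $-1$ carefully against small cases. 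If $s+k_f+k_g$ is odd, then $s-k_f-k_g$ is odd, $\eta_0(-1)^{s+k_f+k_g}=-1$, and $\sqrt{3^*}^{\,s-k_f-k_g}=\sqrt{-3}\cdot(-3)^{(s-k_f-k_g-1)/2}$; the $\sqrt{-3}$ combines with $\sum_{z\in\F_3^*}\zeta_3^{z\lambda}$, which is $0$ when $\lambda=0$ (giving the clean $3^{s-k_f-k_g-1}$), and is $\zeta_3+\zeta_3^2=-1$ versus $\zeta_3^2+\zeta_3=-1$... more precisely for $\lambda=1$ it is $\zeta_3+\zeta_3^{-1}$ and for $\lambda=2$ it is $\zeta_3^2+\zeta_3^{-2}$; the asymmetry that produces the $\pm$ in case (2) actually enters through $\sum_{z\in\F_3^*}\zeta_3^{z\lambda}\cdot z$-weighting, so I would instead keep the two terms $z=1,2$ separate and use $\sqrt{-3}=\zeta_3-\zeta_3^2$ (from Lemma \ref{characters}(2), since $\eta_0(1)=1,\eta_0(2)=-1$) to see that $\zeta_3^{\lambda}-\zeta_3^{2\lambda}$ equals $\sqrt{-3}$ for $\lambda=1$ and $-\sqrt{-3}$ for $\lambda=2$. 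That gives the $(-1)^{s+1}\varepsilon_f\varepsilon_g(-3)^{(s-k_f-k_g-1)/2}$ terms with the correct signs.

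\textbf{Main obstacle.} The conceptual content — reducing to Walsh transforms and summing — is routine; the real friction is bookkeeping the signs, namely correctly tracking the powers of $\eta_0(-1)=-1$, the factors $\sqrt{3^*}^{\,2}=-3$, and the $(-1)^{s+1}$ normalization in the statement, and making sure the $\lambda=1$ versus $\lambda=2$ asymmetry in part (2) is pinned down via $\sqrt{-3}=\zeta_3-\zeta_3^2$ rather than the symmetric sum $\zeta_3+\zeta_3^2$. I would double-check the final signs on a small example such as $n=m=1$, $k_f=k_g=0$ (ordinary bent functions), where $P_{f^*g^*}(0)$ can be computed directly.
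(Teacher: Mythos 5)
Your overall strategy is exactly the paper's: expand the indicator of $f^*(\alpha)+g^*(\beta)+\lambda=0$ via additive characters of $\F_3$, extract the main term $3^{s-k_f-k_g-1}$ from the $z=0$ term, and evaluate the $z\in\F_3^*$ terms by factoring the double sum and applying the Galois action to the known value of $\sum_{\alpha\in\widetilde{\mathcal{S}\mathcal{R}}_f}\zeta_3^{f^*(\alpha)}$. Your idea of obtaining that value from $\sum_{\alpha\in\F_{3^n}}\widetilde{\mathcal{R}}_f(\alpha)=3^n\zeta_3^{f(0)}=3^n$ is also the right one (the paper simply quotes the resulting constant without deriving it).

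However, the constant you extract is off by a sign when $k_f$ is odd. From $\varepsilon_f\sqrt{3^*}^{\,n+k_f}\sum_{\alpha\in\widetilde{\mathcal{S}\mathcal{R}}_f}\zeta_3^{f^*(\alpha)}=3^n$ one gets
\begin{align*}
\sum_{\alpha\in\widetilde{\mathcal{S}\mathcal{R}}_f}\zeta_3^{f^*(\alpha)}
=\varepsilon_f\,3^n\,\sqrt{3^*}^{\,-(n+k_f)}
=\varepsilon_f\,\frac{3^n\,\sqrt{3^*}^{\,n-k_f}}{(3^*)^{n}}
=\varepsilon_f\,\eta_0(-1)^{n}\,\sqrt{3^*}^{\,n-k_f},
\end{align*}
so the exponent of $\eta_0(-1)=-1$ is $n$, not $n+k_f$: your chain $\varepsilon_f^{-1}3^n\sqrt{3^*}^{-(n+k_f)}=\varepsilon_f\eta_0(-1)^{n+k_f}3^n/\sqrt{3^*}^{\,n+k_f}$ inserts a spurious factor $\eta_0(-1)^{n+k_f}$ and the next equality compensates only partially. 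Consequently your $T$ carries $(-1)^{s+k_f+k_g}$ where the correct prefactor is $(-1)^{s}$ (which, after dividing by $3$ and absorbing one power of $3^*=-3$, is exactly the $(-1)^{s+1}\varepsilon_f\varepsilon_g\sqrt{3^*}^{\,s-k_f-k_g-2}$ appearing in the statement); your final formulas would therefore be wrong by the sign $(-1)^{k_f+k_g}$ whenever $k_f+k_g$ is odd. A second, smaller issue: your claim that the inner double sum is independent of $z$ does not follow from the evenness $f^*(-\alpha)=f^*(\alpha)$ — the substitution $\alpha\mapsto-\alpha$ is a tautology there and says nothing about $\sum_\alpha\zeta_3^{-f^*(\alpha)}$. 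The correct mechanism is $\sigma_z\bigl(\sqrt{3^*}^{\,j}\bigr)=\eta_0^{\,j}(z)\sqrt{3^*}^{\,j}$, so the product of the two sums is fixed by $\sigma_2$ precisely when $s-k_f-k_g$ is even, and changes sign when it is odd — which is the asymmetry you eventually (correctly) invoke to separate $\lambda=1$ from $\lambda=2$ in part (2). With the corrected constant $\varepsilon_f\varepsilon_g(-1)^s\sqrt{3^*}^{\,s-k_f-k_g}$ and the Galois-action bookkeeping in place of the evenness argument, your proof becomes the paper's proof.
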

\begin{proof}
  By the definition of $P_{f^*g^*}(0)$, we have
    \[\begin{split} 
      P_{f^*g^*}(0) &=\frac{1}{3}\sum_{t\in \F_3}\sum_{\alpha\in \widetilde{\mathcal{S}\mathcal{R}}_f}\sum_{\beta\in \widetilde{\mathcal{S}\mathcal{R}}_g}
      \zeta_3^{t(f^*(\alpha)+g^*(\beta)+\lambda)}\\    
           & =3^{s-k_f-k_g-1}+\frac{1}{3}\sum_{t\in \F_3^*}\zeta_3^{\lambda t}\sigma_{t}\left(\sum_{\alpha \in \widetilde{\mathcal{S}\mathcal{R}}_f}\zeta_3^{f^*(\alpha)} 
           \sum_{\beta \in \widetilde{\mathcal{S}\mathcal{R}}_g}\zeta_3^{g^*(\beta)}\right)\\     
    & =3^{s-k_f-k_g-1}+(-1)^{s+1}\varepsilon_f\varepsilon_g{\sqrt{3^*}}^{s-k_f-k_g-2}\sum_{t\in \F_3^*}\eta_0^{s-k_f-k_g}(t)\zeta_3^{\lambda t}.
          \end{split}\]
    Then the desired results follow straightforward from Lemma \ref{characters}. 
 \end{proof}


\begin{lemma}\label{P_{f*+g*+c}}
  Let $f,g\in \mathcal{WRP}$ with $\widetilde{\mathcal{R}}_f(\alpha )=\varepsilon_f{\sqrt{3^*}}^{n+k_f}\zeta_3^{f^*(\alpha)}$ for every $\alpha \in \widetilde{\mathcal{S}\mathcal{R}}_f$
  and $\widetilde{\mathcal{R}}_g(\beta)=\varepsilon_g{\sqrt{3^*}}^{m+k_g}\zeta_3^{g^*(\beta)}$ for every $\beta \in \widetilde{\mathcal{S}\mathcal{R}}_g$,
   where $\varepsilon_f,\varepsilon_g\in \{ -1, 1\}$, $0\leq k_f\leq n$ and $0\leq k_g\leq m$.
  Let $s=n+m$,  $\lambda\in \F_3^*$,   
  \begin{align*}
    P_{f^*g^*}(1)&=\#\{(\alpha,\beta) \in \widetilde{\mathcal{S}\mathcal{R}}_f\times\widetilde{\mathcal{S}\mathcal{R}}_g:f^*(\alpha)+g^*(\beta)+\lambda=1\}~{\rm and}\\    
    P_{f^*g^*}(2)&=\#\{(\alpha,\beta) \in \widetilde{\mathcal{S}\mathcal{R}}_f\times\widetilde{\mathcal{S}\mathcal{R}}_g:f^*(\alpha)+g^*(\beta)+\lambda=2\}.    
\end{align*} 
  Then the following statements hold. 
\begin{enumerate}
  \item [\rm (1)] If $s+k_f+k_g$ is even, then 
  \begin{align*}
    & P_{f^*g^*}(1)=3^{s-k_f-k_g-1}+\frac{1+3\eta_0(\lambda)}{6}(-1)^{s}\varepsilon_f\varepsilon_g{(-3)}^{\frac{s-k_f-k_g}{2}}~{\rm and}\\    
    & P_{f^*g^*}(2)=3^{s-k_f-k_g-1}+\frac{1-3\eta_0(\lambda)}{6}(-1)^{s}\varepsilon_f\varepsilon_g{(-3)}^{\frac{s-k_f-k_g}{2}} .    
\end{align*} 
  \item [\rm (2)] If $s+k_f+k_g$ is odd, then 
  \begin{align*}
    & P_{f^*g^*}(1)=3^{s-k_f-k_g-1}-\frac{1+\eta_0(-\lambda)}{2}(-1)^{s}\varepsilon_f\varepsilon_g{(-3)}^{\frac{s-k_f-k_g-1}{2}}~{\rm and}\\    
    & P_{f^*g^*}(2)=3^{s-k_f-k_g-1}+\frac{1-\eta_0(-\lambda)}{2}(-1)^{s}\varepsilon_f\varepsilon_g{(-3)}^{\frac{s-k_f-k_g-1}{2}} .    
\end{align*} 
\end{enumerate}
\end{lemma}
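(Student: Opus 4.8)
The plan is to mimic the computation in the proof of Lemma~\ref{P_{f*,g*+c=0}}, but now tracking the two residue classes $\lambda=1$ and $\lambda=2$ separately via the quadratic character. First I would write, for $\tau\in\{1,2\}$,
\[
P_{f^*g^*}(\tau)=\frac{1}{3}\sum_{t\in\F_3}\sum_{\alpha\in\widetilde{\mathcal{S}\mathcal{R}}_f}\sum_{\beta\in\widetilde{\mathcal{S}\mathcal{R}}_g}\zeta_3^{t(f^*(\alpha)+g^*(\beta)+\lambda-\tau)},
\]
pull out the $t=0$ term to get the main term $3^{s-k_f-k_g-1}$, and use Proposition~\ref{prop:gcd} together with the Walsh-transform hypotheses to evaluate
\[
\sum_{\alpha\in\widetilde{\mathcal{S}\mathcal{R}}_f}\zeta_3^{f^*(\alpha)}\cdot\sum_{\beta\in\widetilde{\mathcal{S}\mathcal{R}}_g}\zeta_3^{g^*(\beta)}=(-1)^{s}\varepsilon_f\varepsilon_g\,{\sqrt{3^*}}^{\,s-k_f-k_g-2}
\]
(the same inner sums that appear in Lemma~\ref{P_{f*,g*+c=0}}), and then apply $\sigma_t$ to this quantity. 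Since $\sigma_t({\sqrt{3^*}}^{\,r})=\eta_0^{r}(t){\sqrt{3^*}}^{\,r}$, the parity of $s-k_f-k_g$ governs whether the Galois twist contributes a $\pm1$ or an $\eta_0(t)$.

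The remaining step is then a finite character-sum evaluation over $t\in\F_3^*=\{1,2\}$. In the even case $s+k_f+k_g$ even, the twist is trivial and I am left with $\sum_{t\in\F_3^*}\zeta_3^{(\lambda-\tau)t}$, which is $2$ if $\lambda\equiv\tau$ and $-1$ otherwise; rewriting "$\lambda\equiv\tau$" in terms of $\eta_0(\lambda)$ (using that in $\F_3$, $1\in\SQ$ and $2\in\NSQ$, so $\eta_0(\lambda)=1\iff\lambda=1$) and absorbing the factor ${\sqrt{3^*}}^{\,s-k_f-k_g-2}=\frac{1}{-3}{(-3)}^{\frac{s-k_f-k_g}{2}}$ produces the stated coefficients $\frac{1\pm3\eta_0(\lambda)}{6}$. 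In the odd case, the twist contributes $\eta_0(t)$, so I must evaluate $\sum_{t\in\F_3^*}\eta_0(t)\zeta_3^{(\lambda-\tau)t}$; by Lemma~\ref{characters}(2)--(3) this equals $\eta_0(\lambda-\tau)\sqrt{3^*}$ when $\lambda\neq\tau$ and $0$ when $\lambda=\tau$, and combining with the remaining ${\sqrt{3^*}}$ factor and the substitution $\lambda-\tau\in\{-\lambda,\ \text{something}\}$ (note when $\tau=1$, $\lambda-1\in\{0,1\}$; when $\tau=2$, $\lambda-2=\lambda+1\equiv -(-\lambda-1)$, so one expresses everything through $\eta_0(-\lambda)$) yields the coefficients $-\frac{1+\eta_0(-\lambda)}{2}$ and $\frac{1-\eta_0(-\lambda)}{2}$.

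The main obstacle I anticipate is purely bookkeeping: getting the signs and the $\eta_0$-arguments exactly right when translating the condition $\lambda\equiv\tau\pmod 3$ into an expression symmetric in the character $\eta_0(\pm\lambda)$, and correctly splitting powers of $\sqrt{3^*}=\sqrt{-3}$ into powers of $(-3)$ while keeping track of the extra factor of $\sqrt{3^*}$ that survives in the odd case. I would double-check the final formulas by verifying two consistency conditions: that $P_{f^*g^*}(0)+P_{f^*g^*}(1)+P_{f^*g^*}(2)=|\widetilde{\mathcal{S}\mathcal{R}}_f|\cdot|\widetilde{\mathcal{S}\mathcal{R}}_g|=3^{n-k_f}3^{m-k_g}=3^{s-k_f-k_g}$ (using Lemma~\ref{P_{f*,g*+c=0}} for the $\lambda\in\F_3^*$ value of $P_{f^*g^*}(0)$), and that specializing $\lambda$ to a concrete value in $\F_3^*$ reproduces the $\lambda\ne 0$ entries already computed in Lemma~\ref{P_{f*,g*+c=0}}. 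Once these checks pass, the proof is complete.
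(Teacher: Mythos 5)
Your proof is correct, but it takes a genuinely different route from the paper's. You compute each $P_{f^*g^*}(\tau)$ for $\tau\in\{1,2\}$ directly via the orthogonality relation (the average $\frac{1}{3}\sum_{t\in\F_3}\zeta_3^{tc}$ equals $1$ if $c=0$ and $0$ otherwise), which amounts to rerunning Lemma \ref{P_{f*,g*+c=0}} with $\lambda$ replaced by $\lambda-\tau$, and you then fold the case distinction $\lambda=\tau$ versus $\lambda\neq\tau$ into a single expression in $\eta_0(\lambda)$ (resp.\ $\eta_0(-\lambda)$). The paper instead evaluates the one auxiliary sum $P=\sum_{t\in\F_3}\sum_{\alpha,\beta}\zeta_3^{t^2(f^*(\alpha)+g^*(\beta)+\lambda)}$, notes that $P=3P_{f^*g^*}(0)+\sqrt{3^*}P_{f^*g^*}(1)-\sqrt{3^*}P_{f^*g^*}(2)$ since $\sum_{t\in\F_3}\zeta_3^{t^2c}=\eta_0(c)\sqrt{3^*}$ for $c\neq 0$, and solves the resulting linear system together with $P_{f^*g^*}(0)+P_{f^*g^*}(1)+P_{f^*g^*}(2)=3^{s-k_f-k_g}$ and the value of $P_{f^*g^*}(0)$ from Lemma \ref{P_{f*,g*+c=0}}. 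Your route is the more elementary one and makes transparent that this lemma is really Lemma \ref{P_{f*,g*+c=0}} with a shifted constant; the paper's route buys a single Gauss-sum evaluation in place of two separate character-sum computations and reuses Lemma \ref{P_{f*,g*+c=0}} wholesale. One bookkeeping slip to repair: the product of inner sums is $\sum_{\alpha}\zeta_3^{f^*(\alpha)}\sum_{\beta}\zeta_3^{g^*(\beta)}=(-1)^{s}\varepsilon_f\varepsilon_g{\sqrt{3^*}}^{\,s-k_f-k_g}$, not $(-1)^{s}\varepsilon_f\varepsilon_g{\sqrt{3^*}}^{\,s-k_f-k_g-2}$; the exponent $s-k_f-k_g-2$ (with the sign $(-1)^{s+1}$) only appears after absorbing the prefactor $\frac{1}{3}=-{\sqrt{3^*}}^{\,-2}$. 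This is an error by a factor of $-3$ in an intermediate constant, not a flaw in the method: the final coefficients you state are the correct ones, and the consistency checks you propose against Lemma \ref{P_{f*,g*+c=0}} would catch it.
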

\begin{proof} 
    Similar to Lemma \ref{P_{f*,g*+c=0}}, we still denote  
  \begin{align*}
    P_{f^*g^*}(0)=\#\left\{(\alpha,\beta) \in \widetilde{\mathcal{S}\mathcal{R}}_f\times\widetilde{\mathcal{S}\mathcal{R}}_g:f^*(\alpha)+g^*(\beta)+\lambda=0\right\}.
\end{align*}
  Consider the exponential sum
  \[\begin{split}
    P & =\sum_{t\in \F_3}\sum_{\alpha\in \widetilde{\mathcal{S}\mathcal{R}}_f}\sum_{\beta\in \widetilde{\mathcal{S}\mathcal{R}}_g}
    \zeta_3^{t^2(f^*(\alpha)+g^*(\beta)+\lambda)}\\
    & =\sum_{\alpha\in \widetilde{\mathcal{S}\mathcal{R}}_f}\sum_{\beta\in \widetilde{\mathcal{S}\mathcal{R}}_g}
    \left(\sum_{t\in \F_3^*}\zeta_3^{t^2(f^*(\alpha)+g^*(\beta)+\lambda)}+1\right)\\ 
    & = 3^{s-k_f-k_g}+\sum_{t\in \F_3^*}\zeta_3^{\lambda t^2}\sigma_{t^2}\left(\sum_{\alpha \in \widetilde{\mathcal{S}\mathcal{R}}_f}\zeta_3^{f^*(\alpha)} 
    \sum_{\beta \in \widetilde{\mathcal{S}\mathcal{R}}_g}\zeta_3^{g^*(\beta)}\right)\\
    & = 3^{s-k_f-k_g}+(-1)^{s}\varepsilon_f\varepsilon_g{\sqrt{3^*}}^{s-k_f-k_g}\sum_{t\in \F_3^*}\zeta_3^{\lambda t^2}\\
    & = 3^{s-k_f-k_g}+(-1)^{s}\varepsilon_f\varepsilon_g{\sqrt{3^*}}^{s-k_f-k_g}\left(\eta_0(\lambda)\sqrt{3^*}-1\right).   
  \end{split}\]
 It follows from the definitions of $P_{f^*g^*}(1)$ and $P_{f^*g^*}(2)$ that 
 $$P=3P_{f^*g^*}(0)+\sqrt{3^*}P_{f^*g^*}(1)-\sqrt{3^*}P_{f^*g^*}(2).$$
  Combining the fact that $P_{f^*g^*}(0)+P_{f^*g^*}(1)+P_{f^*g^*}(2)=3^{s-k_f-k_g}$, 
  the desired results follow immediately straightforward from Lemma \ref{P_{f*,g*+c=0}}. 
\end{proof}

\begin{lemma}\label{Pg*}
  Let $g\in \mathcal{WRP}$ with $\widetilde{\mathcal{R}}_g(\beta)=\varepsilon_g{\sqrt{3^*}}^{m+k_g}\zeta_3^{g^*(\beta)}$ for every $\beta \in \widetilde{\mathcal{S}\mathcal{R}}_g$,
  where $\varepsilon_g\in \{ -1, 1\}$ and $0\leq k_g\leq m$.
  Let $\lambda\in \F_3^*$, $\mu \in \F_{3}$ and  
  \begin{align*}
      P_{g^*}(0)=\#\{\alpha \in \F_{3}^*, \beta \in \widetilde{\mathcal{S}\mathcal{R}}_g:g^*(\beta)+\lambda-\frac{\mu }{\alpha}=0\}.
  \end{align*} 
Then the following statements hold. 
\begin{enumerate}
  \item [\rm (1)] If $m+k_g$ is even, then 
  $$ P_{g^*}(0)=\left\{
        \begin{aligned}
        &2\cdot{3}^{m-k_g-1}-2\cdot(-1)^{m+1}\varepsilon_g{(-3)}^{\frac{m-k_g-2}{2}}   && \mu =0,\\ 
        &2\cdot{3}^{m-k_g-1}+(-1)^{m+1}\varepsilon_g{(-3)}^{\frac{m-k_g-2}{2}}         && \mu\neq0.\\ 
        \end{aligned}
        \right.
        $$
  \item [\rm (2)] If $m+k_g$ is odd, then 
  $$ P_{g^*}(0)=\left\{
    \begin{aligned}
      &2\cdot{3}^{m-k_g-1}+2\cdot(-1)^{m+1}\varepsilon_g{(-3)}^{\frac{m-k_g-1}{2}}\eta_0(\lambda) && \mu=0,\\  
      &2\cdot{3}^{m-k_g-1}+(-1)^{m+1}\varepsilon_g{(-3)}^{\frac{m-k_g-1}{2}}\eta_0(2\lambda)    && \mu\neq 0.\\
  \end{aligned} 
\right.
$$ 
\end{enumerate}
\end{lemma}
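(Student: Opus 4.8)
The plan is to evaluate $P_{g^*}(0)$ by the usual additive-character trick, reducing it to the kind of exponential sums already handled in Lemmas \ref{P_{f*,g*+c=0}} and \ref{P_{f*+g*+c}}. Writing the indicator of the condition $g^*(\beta)+\lambda-\mu/\alpha=0$ as $\frac{1}{3}\sum_{t\in\F_3}\zeta_3^{t(g^*(\beta)+\lambda-\mu/\alpha)}$ and interchanging the order of summation, one obtains
\[
  P_{g^*}(0)=\frac{1}{3}\sum_{t\in\F_3}\zeta_3^{t\lambda}\Big(\sum_{\beta\in\widetilde{\mathcal{S}\mathcal{R}}_g}\zeta_3^{tg^*(\beta)}\Big)\Big(\sum_{\alpha\in\F_3^*}\zeta_3^{-t\mu/\alpha}\Big).
\]
The term $t=0$ contributes $\frac{1}{3}\cdot3^{m-k_g}\cdot2=2\cdot3^{m-k_g-1}$, which is the main term in every case.

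For $t\in\F_3^*$ the two inner sums are handled separately. For the $\beta$-sum I would use the Gauss-sum identity $\sum_{\beta\in\widetilde{\mathcal{S}\mathcal{R}}_g}\zeta_3^{g^*(\beta)}=(-1)^m\varepsilon_g{\sqrt{3^*}}^{m-k_g}$ — the single-function form of the identity underlying the proof of Lemma \ref{P_{f*,g*+c=0}}, obtained by applying Fourier inversion to $\widetilde{\mathcal{R}}_g$ and specializing at the origin together with $g(0)=0$ — combined with $\sigma_t({\sqrt{3^*}}^{m-k_g})=\eta_0^{m-k_g}(t){\sqrt{3^*}}^{m-k_g}$ from the Galois action on $\mathbb{Q}(\zeta_3)$; this gives $\sum_{\beta\in\widetilde{\mathcal{S}\mathcal{R}}_g}\zeta_3^{tg^*(\beta)}=(-1)^m\varepsilon_g\eta_0^{m-k_g}(t){\sqrt{3^*}}^{m-k_g}$. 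For the $\alpha$-sum, since inversion fixes $\F_3^*$ pointwise we have $\sum_{\alpha\in\F_3^*}\zeta_3^{-t\mu/\alpha}=\sum_{\alpha\in\F_3^*}\zeta_3^{-t\mu\alpha}$, which equals $2$ when $\mu=0$ and $-1$ when $\mu\neq0$ (the latter by character orthogonality, since $-t\mu\neq0$).

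Putting this back, with $c_\mu=2$ if $\mu=0$ and $c_\mu=-1$ if $\mu\neq0$,
\[
  P_{g^*}(0)=2\cdot3^{m-k_g-1}+\frac{c_\mu(-1)^m\varepsilon_g{\sqrt{3^*}}^{m-k_g}}{3}\sum_{t\in\F_3^*}\eta_0^{m-k_g}(t)\zeta_3^{t\lambda},
\]
and it remains only to split on the parity of $m+k_g$ (equivalently of $m-k_g$). If $m+k_g$ is even, then $\eta_0^{m-k_g}\equiv1$ and $\sum_{t\in\F_3^*}\zeta_3^{t\lambda}=-1$; if $m+k_g$ is odd, then $\eta_0^{m-k_g}=\eta_0$ and $\sum_{t\in\F_3^*}\eta_0(t)\zeta_3^{t\lambda}=\eta_0(\lambda)\sqrt{3^*}$ by Lemma \ref{characters}(2). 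I would then convert the powers of $\sqrt{3^*}$ into powers of $-3$ via ${\sqrt{3^*}}^{a}/3=-{\sqrt{3^*}}^{a-2}$ (because $3=-{\sqrt{3^*}}^{2}$), and, in the odd case with $\mu\neq0$, rewrite $\eta_0(\lambda)=-\eta_0(2\lambda)$ (because $\eta_0(2)=-1$); the four resulting expressions are then exactly the ones claimed.

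The only genuine delicacy is the sign bookkeeping: the $(-1)^m$ from the Gauss sum, the extra sign generated when dividing ${\sqrt{3^*}}^{a}$ by $3$, and the $\eta_0(2)=-1$ that converts $\eta_0(\lambda)$ to $\eta_0(2\lambda)$ all have to be tracked so that the exponents $\frac{m-k_g-2}{2}$, $\frac{m-k_g-1}{2}$ and the quadratic-character arguments line up with the stated formulas (notably the $(-1)^{m+1}$ and the $\eta_0(2\lambda)$ in part (2)). Apart from that, everything is a routine character-sum computation.
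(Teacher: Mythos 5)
Your proposal is correct and follows essentially the same route as the paper's proof: both insert the additive-character indicator, isolate the $t=0$ main term $2\cdot 3^{m-k_g-1}$, evaluate $\sum_{\beta\in\widetilde{\mathcal{S}\mathcal{R}}_g}\zeta_3^{g^*(\beta)}=(-1)^m\varepsilon_g\sqrt{3^*}^{\,m-k_g}$ together with the Galois action $\sigma_t$, and finish with Lemma \ref{characters}; your factorization of the $\alpha$-sum via orthogonality is just a cleaner packaging of the paper's explicit enumeration of $\alpha\in\{1,2\}$. The sign bookkeeping you flag (the $(-1)^m$, the $3=-\sqrt{3^*}^{\,2}$ conversion, and $\eta_0(2)=-1$) checks out against all four stated formulas.
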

\begin{proof}
 By the definition of $P_{g^*}(0)$, we have 
    \[\begin{split}  
      P_{g^*}(0) &=\frac{1}{3}\sum_{t\in \F_3}\sum_{\alpha\in \F_{3}^*}\sum_{\beta\in \widetilde{\mathcal{S}\mathcal{R}}_g}
      \zeta_3^{t\left(g^*(\frac{\beta}{\alpha})+\lambda-\frac{\mu }{\alpha }\right)}\\    
           & =2\cdot 3^{m-k_g-1}+\frac{1}{3}\sum_{t\in \F_3^*}\sigma_{t}\left(\sum_{\alpha \in \F_{3}^*}\sum_{\beta\in \widetilde{\mathcal{S}\mathcal{R}}_g}\zeta_3^{g^*(\frac{\beta}{\alpha})+\lambda-\frac{\mu }{\alpha }} \right)\\     
           & =2\cdot 3^{m-k_g-1}+\frac{1}{3}\sum_{t\in \F_3^*}\sigma_{t}\left(\sum_{\beta\in \widetilde{\mathcal{S}\mathcal{R}}_g}{\zeta_3^{g^*(\beta)+\lambda-\mu}}+\sum_{\beta\in \widetilde{\mathcal{S}\mathcal{R}}_g}{\zeta_3^{g^*(\beta)+\lambda+\mu}}\right),\\
          \end{split}\]
  where the last equation holds since $\alpha \in \F_3^*$. 
 
  For $\mu = 0 $, we further have 
  \[\begin{split}
   P_{g^*}(0)&=2\cdot 3^{m-k_g-1}+\frac{2}{3}\sum_{t\in \F_3^*}\sigma_{t}\left(\sum_{\beta\in \widetilde{\mathcal{S}\mathcal{R}}_g}{\zeta_3^{g^*(\beta)+\lambda}}\right)\\
   &=2\cdot 3^{m-k_g-1}+2\cdot(-1)^{m+1}\varepsilon_g{(-3)}^{\frac{m-k_g-2}{2}}\sum_{t\in \F_3^*}\eta_0^{m-k_g}(t)\zeta_3^{\lambda t}\\ 
  &=
  \left\{
    \begin{aligned}
    &2\cdot3^{m-k_g-1}-2\cdot(-1)^{m+1}\varepsilon_g{(-3)}^{\frac{m-k_g-2}{2}}   && m+k_g {\rm \ is\ even},\\    
    &2\cdot{3}^{m-k_g-1}+2\cdot(-1)^{m+1}\varepsilon_g{(-3)}^{\frac{m-k_g-1}{2}}\eta_0(\lambda) &&  m+k_g {\rm \ is\ odd}.\\
    \end{aligned}
    \right.
  \end{split}\]
  For $\mu \neq 0$, it can also be checked that 
  \[\begin{split}
    P_{g^*}(0)&=2\cdot3^{m-k_g-1}+\frac{1}{3}\sum_{t\in \F_3^*}\sigma_{t}\left(\sum_{\beta\in \widetilde{\mathcal{S}\mathcal{R}}_g}{\zeta_3^{g^*(\beta)}+\sum_{\beta\in \widetilde{\mathcal{S}\mathcal{R}}_g}{\zeta_3^{g^*(\beta)+2\lambda}}}\right)\\
    &=2\cdot 3^{m-k_g-1}+(-1)^{m+1}\varepsilon_g{(-3)}^{\frac{m-k_g-2}{2}}\sum_{t\in \F_3^*}\eta_0^{m-k_g}(t)\left(1+\zeta_3^{2\lambda t}\right)\\ 
    &=
    \left\{
      \begin{aligned}
    &2\cdot3^{m-k_g-1}+(-1)^{m+1}\varepsilon_g{(-3)}^{\frac{m-k_g-2}{2}} && m+k_g \ {\rm  is\ even},\\
    &2\cdot 3^{m-k_g-1}+(-1)^{m+1}\varepsilon_g{(-3)}^{\frac{m-k_g-1}{2}}\eta_0(2\lambda) &&m+k_g {\rm \ is\ odd}.\\
      \end{aligned}
    \right.   
  \end{split}\]
  We have finished the proof. 
  \end{proof}

\begin{lemma}\label{S_1}
  Let $f,g\in \mathcal{WRP}$ with $\widetilde{\mathcal{R}}_f(\alpha )=\varepsilon_f{\sqrt{3^*}}^{n+k_f}\zeta_3^{f^*(\alpha)}$ for every $\alpha \in \widetilde{\mathcal{S}\mathcal{R}}_f$
  and $\widetilde{\mathcal{R}}_g(\beta)=\varepsilon_g{\sqrt{3^*}}^{m+k_g}\zeta_3^{g^*(\beta)}$ for every $\beta \in \widetilde{\mathcal{S}\mathcal{R}}_g$,
   where $\varepsilon_f,\varepsilon_g\in \{ -1, 1\}$, $0\leq k_f\leq n$ and $0\leq k_g\leq m$.
  Let $s=n+m$, $\lambda \in \F_3^*$ and 
  \[\begin{split}
    S_1= \sum_{t\in \F_3^*}\sum_{(x,y)\in \F_{3^n}\times \F_{3^m}}\zeta_3^{t(f(x)+g(y)+\lambda)}.
  \end{split}\]
  Then we have 
      $$ S_1=\left\{
        \begin{aligned}   
        &-\varepsilon_f\varepsilon_g{(-3)}^{\frac{s+k_f+k_g}{2}}    && s+k_f+k_g  {\rm \ is \ even},\\   
        &\varepsilon_f\varepsilon_g{(-3)}^{\frac{s+k_f+k_g+1}{2}} \eta_0(\lambda)   && s+k_f+k_g {\rm \ is \ odd}.\\
      \end{aligned}
        \right.
      $$
            \end{lemma}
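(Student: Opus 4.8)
The plan is to separate the two variables in $S_1$ and reduce everything to one-variable Walsh sums evaluated at $0$. Since $f$ depends only on $x$ and $g$ only on $y$, for each fixed $t\in\F_3^*$ we have
\[
\sum_{(x,y)\in\F_{3^n}\times\F_{3^m}}\zeta_3^{t(f(x)+g(y)+\lambda)}=\zeta_3^{t\lambda}\left(\sum_{x\in\F_{3^n}}\zeta_3^{tf(x)}\right)\left(\sum_{y\in\F_{3^m}}\zeta_3^{tg(y)}\right).
\]
By Definition~\ref{def.WTrans}, $\widetilde{\mathcal{R}}_f(0)=\sum_{x}\zeta_3^{f(x)}$ and $\widetilde{\mathcal{R}}_g(0)=\sum_{y}\zeta_3^{g(y)}$, while $\sum_{x}\zeta_3^{tf(x)}=\sigma_t\bigl(\widetilde{\mathcal{R}}_f(0)\bigr)$ and $\sum_{y}\zeta_3^{tg(y)}=\sigma_t\bigl(\widetilde{\mathcal{R}}_g(0)\bigr)$, where $\sigma_t$ is the automorphism of $\mathbb{Q}(\zeta_3)$ with $\sigma_t(\zeta_3)=\zeta_3^{t}$.

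Next I would pin down $\widetilde{\mathcal{R}}_f(0)$ and $\widetilde{\mathcal{R}}_g(0)$. Because $f,g\in\mathcal{WRP}$ are unbalanced we have $0\in\widetilde{\mathcal{S}\mathcal{R}}_f$ and $0\in\widetilde{\mathcal{S}\mathcal{R}}_g$, and recalling $f^*(0)=g^*(0)=0$ the hypothesis gives $\widetilde{\mathcal{R}}_f(0)=\varepsilon_f\sqrt{3^*}^{\,n+k_f}$ and $\widetilde{\mathcal{R}}_g(0)=\varepsilon_g\sqrt{3^*}^{\,m+k_g}$. Applying $\sigma_t$ and using $\sigma_t(\sqrt{3^*})=\eta_0(t)\sqrt{3^*}$ yields $\sum_{x}\zeta_3^{tf(x)}=\varepsilon_f\eta_0^{\,n+k_f}(t)\sqrt{3^*}^{\,n+k_f}$ and analogously for $g$. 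Multiplying the two factors together with $\zeta_3^{t\lambda}$, the $t$-th summand of $S_1$ becomes $\varepsilon_f\varepsilon_g\,\eta_0^{\,s+k_f+k_g}(t)\,\sqrt{3^*}^{\,s+k_f+k_g}\,\zeta_3^{t\lambda}$, so
\[
S_1=\varepsilon_f\varepsilon_g\,\sqrt{3^*}^{\,s+k_f+k_g}\sum_{t\in\F_3^*}\eta_0^{\,s+k_f+k_g}(t)\,\zeta_3^{t\lambda}.
\]

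It then remains to evaluate $T:=\sum_{t\in\F_3^*}\eta_0^{\,s+k_f+k_g}(t)\,\zeta_3^{t\lambda}$ in the two parity cases and to convert the power of $\sqrt{3^*}$ using $\sqrt{3^*}^{\,2}=-3$. If $s+k_f+k_g$ is even, then $\eta_0^{\,s+k_f+k_g}$ is trivial and $T=\sum_{t\in\F_3^*}\zeta_3^{t\lambda}=-1$; since $\sqrt{3^*}^{\,s+k_f+k_g}=(-3)^{(s+k_f+k_g)/2}$ this gives $S_1=-\varepsilon_f\varepsilon_g(-3)^{(s+k_f+k_g)/2}$. If $s+k_f+k_g$ is odd, then $\eta_0^{\,s+k_f+k_g}=\eta_0$, and the substitution $u=t\lambda$ (using $\eta_0(\lambda^{-1})=\eta_0(\lambda)$) together with Lemma~\ref{characters}(2) gives $T=\eta_0(\lambda)\sum_{u\in\F_3^*}\eta_0(u)\zeta_3^{u}=\eta_0(\lambda)\sqrt{3^*}$; since $s+k_f+k_g+1$ is even, $\sqrt{3^*}^{\,s+k_f+k_g+1}=(-3)^{(s+k_f+k_g+1)/2}$, whence $S_1=\varepsilon_f\varepsilon_g(-3)^{(s+k_f+k_g+1)/2}\eta_0(\lambda)$, as claimed.

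The character-sum evaluations and the rewriting of $\sqrt{3^*}$-powers as $(-3)$-powers are routine, being immediate from Lemma~\ref{characters}. The only point that genuinely needs care is the normalization $\widetilde{\mathcal{R}}_f(0)=\varepsilon_f\sqrt{3^*}^{\,n+k_f}$, i.e.\ $f^*(0)=0$ (and likewise for $g$): this is precisely what keeps the exponent of $\zeta_3$ equal to $t\lambda$, and hence yields the clean factor $\eta_0(\lambda)$ in the odd case instead of an $\eta_0$ evaluated at a shifted argument. It is a standard feature of the homogeneous class $\mathcal{WRP}$ with $f(0)=0$ (see \cite{WRP31[2],Exponential sums[3]}).
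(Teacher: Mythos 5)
Your proof is correct and follows essentially the same route as the paper's: separate the $x$- and $y$-sums, recognize them as $\sigma_t$ applied to $\widetilde{\mathcal{R}}_f(0)\widetilde{\mathcal{R}}_g(0)$, pull out $\eta_0^{\,s+k_f+k_g}(t)$ via the Galois action on $\sqrt{3^*}$, and finish with the character sums of Lemma \ref{characters}. You simply make explicit two points the paper leaves implicit — that unbalancedness forces $0\in\widetilde{\mathcal{S}\mathcal{R}}_f$ with $f^*(0)=0$, and that $\sum_{t\in\F_3^*}\zeta_3^{t\lambda}=-1$ (note the paper's Lemma \ref{characters}(3) misprints this as $1$) — so no further changes are needed.
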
              
\begin{proof}
  It is clear that 
  \[\begin{split}
    S_1=& \sum_{t\in \F_3^*}\sum_{(x,y)\in \F_{3^n}\times \F_{3^m}}\zeta_3^{t(f(x)+g(y)+\lambda)}\\
     =&\sum_{t\in \F_3^*}\zeta_3^{\lambda t}\sigma_t\left(\sum_{x\in \F_{3^n}}\zeta_3^{f(x)}\sum_{y\in \F_{3^m}}\zeta_3^{g(x)}\right)\\ 
     =&\sum_{t\in \F_3^*}\zeta_3^{\lambda t}\varepsilon_f\varepsilon_g{\sqrt{3^*}}^{s+k_f+k_g}\eta_0^{s+k_f+k_g}(t).\\
    \end{split}\]  
The desired result then follows from Lemma \ref{characters}. 
\end{proof}

\begin{lemma}\label{S_2}
  
  Let $s=n+m$ and $\mu \in \F_3$, assume that 
  \[\begin{split}
    S_2= \sum_{u\in \F_3^*}\sum_{(x,y)\in \F_{3^n}\times \F_{3^m}}\zeta_3^{u(Tr_{3}^{s}(\alpha x+\beta y)+\mu)}.
  \end{split}\]
  For any  $(\alpha,\beta) \in \F_{3^n} \times \F_{3^m}$,  we have  
    $$ S_2=\left\{
          \begin{aligned}
          &0  && (\alpha, \beta)\neq (0, 0), \\    
          &2 \cdot 3^{s}   && (\alpha, \beta)=(0, 0)~{\rm and }~ \mu=0, \\
          &-3^{s}   && (\alpha, \beta)=(0, 0)~{\rm and }~ \mu\neq 0.\\ 
          \end{aligned}
          \right.
          $$
  \end{lemma}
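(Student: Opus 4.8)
The plan is to evaluate $S_2$ by swapping the order of summation and recognizing that the inner sum over $(x,y)$ factors through the additive characters of $\F_{3^n}$ and $\F_{3^m}$. Writing $S_2=\sum_{u\in\F_3^*}\zeta_3^{u\mu}\sum_{(x,y)}\zeta_3^{u\,Tr_3^s(\alpha x+\beta y)}$ and noting that $Tr_3^s(\alpha x+\beta y)=Tr_3^n(\alpha x)+Tr_3^m(\beta y)$ (since $s=n+m$ and the trace is taken to $\F_3$), the double sum over $(x,y)$ splits as $\bigl(\sum_{x\in\F_{3^n}}\zeta_3^{u\,Tr_3^n(\alpha x)}\bigr)\bigl(\sum_{y\in\F_{3^m}}\zeta_3^{u\,Tr_3^m(\beta y)}\bigr)$.

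The key step is the standard orthogonality relation for additive characters: for a fixed $u\in\F_3^*$, the map $x\mapsto Tr_3^n(u\alpha x)$ is the zero map precisely when $u\alpha=0$, i.e. when $\alpha=0$ (as $u\neq 0$), in which case $\sum_{x\in\F_{3^n}}\zeta_3^{u\,Tr_3^n(\alpha x)}=3^n$; otherwise the sum is $0$. The same holds for the $y$-sum with $\beta$ in place of $\alpha$ and $m$ in place of $n$. Consequently the product of the two inner sums is $0$ unless $(\alpha,\beta)=(0,0)$, which immediately gives $S_2=0$ in that case. When $(\alpha,\beta)=(0,0)$, the product equals $3^n\cdot 3^m=3^s$ for each $u\in\F_3^*$, so $S_2=3^s\sum_{u\in\F_3^*}\zeta_3^{u\mu}$.

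It then remains to evaluate $\sum_{u\in\F_3^*}\zeta_3^{u\mu}$ for $\mu\in\F_3$. If $\mu=0$ this is $|\F_3^*|=2$, giving $S_2=2\cdot 3^s$. If $\mu\neq 0$, then $u\mapsto u\mu$ permutes $\F_3^*$, so $\sum_{u\in\F_3^*}\zeta_3^{u\mu}=\sum_{v\in\F_3^*}\zeta_3^{v}=\zeta_3+\zeta_3^2=-1$ (using $1+\zeta_3+\zeta_3^2=0$, which is also item (3) of Lemma \ref{characters} with $\tau=1$), giving $S_2=-3^s$. Collecting the three cases yields exactly the stated formula.

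There is essentially no obstacle here: this is a routine character-sum computation, and the only points requiring a word of care are (i) making explicit that the trace $Tr_3^s$ to the prime field decomposes additively as $Tr_3^n+Tr_3^m$ on $\F_{3^n}\times\F_{3^m}$, so that the double sum genuinely factors, and (ii) handling the substitution $u\mapsto u\mu$ correctly in the $\mu\neq 0$ case. Both are standard and can be dispatched in a couple of lines by invoking Lemma \ref{characters}(3).
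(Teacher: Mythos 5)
Your proof is correct and follows essentially the same route as the paper's: factor the double sum into a product of two additive character sums via $Tr_3^s(\alpha x+\beta y)=Tr_3^n(\alpha x)+Tr_3^m(\beta y)$, apply orthogonality to see the product vanishes unless $(\alpha,\beta)=(0,0)$, and then evaluate $\sum_{u\in\F_3^*}\zeta_3^{u\mu}$ as $2$ or $-1$ according to whether $\mu=0$. The only cosmetic difference is that the paper phrases the extraction of $u$ through the Galois automorphism $\sigma_u$, whereas you keep $u$ inside the exponent directly; the computation is identical.
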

  \begin{proof}
    It can be calculated that  
    \[\begin{split}
      S_2=& \sum_{u\in \F_3^*}\sum_{(x,y)\in \F_{3^n}\times \F_{3^m}}\zeta_3^{u(Tr_{3}^{s}(\alpha x+\beta y)+\mu)} \\ 
         =&\sum_{u\in \F_3^*}\zeta_3^{\mu u}\sigma_u\left(\sum_{x\in \F_{3^n}}\zeta_3^{Tr_{3}^{n}(\alpha x)}\sum_{y\in \F_{3^m}}\zeta_3^{Tr_{3}^{m}(\beta y)}\right)\\ 
       =&\left\{
        \begin{aligned}
        &0  && (\alpha, \beta)\neq (0, 0), \\    
        &2 \cdot 3^{s}   && (\alpha, \beta)=(0, 0)~{\rm and }~ \mu=0, \\
        &-3^{s}   && (\alpha, \beta)=(0, 0)~{\rm and }~ \mu\neq 0.\\ 
        \end{aligned}
        \right.
      \end{split}\] 
      This completes the proof. 
  \end{proof}

  \begin{lemma}\label{S_3}
   Let notations be the  same as Lemma \ref{S_1}. 
    Let $l_f$ and $l_g$ be two even positive integers satisfying 
  $f^*(a\alpha)=a^{l_f}f^*(\alpha)$ and  $g^*(b\beta)=b^{l_g}g^*(\beta)$,
  where $a,b\in \F_3^*$, $\alpha \in \widetilde{\mathcal{S}\mathcal{R}}_f$ and $\beta \in \widetilde{\mathcal{S}\mathcal{R}}_g$.
 Let $s=n+m$, $\lambda\in \F_3^*$, $\mu \in \F_3$ and 
  \[\begin{split}
    S_3= \sum_{t\in \F_3^*}\sum_{u \in \F_3^*}\sum_{(x,y)\in \F_{3^n}\times \F_{3^m}}\zeta_p^{t(f(x)+g(y)+\lambda)+u(Tr_{3}^{s}(\alpha x+\beta y)+\mu)}.
  \end{split}\]
 If $(\alpha,\beta) \notin \widetilde{\mathcal{S}\mathcal{R}}_f\times\widetilde{\mathcal{S}\mathcal{R}}_g$, then $S_3=0$.  
 If $(\alpha,\beta) \in \widetilde{\mathcal{S}\mathcal{R}}_f\times\widetilde{\mathcal{S}\mathcal{R}}_g$, then the following statements hold. 
    \begin{enumerate}
    \item [\rm (1)] If $s+k_f+k_g$ is even, then 
      $$ S_3=\left\{
              \begin{aligned}
              &4\varepsilon_f\varepsilon_g{(-3)}^{\frac{s+k_f+k_g}{2}}   &&f^*(\alpha)+g^*(\beta)+\lambda=0~{\rm and }~\mu=0, \\    
              &-2\varepsilon_f\varepsilon_g{(-3)}^{\frac{s+k_f+k_g}{2}}   &&\makecell[l]{f^*(\alpha)+g^*(\beta)+\lambda=0~{\rm and }~\mu \neq0, \\{\rm \ or \ }f^*(\alpha)+g^*(\beta)+\lambda\neq0~{\rm and }~\mu=0,}\\
              &\varepsilon_f\varepsilon_g{(-3)}^{\frac{s+k_f+k_g}{2}}   &&f^*(\alpha)+g^*(\beta)+\lambda\neq0~{\rm and }~\mu \neq0.
              \end{aligned}
              \right.
              $$
    
    \item [\rm (2)] If $s+k_f+k_g$ is odd, then 
            $$ S_3=\left\{
                    \begin{aligned}
                    &0   &&f^*(\alpha)+g^*(\beta)+\lambda=0, \\    
                    &2\varepsilon_f\varepsilon_g{(-3)}^{\frac{s+k_f+k_g+1}{2}}    &&f^*(\alpha)+g^*(\beta)+\lambda=1~{\rm and }~\mu =0,\\
                    &-2\varepsilon_f\varepsilon_g{(-3)}^{\frac{s+k_f+k_g+1}{2}}    &&f^*(\alpha)+g^*(\beta)+\lambda=2~{\rm and }~\mu =0,\\
                    &-\varepsilon_f\varepsilon_g{(-3)}^{\frac{s+k_f+k_g+1}{2}}    &&f^*(\alpha)+g^*(\beta)+\lambda=1~{\rm and }~\mu \neq0,\\
                    &\varepsilon_f\varepsilon_g{(-3)}^{\frac{s+k_f+k_g+1}{2}}    &&f^*(\alpha)+g^*(\beta)+\lambda=2~{\rm and }~\mu \neq0.   
                  \end{aligned}
                    \right.
                    $$            
    \end{enumerate}
    \end{lemma}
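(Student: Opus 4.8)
The plan is to open the double exponential sum, separate the variables $x$ and $y$ for each fixed pair $(t,u)\in\F_3^*\times\F_3^*$, identify the inner sums as Galois conjugates of Walsh transform values, and then feed in the weakly regular plateaued structure; the parity of $s+k_f+k_g$ and the values of $\mu$ and $f^*(\alpha)+g^*(\beta)+\lambda$ will enter only at the very end through two one-dimensional character sums.

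First I would expand $Tr_3^{s}(\alpha x+\beta y)=Tr_3^{n}(\alpha x)+Tr_3^{m}(\beta y)$ (as in the proof of Lemma \ref{S_2}) and rearrange to
\[
S_3=\sum_{t\in\F_3^*}\sum_{u\in\F_3^*}\zeta_3^{t\lambda+u\mu}\Bigl(\sum_{x\in\F_{3^n}}\zeta_3^{tf(x)+uTr_3^{n}(\alpha x)}\Bigr)\Bigl(\sum_{y\in\F_{3^m}}\zeta_3^{tg(y)+uTr_3^{m}(\beta y)}\Bigr).
\]
For fixed $t,u\in\F_3^*$, the identity $tf(x)+uTr_3^{n}(\alpha x)=t\bigl(f(x)-Tr_3^{n}(-\tfrac{u}{t}\alpha x)\bigr)$ together with the automorphism $\sigma_t$ gives $\sum_{x}\zeta_3^{tf(x)+uTr_3^{n}(\alpha x)}=\sigma_t\bigl(\widetilde{\mathcal{R}}_f(-\tfrac{u}{t}\alpha)\bigr)$, and likewise the $y$-sum equals $\sigma_t\bigl(\widetilde{\mathcal{R}}_g(-\tfrac{u}{t}\beta)\bigr)$. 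Since $f(ax)=a^{h_f}f(x)$ with $h_f$ even, the substitution $x\mapsto a^{-1}x$ shows that $\widetilde{\mathcal{R}}_f$ — hence its support $\widetilde{\mathcal{S}\mathcal{R}}_f$ — is invariant under dilating the argument by $\F_3^*$; the same holds for $g$. As $-u/t\in\F_3^*$, if $(\alpha,\beta)\notin\widetilde{\mathcal{S}\mathcal{R}}_f\times\widetilde{\mathcal{S}\mathcal{R}}_g$ then one of the two inner factors vanishes for every $(t,u)$, so $S_3=0$, which is the first assertion.

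Now assume $(\alpha,\beta)\in\widetilde{\mathcal{S}\mathcal{R}}_f\times\widetilde{\mathcal{S}\mathcal{R}}_g$. The \emph{key point} is that $\widetilde{\mathcal{R}}_f(-\tfrac{u}{t}\alpha)=\varepsilon_f{\sqrt{3^*}}^{\,n+k_f}\zeta_3^{f^*(-\frac{u}{t}\alpha)}$ by Definition \ref{exam model}, and $f^*(-\tfrac{u}{t}\alpha)=(-\tfrac{u}{t})^{l_f}f^*(\alpha)=f^*(\alpha)$, the last equality because $l_f$ is even and every element of $\F_3^*$ squares to $1$; so the exponent does not depend on $(t,u)$. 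Applying $\sigma_t$ and using $\sigma_t({\sqrt{3^*}}^{\,\ell})=\eta_0^{\ell}(t){\sqrt{3^*}}^{\,\ell}$ turns the $x$-factor into $\varepsilon_f\eta_0^{n+k_f}(t){\sqrt{3^*}}^{\,n+k_f}\zeta_3^{tf^*(\alpha)}$ and the $y$-factor into $\varepsilon_g\eta_0^{m+k_g}(t){\sqrt{3^*}}^{\,m+k_g}\zeta_3^{tg^*(\beta)}$. Multiplying these, absorbing $\zeta_3^{t\lambda+u\mu}$, and writing $w=f^*(\alpha)+g^*(\beta)+\lambda$, the sum decouples completely:
\[
S_3=\varepsilon_f\varepsilon_g{\sqrt{3^*}}^{\,s+k_f+k_g}\Bigl(\sum_{t\in\F_3^*}\eta_0^{s+k_f+k_g}(t)\,\zeta_3^{tw}\Bigr)\Bigl(\sum_{u\in\F_3^*}\zeta_3^{u\mu}\Bigr).
\]

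It remains to evaluate the two elementary sums. One has $\sum_{u\in\F_3^*}\zeta_3^{u\mu}=2$ if $\mu=0$ and $-1$ if $\mu\neq0$. If $s+k_f+k_g$ is even, then $\eta_0^{s+k_f+k_g}(t)=1$, so the $t$-sum equals $2$ for $w=0$ and $-1$ for $w\neq0$, and with ${\sqrt{3^*}}^{\,s+k_f+k_g}=(-3)^{(s+k_f+k_g)/2}$ one reads off the four values in part (1). If $s+k_f+k_g$ is odd, then $\eta_0^{s+k_f+k_g}(t)=\eta_0(t)$, so by Lemma \ref{characters} the $t$-sum is $0$ for $w=0$ and $\eta_0(w)\sqrt{3^*}$ for $w\neq0$; using ${\sqrt{3^*}}^{\,s+k_f+k_g}=(-3)^{(s+k_f+k_g-1)/2}\sqrt{3^*}$ and $(\sqrt{3^*})^2=-3$, this collapses to $S_3=0$ for $w=0$ and $S_3=\varepsilon_f\varepsilon_g(-3)^{(s+k_f+k_g+1)/2}\eta_0(w)\bigl(\sum_{u\in\F_3^*}\zeta_3^{u\mu}\bigr)$ for $w\neq0$, which yields the five values in part (2) after inserting $\eta_0(1)=1$ and $\eta_0(2)=-1$. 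The one genuinely delicate step is the invariance $f^*(-\tfrac{u}{t}\alpha)=f^*(\alpha)$ (and its $g$-analogue): this is exactly where the evenness of $l_f,l_g$ and working over $\F_3$ are essential — over a larger prime field those dilations would twist the dual values and the product would not decouple; everything else is routine character-sum bookkeeping.
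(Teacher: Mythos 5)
Your proposal is correct and follows essentially the same route as the paper's proof: expand and separate variables, recognize the inner sums as $\sigma_t$ of Walsh transform values at $-\frac{u}{t}\alpha$ and $-\frac{u}{t}\beta$, use the evenness of $l_f,l_g$ over $\F_3^*$ to remove the dilation from $f^*,g^*$, and finish with the two elementary character sums in $t$ and $u$. The only (harmless) additions are your explicit factorization of $S_3$ into a product of the two one-dimensional sums and your justification of the Walsh-support invariance via the homogeneity $f(ax)=a^{h_f}f(x)$, which the paper simply asserts.
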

    \begin{proof}
      It is clear that 
    \[\begin{split}
      S_3=& \sum_{t\in \F_3^*}\sum_{u \in \F_3^*}\sum_{(x,y)\in \F_{3^n}\times \F_{3^m}}\zeta_3^{t(f(x)+g(y)+\lambda)+u(Tr_{3}^{s}(\alpha x+\beta y)+\mu )}\\
       =&\sum_{u\in \F_3^*}\zeta_3^{\mu u}\sum_{t\in \F_3^*}\zeta_3^{\lambda t}\sigma_t\left(\sum_{x\in \F_{3^n}}\zeta_3^{f(x)-Tr_{3}^{n}(-\frac{u}{t}\alpha x)}
       \sum_{y\in \F_{3^m}}\zeta_3^{g(x)-Tr_{3}^{m}(-\frac{u}{t}\beta x)}\right)\\ 
       =&\sum_{u\in \F_3^*}\zeta_3^{\mu u}\sum_{t\in \F_3^*}\zeta_3^{\lambda t}\sigma_t\left(\widetilde{\mathcal{R}}_f(-\frac{u}{t}\alpha)
       \widetilde{\mathcal{R}}_f(-\frac{u}{t}\alpha)\right).\\
      \end{split}\] 
    
    For $t,u\in \F_3^*$, if $(\alpha,\beta) \notin \widetilde{\mathcal{S}\mathcal{R}}_f\times\widetilde{\mathcal{S}\mathcal{R}}_g$,
    then $(-\frac{u \alpha}{t},-\frac{u \beta}{t}) \notin \widetilde{\mathcal{S}\mathcal{R}}_f\times\widetilde{\mathcal{S}\mathcal{R}}_g$
    and hence, $S_3=0$; 
    and if $(\alpha,\beta) \in \widetilde{\mathcal{S}\mathcal{R}}_f\times\widetilde{\mathcal{S}\mathcal{R}}_g$,
    then $(-\frac{u \alpha}{t},-\frac{u \beta}{t}) \in \widetilde{\mathcal{S}\mathcal{R}}_f\times\widetilde{\mathcal{S}\mathcal{R}}_g$  
    and it follows from the facts that $2\mid l_f$ and $2\mid l_g$ that 
    $(-\frac{u}{t})^{l_f}f^*(\alpha)=f^*(\alpha)$ and $(-\frac{u}{t})^{l_g}g^*(\beta)=g^*(\beta)$.
    It then implies that 
    \[\begin{split}
      S_3=&\sum_{u\in \F_3^*}\zeta_3^{-\mu u}\sum_{t\in \F_3^*}\varepsilon_f\varepsilon_g{\sqrt{3^*}}^{s+k_f+k_g}\eta_0^{s+k_f+k_g}(t)
      \zeta_3^{\lambda t}\zeta_3^{t\left((-\frac{u}{t})^{l_f}f^*(\alpha)+(-\frac{u}{t})^{l_g}g^*(\beta)\right)}\\ 
      =&\sum_{u\in \F_3^*}\zeta_3^{-\mu u}\sum_{t\in \F_3^*}\varepsilon_f\varepsilon_g{\sqrt{3^*}}^{s+k_f+k_g}\eta_0^{s+k_f+k_g}(t)
       \zeta_3^{t\left(f^*(\alpha)+g^*(\beta)+\lambda\right)}\\
    =&\left\{
        \begin{aligned}
        &4\varepsilon_f\varepsilon_g{(-3)}^{\frac{s+k_f+k_g}{2}}   &&f^*(\alpha)+g^*(\beta)+\lambda=0~{\rm and }~\mu=0, \\    
        &-2\varepsilon_f\varepsilon_g{(-3)}^{\frac{s+k_f+k_g}{2}}   &&\makecell[l]{f^*(\alpha)+g^*(\beta)+\lambda=0~{\rm and }~\mu \neq0, \\{\rm \ or \ }f^*(\alpha)+g^*(\beta)+\lambda\neq0~{\rm and }~\mu=0,}\\
        &\varepsilon_f\varepsilon_g{(-3)}^{\frac{s+k_f+k_g}{2}}   &&f^*(\alpha)+g^*(\beta)+\lambda\neq0~{\rm and }~\mu \neq0,
        \end{aligned}
        \right.
   \end{split}\]
   when $s+k_f+k_g$ is even. This completes the desired result (1). 
   Taking the same argument as that of the proof of the result (1), the expected result (2) also holds. 
   We have finished the whole proof. 
    \end{proof}

      \begin{lemma}\label{S_4}
       Let notations be the  same as Lemma \ref{Pg*}. 
     Let $s=n+m$, $\lambda\in \F_3^*$, $\mu \in \F_3$ and 
      \[\begin{split}
        S_4= \sum_{t\in \F_3^*}\sum_{u \in \F_3^*}\sum_{(x,y)\in \F_{3^n}\times \F_{3^m}}\zeta_3^{t(Tr_{3}^{n}(x)+g(y)+\lambda)+u(Tr_{3}^{s}(\alpha x+\beta y)+\mu)}.
      \end{split}\]
      For any $(\alpha,\beta) \in \F_{3^n}\times\F_{3^m} \backslash\{(0,0)\}$, the following statements hold. 
      \begin{enumerate}
        \item [\rm (1)] If $m+k_g$ is even, then 
          $$ S_4=\left\{
                  \begin{aligned}
                  &2 \cdot 3^n\varepsilon_g{(-3)}^{\frac{m+k_g}{2}}   &&\alpha \in \F_3^*,~ \beta \in \widetilde{\mathcal{S}\mathcal{R}}_g \ {\rm and } \ g^*(\frac{\beta}{\alpha})+\lambda-\frac{\mu}{\alpha}=0, \\    
                  &-3^n\varepsilon_g{(-3)}^{\frac{m+k_g}{2}}       &&\alpha \in \F_3^*,~ \beta \in \widetilde{\mathcal{S}\mathcal{R}}_g \ {\rm and } \ g^*(\frac{\beta}{\alpha})+\lambda-\frac{\mu}{\alpha}\neq 0,\\
                  &0   &&{\rm otherwise}.
                  \end{aligned}
                  \right.
                  $$
        \item [\rm (2)] If $m+k_g$ is odd, then 
        $$ S_4=\left\{
          \begin{aligned}
          &3^n\varepsilon_g{(-3)}^{\frac{m+k_g+1}{2}}  &&\alpha \in \F_3^*,~ \beta \in \widetilde{\mathcal{S}\mathcal{R}}_g \ {\rm and } \ g^*(\frac{\beta}{\alpha})+\lambda-\frac{\mu}{\alpha} =1, \\    
          &-3^n\varepsilon_g{(-3)}^{\frac{m+k_g+1}{2}} &&\alpha \in \F_3^*,~ \beta \in \widetilde{\mathcal{S}\mathcal{R}}_g \ {\rm and } \ g^*(\frac{\beta}{\alpha})+\lambda-\frac{\mu}{\alpha} =2, \\    
          &0                                                                                                   &&{\rm otherwise}.
          \end{aligned}
          \right.
          $$           
        \end{enumerate}
        \end{lemma}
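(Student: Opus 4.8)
The plan is to follow closely the computations in the proofs of Lemmas~\ref{Pg*} and~\ref{S_3}. First I would pull the scalar factor $\zeta_3^{\lambda t+\mu u}$ out of the innermost sum and use additivity of the trace, $Tr_3^s(\alpha x+\beta y)=Tr_3^n(\alpha x)+Tr_3^m(\beta y)$, so that for fixed $t,u\in\F_3^*$ the sum over $(x,y)$ splits as
\[
\Big(\sum_{x\in\F_{3^n}}\zeta_3^{Tr_3^n((t+u\alpha)x)}\Big)\Big(\sum_{y\in\F_{3^m}}\zeta_3^{t\,g(y)+u\,Tr_3^m(\beta y)}\Big).
\]
The first factor equals $3^n$ when $t+u\alpha=0$ and $0$ otherwise; since $t,u\in\F_3^*$ this forces $\alpha=-t/u\in\F_3^*$, so $S_4=0$ whenever $\alpha\notin\F_3^*$, which covers part of the ``otherwise'' case.

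Assume now $\alpha\in\F_3^*$. For each $u\in\F_3^*$ exactly one $t\in\F_3^*$ survives, namely $t=-u\alpha$, with weight $3^n$, and then $-u/t=\alpha^{-1}$. Extracting $\sigma_t$ exactly as in the proof of Lemma~\ref{S_3}, the $y$-sum turns into $\sigma_t\big(\widetilde{\mathcal{R}}_g(\beta/\alpha)\big)$, which vanishes unless $\beta/\alpha\in\widetilde{\mathcal{S}\mathcal{R}}_g$. Since $\widetilde{\mathcal{S}\mathcal{R}}_g$ is stable under multiplication by $\F_3^*$ (a fact already invoked in the proof of Lemma~\ref{S_3}), this is equivalent to $\beta\in\widetilde{\mathcal{S}\mathcal{R}}_g$; hence $S_4=0$ in the remaining part of the ``otherwise'' case.

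It then remains to treat $\alpha\in\F_3^*$ and $\beta\in\widetilde{\mathcal{S}\mathcal{R}}_g$, so that also $\beta/\alpha\in\widetilde{\mathcal{S}\mathcal{R}}_g$. I would substitute $\widetilde{\mathcal{R}}_g(\beta/\alpha)=\varepsilon_g{\sqrt{3^*}}^{m+k_g}\zeta_3^{g^*(\beta/\alpha)}$, apply $\sigma_{-u\alpha}$ using $\sigma_\kappa\big({\sqrt{3^*}}^{m+k_g}\big)=\eta_0^{m+k_g}(\kappa){\sqrt{3^*}}^{m+k_g}$ from Section~\ref{sec2}, and collect the $u$-sum into
\[
S_4=3^n\varepsilon_g{\sqrt{3^*}}^{m+k_g}\,\eta_0^{m+k_g}(-\alpha)\sum_{u\in\F_3^*}\eta_0^{m+k_g}(u)\,\zeta_3^{-u\alpha w},\qquad w=g^*(\beta/\alpha)+\lambda-\tfrac{\mu}{\alpha}.
\]
Splitting on the parity of $m+k_g$, the character sum is read off from Lemma~\ref{characters}: when $m+k_g$ is even it is $2$ if $w=0$ and $-1$ if $w\neq0$; when $m+k_g$ is odd it is $0$ if $w=0$ and $\eta_0(-\alpha w){\sqrt{3^*}}$ if $w\neq0$. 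Replacing ${\sqrt{3^*}}^{2}$ by $-3$, using $\eta_0(-\alpha)^2=1$, and in the odd case separating $w=1$ (so $\eta_0(w)=1$) from $w=2$ (so $\eta_0(w)=\eta_0(-1)=-1$) reproduces exactly the stated expressions for $S_4$.

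The main difficulty is purely one of bookkeeping: after applying the automorphism $\sigma_{-u\alpha}$ one must carefully track the accumulated quadratic-character factors $\eta_0^{m+k_g}(-\alpha)$, $\eta_0^{m+k_g}(u)$ and $\eta_0(-\alpha w)$ together with the powers of ${\sqrt{3^*}}$, and one must use the $\F_3^*$-stability of the Walsh support to rephrase the condition $\beta/\alpha\in\widetilde{\mathcal{S}\mathcal{R}}_g$ as $\beta\in\widetilde{\mathcal{S}\mathcal{R}}_g$. Beyond this, the argument runs in complete parallel with those of Lemmas~\ref{Pg*} and~\ref{S_3}.
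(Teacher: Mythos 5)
Your proposal is correct and follows essentially the same route as the paper's proof: split the inner sum via additivity of the trace, observe that the $x$-factor forces $\alpha\in\F_3^*$ (and otherwise kills $S_4$), reduce the $y$-factor to a Galois conjugate of $\widetilde{\mathcal{R}}_g(\beta/\alpha)$, and finish with the quadratic character sums of Lemma \ref{characters}. The only difference is cosmetic: you parametrize the surviving diagonal by $u$ with $t=-u\alpha$, picking up the prefactor $\eta_0^{m+k_g}(-\alpha)$, whereas the paper keeps $t$ as the outer variable and lands directly on $\sum_{t}\eta_0^{m+k_g}(t)\zeta_3^{t(g^*(\beta/\alpha)+\lambda-\mu/\alpha)}$; the two character sums agree after the substitution you describe.
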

        \begin{proof}
          It is clear that 
        \[\begin{split}
          S_4
           =&\sum_{u\in \F_3^*}\zeta_3^{\mu u}\sum_{t\in \F_3^*}\zeta_3^{\lambda t}\sigma_t\left(\sum_{x\in \F_{3^n}}\zeta_3^{Tr_{3}^{n}(x)+\frac{u}{t}Tr_{3}^{n}(\alpha x)}
           \sum_{y\in \F_{3^m}}\zeta_3^{g(y)+\frac{u}{t}Tr_{3}^{m}(\beta y)}\right)\\ 
           =&\sum_{t\in \F_3^*}\zeta_3^{\lambda t}\sigma_t\left(\sum_{u\in \F_3^*}\zeta_3^{\mu \frac{u}{t}}\sum_{x\in \F_{3^n}}\zeta_3^{Tr_{3}^{n}(x)+\frac{u}{t}Tr_{3}^{n}(\alpha x)}
           \sum_{y\in \F_{3^m}}\zeta_3^{g(y)+\frac{u}{t}Tr_{3}^{m}(\beta y)}\right)\\
          =&\left\{
            \begin{aligned}
            &3^n\sum_{t \in \F_3^*}\zeta_3^{\lambda t}\sigma_t\left(\sum_{u\in \F_3^*}\zeta_3^{-\frac{\mu}{\alpha}}\sum_{y\in \F_{3^m}}\zeta_3^{g(y)-Tr_{3}^{m}(\frac{\beta}{\alpha}y)}\right)  &&\alpha \in \F_3^*, \\    
            &0                                                                                                   &&\alpha \notin \F_3^*,\\
            \end{aligned}
            \right.\\
            =&\left\{
              \begin{aligned}
                &3^n\varepsilon_g{(-3)}^{\frac{m+k_g}{2}}\sum_{t \in \F_p^*}\eta_0^{m+k_g}(t)\zeta_3^{t\left(g^*(\frac{\beta}{\alpha})+\lambda-\frac{\mu}{\alpha}\right)}  &&\alpha \in \F_3^*, \\    
                &0                                                                                                   &&\alpha \notin \F_3^*.\\
              \end{aligned}
              \right.  
          \end{split}\] 
        When $m+k_g$ is even, then we have 
        \[\begin{split}
          S_4=\left\{
            \begin{aligned}
            &2\cdot 3^n\varepsilon_g{(-3)}^{\frac{m+k_g}{2}}  &&\alpha \in \F_3^* \ {\rm and}\  g^*(\frac{\beta}{\alpha})+\lambda-\frac{\mu}{\alpha}=0,   \\    
            &-3^n\varepsilon_g{(-3)}^{\frac{m+k_g}{2}}  &&\alpha \in \F_3^* \ {\rm and}\  g^*(\frac{\beta}{\alpha})+\lambda-\frac{\mu}{\alpha}\neq 0,   \\ 
            &0                                                                                                   &&{\rm otherwise}.\\
            \end{aligned}
            \right.
        \end{split}\]
        When $m+k_g$ is odd, then we have 
        \[\begin{split}
          S_4=\left\{
            \begin{aligned}
            &3^n\varepsilon_g{(-3)}^{\frac{m+k_g+1}{2}}  &&\alpha \in \F_3^* \ {\rm and}\  g^*(\frac{\beta}{\alpha})+\lambda-\frac{\mu}{\alpha}=1,   \\     
            &-3^n\varepsilon_g{(-3)}^{\frac{m+k_g+1}{2}}   &&\alpha \in \F_3^* \ {\rm and}\  g^*(\frac{\beta}{\alpha})+\lambda-\frac{\mu}{\alpha}=2,   \\     
            &0                                                                                                   &&{\rm otherwise}.\\
            \end{aligned}
            \right.
        \end{split}\]
        This completes the proof. 
      \end{proof}

        \begin{lemma}\label{S_5}
          Let notations be the  same as Lemma \ref{Pg*}. 
       Let $s=n+m$, $\lambda\in \F_3^*$, $\mu \in \F_3$ and 
        \[\begin{split}
          S_5= \sum_{t\in \F_3^*}\sum_{u \in \F_3^*}\sum_{(x,y)\in \F_{3^n}\times \F_{3^m}}\zeta_3^{t^2(Tr_{3}^{n}(x)+g(y)+\lambda)+u(Tr_{3}^{s}(\alpha x+\beta y)+\mu)}.
        \end{split}\]
        For any $(\alpha,\beta) \in \F_{3^n}\times\F_{3^m} \backslash\{(0,0)\}$, 
        the following statements hold.   
              $$ S_5=\left\{
                    \begin{aligned}
                    &2\cdot 3^n\varepsilon_g{(-3)}^{\frac{m+k_g}{2}}   &&\alpha \in \F_3^*, \beta \in \widetilde{\mathcal{S}\mathcal{R}}_g \ {\rm and } \ g^*(\frac{\beta}{\alpha})+\lambda-\frac{\mu}{\alpha}=0, \\    
                    &({\sqrt{-3}}-1)3^n\varepsilon_g{(-3)}^{\frac{m+k_g}{2}}       &&\alpha \in \F_3^*, \beta \in \widetilde{\mathcal{S}\mathcal{R}}_g \ {\rm and } \ g^*(\frac{\beta}{\alpha})+\lambda-\frac{\mu}{\alpha}=1,\\
                    &({\sqrt{-3}}+1)3^n\varepsilon_g{(-3)}^{\frac{m+k_g}{2}}       &&\alpha \in \F_3^*, \beta \in \widetilde{\mathcal{S}\mathcal{R}}_g \ {\rm and } \ g^*(\frac{\beta}{\alpha})+\lambda-\frac{\mu}{\alpha}=2,\\
                    &0   &&{\rm otherwise}.
                    \end{aligned}
                    \right.
                    $$
               \end{lemma}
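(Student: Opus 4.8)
The plan is to follow the same route as in the proof of Lemma~\ref{S_4}, the key point being that squaring is trivial on $\F_3^*$. Since $t^2=1$ for every $t\in\F_3^*$, the summand of $S_5$ depends on $t$ only through $t^2$, so the $t$-sum merely contributes a factor $2$, and
\[
S_5=2\,\zeta_3^{\lambda}\sum_{u\in\F_3^*}\zeta_3^{\mu u}\Bigl(\sum_{x\in\F_{3^n}}\zeta_3^{Tr_{3}^{n}((1+u\alpha)x)}\Bigr)\Bigl(\sum_{y\in\F_{3^m}}\zeta_3^{g(y)-Tr_{3}^{m}((-u\beta)y)}\Bigr).
\]
First I would observe that the inner $x$-sum equals $3^n$ when $1+u\alpha=0$ and $0$ otherwise; for $u\in\F_3^*$ the equation $1+u\alpha=0$ admits a solution only if $\alpha\in\F_3^*$, in which case $u=-\alpha^{-1}$ is unique. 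Hence $S_5=0$ as soon as $\alpha\notin\F_3^*$, which accounts for part of the ``otherwise'' alternative.

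Next, for $\alpha\in\F_3^*$ the surviving choice $u=-\alpha^{-1}$ gives $-u\beta=\beta/\alpha$, so the $y$-sum is exactly the Walsh transform $\widetilde{\mathcal{R}}_g(\beta/\alpha)$, and we obtain
\[
S_5=2\cdot 3^n\,\zeta_3^{\lambda-\mu/\alpha}\,\widetilde{\mathcal{R}}_g(\beta/\alpha).
\]
Here I would use that, since $p=3$ and the exponent $h_g$ in the definition of $\mathcal{WRP}$ is even, a substitution $y\mapsto c^{-1}y$ in the Walsh sum gives $\widetilde{\mathcal{R}}_g(c\beta)=\widetilde{\mathcal{R}}_g(\beta)$ for all $c\in\F_3^*$; thus $\widetilde{\mathcal{S}\mathcal{R}}_g$ is $\F_3^*$-invariant and $\widetilde{\mathcal{R}}_g(\beta/\alpha)=0$ unless $\beta\in\widetilde{\mathcal{S}\mathcal{R}}_g$, which settles the remaining part of the ``otherwise'' alternative. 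When $\alpha\in\F_3^*$ and $\beta\in\widetilde{\mathcal{S}\mathcal{R}}_g$, substituting $\widetilde{\mathcal{R}}_g(\beta/\alpha)=\varepsilon_g\sqrt{3^*}^{\,m+k_g}\zeta_3^{g^*(\beta/\alpha)}$ yields
\[
S_5=2\cdot 3^n\,\varepsilon_g\,\sqrt{3^*}^{\,m+k_g}\,\zeta_3^{\,g^*(\beta/\alpha)+\lambda-\mu/\alpha}.
\]

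It then remains to split according to the value of $\rho:=g^*(\beta/\alpha)+\lambda-\mu/\alpha\in\F_3$, reading off the three non-trivial cases from the evaluation of $2\zeta_3^{\rho}$ for $\rho=0,1,2$ via the relations $\zeta_3+\zeta_3^2=-1$ and $\zeta_3-\zeta_3^2=\sqrt{3^*}$ (cf. Lemma~\ref{characters}), and interpreting $(-3)^{(m+k_g)/2}$ as $\sqrt{3^*}^{\,m+k_g}$. Unlike in Lemma~\ref{S_4}, no case distinction on the parity of $m+k_g$ arises, precisely because $\sigma_{t^2}=\sigma_1$ introduces no quadratic-character twist on $\sqrt{3^*}^{\,m+k_g}$. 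I do not expect any genuine obstacle: the only two points needing a moment's care are the reduction $t^2\equiv1$ on $\F_3^*$ and the $\F_3^*$-invariance of the Walsh support $\widetilde{\mathcal{S}\mathcal{R}}_g$, after which everything collapses to the same routine cyclotomic bookkeeping already carried out in Lemmas~\ref{S_3} and~\ref{S_4}.
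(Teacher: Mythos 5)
Your proposal is correct and follows essentially the same route as the paper: reduce the $t$-sum (the paper keeps it as $\sum_{t\in\F_3^*}\eta_0^{m+k_g}(t^2)\zeta_3^{t^2\rho}$, you observe directly that $t^2=1$ on $\F_3^*$ so it is $2\zeta_3^{\rho}$), kill the $x$-sum unless $\alpha\in\F_3^*$ and $u=-\alpha^{-1}$, and recognize the $y$-sum as $\widetilde{\mathcal{R}}_g(\beta/\alpha)$; your explicit justification of the $\F_3^*$-invariance of $\widetilde{\mathcal{S}\mathcal{R}}_g$ via the evenness of $h_g$ is a point the paper leaves implicit. One caveat: carrying your final step to completion gives $2\zeta_3^{2}=-(1+\sqrt{-3})$, so in the case $g^*(\beta/\alpha)+\lambda-\mu/\alpha=2$ the value is $-(\sqrt{-3}+1)3^n\varepsilon_g(-3)^{\frac{m+k_g}{2}}$, which differs by a sign from the displayed statement (and from what the paper's own formula $\eta_0(\rho)\sqrt{-3}-1$ yields with $\eta_0(2)=-1$); this appears to be a typo in the lemma rather than an error in your argument.
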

         
               \begin{proof}
            The proof is similar to that of Lemma \ref{S_4} and we immediately have 
            \[\begin{split}
              S_5=&\left\{
                  \begin{aligned}
                    &3^n\varepsilon_g{(-3)}^{\frac{m+k_g}{2}}\sum_{t \in \F_3^*}\eta_0^{m+k_g}(t^2)\zeta_3^{t^2\left(g^*(\frac{\beta}{\alpha})+\lambda-\frac{\mu}{\alpha}\right)}  &&\alpha \in \F_3^*, \\    
                    &0                                                                                                   &&\alpha \notin \F_3^*,\\
                  \end{aligned}
                  \right.\\
                  =&\left\{
                \begin{aligned}
                &2\cdot3^n\varepsilon_g{(-3)}^{\frac{m+k_g}{2}}  &&\alpha \in \F_3^* \ {\rm and}\  g^*(\frac{\beta}{\alpha})+\lambda-\frac{\mu}{\alpha}=0,   \\    
                &3^n\varepsilon_g{(-3)}^{\frac{m+k_g}{2}}\left(\eta_0\left(g^*(\frac{\beta}{\alpha})+\lambda-\frac{\mu}{\alpha}\right){\sqrt{-3}}-1\right)  &&\alpha \in \F_3^* \ {\rm and}\  g^*(\frac{\beta}{\alpha})+\lambda-\frac{\mu}{\alpha}\neq 0,   \\ 
                &0                                      &&{\rm otherwise}.\\
              \end{aligned}
              \right. 
              \end{split}\]  
              The desired result immediately then follows from the definition of $\eta_0$. 
            \end{proof}

      \begin{lemma}\label{Ng(0)}
        Let $g\in \mathcal{WRP}$ with $\widetilde{\mathcal{R}}_g(\beta)=\varepsilon_g{\sqrt{3^*}}^{m+k_g}\zeta_3^{g^*(\beta)}$ for every $\beta \in \widetilde{\mathcal{S}\mathcal{R}}_g$,
      where $\varepsilon_g\in \{ -1, 1\}$ and $0\leq k_g\leq m$.
        Let $s=n+m$, $\lambda\in \F_3^*$, $\mu \in \F_{3}$ and 
        \begin{align*}
            N_{g}(0)=\#\{(x,y)\in \F_{3^n}\times \F_{3^m}: Tr_{3}^{n}(x)+g(y)+\lambda=0,~ Tr_{3}^{s}(\alpha x +\beta y)+\mu=0\}.    
        \end{align*} 
        For any $(\alpha,\beta) \in \F_{3^n}\times\F_{3^m} \backslash\{(0,0)\}$, the following statements hold.  
      \begin{enumerate}
        \item [\rm (1)] If $m+k_g$ is even, then 
        \[\begin{split} 
        N_{g}(0)&=\left\{
              \begin{aligned}
              &3^{s-2}+2\cdot 3^{n-2}\varepsilon_g{(-3)}^{\frac{m+k_g}{2}}   && \alpha \in \F_3^*,~ \beta \in \widetilde{\mathcal{S}\mathcal{R}}_g \ {\rm and } \ g^*(\frac{\beta}{\alpha})+\lambda-\frac{\mu}{\alpha}=0,\\    
              &3^{s-2}-3^{n-2}\varepsilon_g{(-3)}^{\frac{m+k_g}{2}}   && \alpha \in \F_3^*,~ \beta \in \widetilde{\mathcal{S}\mathcal{R}}_g \ {\rm and } \ g^*(\frac{\beta}{\alpha})+\lambda-\frac{\mu}{\alpha}\neq 0,\\
              &3^{s-2}  && {\rm otherwise}.\\  
              \end{aligned}
              \right.
            \end{split}\]
        \item [\rm (2)] If $m+k_g$ is odd, then 
        \[\begin{split}
        N_{g}(0)&=\left\{
              \begin{aligned}
              &3^{s-2}+3^{n-2}\varepsilon_g{(-3)}^{\frac{m+k_g+1}{2}}   && \alpha \in \F_3^*,~ \beta \in \widetilde{\mathcal{S}\mathcal{R}}_g \ {\rm and } \ g^*(\frac{\beta}{\alpha})+\lambda-\frac{\mu}{\alpha}=1,\\    
              &3^{s-2}-3^{n-2}\varepsilon_g{(-3)}^{\frac{m+k_g+1}{2}}   && \alpha \in \F_3^*,~ \beta \in \widetilde{\mathcal{S}\mathcal{R}}_g \ {\rm and } \ g^*(\frac{\beta}{\alpha})+\lambda-\frac{\mu}{\alpha}=2,\\
              &3^{s-2}  && {\rm otherwise}.\\  
              \end{aligned}
              \right.
            \end{split}\]
    
      \end{enumerate}
      \end{lemma}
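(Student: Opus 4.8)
The plan is to express $N_{g}(0)$ as a double additive-character sum over $\F_3\times\F_3$ and then reduce it to the auxiliary sums already computed. Using the orthogonality relation $\frac{1}{3}\sum_{t\in\F_3}\zeta_3^{ta}$ is $1$ if $a=0$ and $0$ otherwise, we may write
\[
N_{g}(0)=\frac{1}{9}\sum_{t\in\F_3}\sum_{u\in\F_3}\sum_{(x,y)\in\F_{3^n}\times\F_{3^m}}\zeta_3^{t(Tr_{3}^{n}(x)+g(y)+\lambda)+u(Tr_{3}^{s}(\alpha x+\beta y)+\mu)}.
\]
First I would split the outer summation according to whether $t$ and $u$ are zero. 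The term $t=u=0$ contributes $3^{s}$. The term with $t=0$ and $u\in\F_3^*$ is exactly the sum $S_2$ of Lemma \ref{S_2}, which equals $0$ because $(\alpha,\beta)\neq(0,0)$ (note the subcase $\alpha=0$, $\beta\neq0$ is already covered there, since then the inner sum over $y$ vanishes). The term with $t\in\F_3^*$ and $u=0$ factors as $\sum_{t\in\F_3^*}\zeta_3^{\lambda t}\bigl(\sum_{x\in\F_{3^n}}\zeta_3^{t\,Tr_{3}^{n}(x)}\bigr)\bigl(\sum_{y\in\F_{3^m}}\zeta_3^{t\,g(y)}\bigr)$, which is $0$ since $\sum_{x\in\F_{3^n}}\zeta_3^{t\,Tr_{3}^{n}(x)}=0$ for every $t\in\F_3^*$. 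Finally, the term with $t,u\in\F_3^*$ is precisely the sum $S_4$ defined in Lemma \ref{S_4}. Combining these four pieces gives the clean identity $N_{g}(0)=\tfrac{1}{9}\bigl(3^{s}+S_4\bigr)$.

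The remaining work is bookkeeping: substitute the explicit value of $S_4$ supplied by Lemma \ref{S_4} in each parity case of $m+k_g$ and each subcase on $(\alpha,\beta)$ and on the value of $g^{*}(\tfrac{\beta}{\alpha})+\lambda-\tfrac{\mu}{\alpha}$. For example, when $m+k_g$ is even and $\alpha\in\F_3^*$, $\beta\in\widetilde{\mathcal{S}\mathcal{R}}_g$ with $g^{*}(\tfrac{\beta}{\alpha})+\lambda-\tfrac{\mu}{\alpha}=0$, one gets $N_{g}(0)=\tfrac{1}{9}\bigl(3^{s}+2\cdot 3^{n}\varepsilon_g(-3)^{(m+k_g)/2}\bigr)=3^{s-2}+2\cdot 3^{n-2}\varepsilon_g(-3)^{(m+k_g)/2}$; the ``$\neq 0$'' and ``otherwise'' rows, and the odd case with its two nonzero residues, follow identically from the corresponding branches of Lemma \ref{S_4}.

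I do not anticipate a genuine obstacle here. The only points demanding a little care are verifying that $\sum_{x\in\F_{3^n}}\zeta_3^{t\,Tr_{3}^{n}(x)}=0$ so that the $t\in\F_3^*$, $u=0$ contribution drops out, keeping the hypothesis $(\alpha,\beta)\neq(0,0)$ visible so that Lemma \ref{S_2} applies, and matching the parity assumption on $m+k_g$ consistently between the statement and the two branches of Lemma \ref{S_4}. Once these are in place the result is immediate from $N_{g}(0)=\tfrac{1}{9}(3^{s}+S_4)$.
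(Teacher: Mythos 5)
Your proposal is correct and follows essentially the same route as the paper: expand $N_g(0)$ via orthogonality of additive characters, observe that the $(t\in\F_3^*,u=0)$ piece vanishes because $Tr_3^n$ is balanced and that $S_2=0$ for $(\alpha,\beta)\neq(0,0)$, and then read off the answer from Lemma \ref{S_4}. If anything, you are slightly more careful than the paper, which drops the $(t\in\F_3^*,u=0)$ term without comment.
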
  
    \begin{proof}
      It is clear that 
      \[\begin{split}
        N_{g}(0)=&\frac{1}{9}\sum_{t \in \F_3}\sum_{u \in \F_3}\zeta_3^{t(Tr_{3}^{n}(x)+g(y)+\lambda)+u(Tr_{3}^{s}(\alpha x+\beta y)+\mu )}\\
                              =&3^{s-2}+\frac{1}{9}\sum_{t \in \F_3^*}\sum_{(x, y) \in \F_{3^n}\times \F_{3^m}}\zeta_3^{t(Tr_{3}^{n}(x)+g(y)+\lambda)}
                              +\frac{1}{9}\sum_{u \in \F_3^*}\sum_{(x, y) \in \F_{3^n}\times \F_{3^m}}\zeta_3^{u(Tr_{3}^{s}(\alpha x+\beta y)+\mu)}\\
                              &+\frac{1}{9}\sum_{t \in \F_3^*}\sum_{u \in \F_3^*}\sum_{(x, y) \in \F_{3^n}\times \F_{3^m}}\zeta_3^{t(Tr_{3}^{n}(x)+g(y)+\lambda)+u(Tr_{3}^{s}(\alpha x+\beta y)+\mu)}\\
                              =&3^{s-2}+\frac{1}{9}S_2+\frac{1}{9}S_4.                            
    \end{split}\]
    Then the desired results follow from Lemmas \ref{S_2} and \ref{S_4}.
    \end{proof}

\section{Several new families of ternary self-orthogonal codes}\label{sec.SO}
In this section, we present several new explicit constructions of infinite families of ternary self-orthogonal codes 
by using weakly regular plateaued functions ($\mathcal{WRP}$).  
To this end, we need the following useful result.

\begin{lemma}[\cite{ternary self-orthogonal}]\label{lem.SO}
  Let $\mathcal{C}$ be any ternary linear code. 
  Then $\mathcal{C}$ is self-orthogonal if every codeword of $\mathcal{C}$ has weight which is divisible by $3$. 
\end{lemma}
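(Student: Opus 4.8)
The plan is to exploit the arithmetic peculiarity of $\F_3$ that the square of any nonzero element equals $1$. First I would fix an arbitrary codeword $\bc=(c_1,\dots,c_n)\in\mathcal{C}\subseteq\F_3^n$ and compute $\langle\bc,\bc\rangle=\sum_{i=1}^n c_i^2$ in $\F_3$. Since $c_i^2=1$ whenever $c_i\neq 0$ and $c_i^2=0$ otherwise, this sum reduces to $\#\{i:c_i\neq 0\}=\wt(\bc)$ modulo $3$; that is, $\langle\bc,\bc\rangle\equiv\wt(\bc)\pmod 3$. Hence the hypothesis ``$3\mid\wt(\bc)$ for every $\bc\in\mathcal{C}$'' is exactly the statement that every codeword of $\mathcal{C}$ is isotropic, i.e. $\langle\bc,\bc\rangle=0$ for all $\bc\in\mathcal{C}$.

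The second step is to promote ``every codeword is isotropic'' to ``$\mathcal{C}\subseteq\mathcal{C}^\bot$'' by a polarization argument. Given $\bc,\bc'\in\mathcal{C}$, linearity of $\mathcal{C}$ gives $\bc+\bc'\in\mathcal{C}$, so by the previous step $\langle\bc,\bc\rangle=\langle\bc',\bc'\rangle=\langle\bc+\bc',\bc+\bc'\rangle=0$. Expanding $\langle\bc+\bc',\bc+\bc'\rangle=\langle\bc,\bc\rangle+2\langle\bc,\bc'\rangle+\langle\bc',\bc'\rangle$ then yields $2\langle\bc,\bc'\rangle=0$ in $\F_3$, and since $2$ is a unit modulo $3$ this forces $\langle\bc,\bc'\rangle=0$. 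As $\bc,\bc'$ were arbitrary, $\mathcal{C}$ is contained in its dual, which is the claim.

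There is no real obstacle here; the two points worth flagging are that the argument is genuinely special to characteristic $3$ (the identity $c_i^2=1$ for $c_i\neq 0$ fails over larger fields, so the lemma has no direct analogue there) and that the polarization step relies on $2$ being invertible, which holds precisely because $p=3\neq 2$. The statement is one-directional---divisibility by $3$ is only a sufficient condition---so no converse needs to be treated, and this suffices for the later sections, where after computing the weight distributions of the codes in Theorems~\ref{Th. double even ternary}--\ref{Th.simple odd 0,SQ,NSQ} one simply checks that all occurring weights are multiples of $3$ and then invokes this lemma.
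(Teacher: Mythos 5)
Your proof is correct and is the standard argument for this classical fact; the paper itself gives no proof, simply citing Huffman--Pless, where essentially the same reasoning (the identity $\langle\bc,\bc\rangle\equiv\wt(\bc)\pmod 3$ over $\F_3$ followed by polarization) is used. Nothing to add.
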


\subsection{Three new families of ternary self-orthogonal codes from $f,g\in \mathcal{WRP}$} 

In this subsection, we consider the augmented code $\overline{\mathcal{C}_{D_{fg}(0)}}$ defined in Equation (\ref{df: double AC_D}) with the defining set $D_{fg}(0)$ given by 
\begin{align}
D_{fg}(0)=\left\{(x,y)\in \F_{3^n}\times \F_{3^m}: f(x)+g(y)+\lambda=0\right\},
\end{align}
where $f,g\in \mathcal{WRP}$ and $\lambda \in \F_3^*$.

    Let $n_{fg}(0)$ denote the length of the augmented code $\overline{\mathcal{C}_{D_{fg}(0)}}$ and 
    ${\rm wt}(\overline{{\bf{c}}}_{fg}(0))$ denote the weight of nonzero codeword $\overline{{\bf{c}}}_{g}(0)$. 
    Let $s=n+m$, $\lambda \in \F_3^*$, $\mu  \in \F_3$ and   
    \begin{align}\label{df.N(lambda,mu)}
    N_{fg}(0)=\#\left\{(x,y)\in \F_{3^n}\times \F_{3^m}:f(x)+g(y)+\lambda=0,Tr_{3}^{s}(\alpha x+\beta y)+\mu=0\right\} 
  \end{align} 
  for any $(\alpha,\beta)\in \F_{3^n}\times \F_{3^m}$. 
  Then it can be verified that 
    \begin{align}\label{eq.n_fg(0)}
      n_{fg}(0)=\#\left\{(x,y)\in \F_{3^n}\times \F_{3^m}:f(x)+g(y)+\lambda=0\right\}
    \end{align}
    and 
    \begin{align*}
    {\rm wt}(\overline{{\bf{c}}}_{fg}(0))=&n_{fg}(0)-N_{fg}(0)\\
    =&\sum_{(x,y)\in \F_{3^n}\times \F_{3^m}}\left(\frac{1}{3}\sum_{t\in \F_3}\zeta_3^{t(f(x)+g(y)+\lambda)}
    -\frac{1}{9}\sum_{t\in \F_3}\sum_{u \in \F_3}\zeta_3^{t(f(x)+g(y)+\lambda)+u(Tr_{3}^{s}(\alpha x+\beta y)+\mu)}\right)\\
    =& 2\cdot 3^{s-2}+\frac{2}{9}S_1-\frac{1}{9}S_2-\frac{1}{9}S_3,
    \end{align*}
where 
\begin{align*}
  S_1=& \sum_{t\in \F_3^*}\sum_{(x,y)\in \F_{3^n}\times \F_{3^m}}\zeta_3^{t(f(x)+g(y)+\lambda)},\\
  S_2=& \sum_{u\in \F_3^*}\sum_{(x,y)\in \F_{3^n}\times \F_{3^m}}\zeta_3^{u(Tr_{3}^{s}(\alpha x+\beta y)+\mu)}~{\rm and} \\
  S_3=& \sum_{t\in \F_3^*}\sum_{u \in \F_3^*}\sum_{(x,y)\in \F_{3^n}\times \F_{3^m}}\zeta_3^{t(f(x)+g(y)+\lambda)+u(Tr_{3}^{s}(\alpha x+\beta y)+\mu)}.\\
\end{align*} 

\begin{theorem}\label{Th. double even ternary}
Let $s=n+m$, where $n$ and $m$ are two positive integers.
Let $f,g\in \mathcal{WRP}$ with $\widetilde{\mathcal{R}}_f(\alpha )=\varepsilon_f{\sqrt{3^*}}^{n+k_f}\zeta_3^{f^*(\alpha)}$ for every $\alpha \in \widetilde{\mathcal{S}\mathcal{R}}_f$
  and $\widetilde{\mathcal{R}}_g(\beta)=\varepsilon_g{\sqrt{3^*}}^{m+k_g}\zeta_3^{g^*(\beta)}$ for every $\beta \in \widetilde{\mathcal{S}\mathcal{R}}_g$,
   where $\varepsilon_f,\varepsilon_g\in \{ -1, 1\}$, $0\leq k_f\leq n$ and $0\leq k_g\leq m$.
   Assume that $s+k_f+k_g$ is even and $k_f+k_g<s-3$,
   then $\overline{\mathcal{C}_{D_{fg}(0)}}$ is a ternary five-weight self-orthogonal 
   $[3^{s-1}+\varepsilon_f\varepsilon_g(-3)^{\frac{s+k_f+k_g-2}{2}},s+1]$ code and its weight distribution is given in Table \ref{tab: double even }. 
\end{theorem}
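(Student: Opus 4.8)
The plan is to describe every codeword of $\overline{\C_{D_{fg}(0)}}$ by a triple $(\alpha,\beta,\mu)\in\F_{3^n}\times\F_{3^m}\times\F_3$ and to read off its Hamming weight from the character-sum identity already established before the theorem, namely
\[
{\rm wt}(\overline{\bc}_{fg}(0))=2\cdot 3^{s-2}+\frac{2}{9}S_1-\frac{1}{9}S_2-\frac{1}{9}S_3 ,
\]
substituting the explicit evaluations of $S_1,S_2,S_3$ from Lemmas \ref{S_1}, \ref{S_2} and \ref{S_3} in the branch ``$s+k_f+k_g$ even''. As a first step this also gives the length: Lemma \ref{S_1} yields $S_1=-\varepsilon_f\varepsilon_g(-3)^{(s+k_f+k_g)/2}$, which is independent of $(\alpha,\beta,\mu)$, so $n_{fg}(0)=3^{s-1}+\tfrac13 S_1=3^{s-1}+\varepsilon_f\varepsilon_g(-3)^{(s+k_f+k_g-2)/2}$, the claimed length.

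Next I would stratify $\F_{3^n}\times\F_{3^m}$ according to the behaviour of $(S_2,S_3)$. On the stratum $(\alpha,\beta)=(0,0)$ we have $S_2=2\cdot 3^{s}$ or $-3^{s}$ according to $\mu=0$ or $\mu\ne 0$, while $f^*(0)+g^*(0)+\lambda=\lambda\ne 0$ selects the corresponding branch of $S_3$; this stratum contributes only the zero word (for $\mu=0$) and, for $\mu\ne 0$, a word of weight $n_{fg}(0)$ (consistent, since $\mu\mathbf{1}$ has full support). On the stratum $(\alpha,\beta)\ne(0,0)$ with $(\alpha,\beta)\notin\widetilde{\mathcal{S}\mathcal{R}}_f\times\widetilde{\mathcal{S}\mathcal{R}}_g$ one has $S_2=S_3=0$, giving the single weight $2\cdot 3^{s-2}+\tfrac{2}{9}S_1$. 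On the stratum $(\alpha,\beta)\ne(0,0)$ with $(\alpha,\beta)\in\widetilde{\mathcal{S}\mathcal{R}}_f\times\widetilde{\mathcal{S}\mathcal{R}}_g$ one has $S_2=0$ and $S_3$ takes its four even-case values according to whether $f^*(\alpha)+g^*(\beta)+\lambda=0$ and whether $\mu=0$; substitution collapses the two ``mixed'' subcases to the common value $2\cdot 3^{s-2}$, leaving three weights here. Collecting the strata gives exactly the five nonzero weights, and their frequencies follow from the counting Lemma \ref{P_{f*,g*+c=0}}(1) (for the number of $(\alpha,\beta)\in\widetilde{\mathcal{S}\mathcal{R}}_f\times\widetilde{\mathcal{S}\mathcal{R}}_g$ with $f^*(\alpha)+g^*(\beta)+\lambda=0$) together with $|\widetilde{\mathcal{S}\mathcal{R}}_f\times\widetilde{\mathcal{S}\mathcal{R}}_g|=3^{s-k_f-k_g}$ from Proposition \ref{dim}; one checks they sum to $3^{s+1}$ and obey the first Pless power moment, which is Table \ref{tab: double even }.

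Self-orthogonality then follows from Lemma \ref{lem.SO}: I would verify $3\mid{\rm wt}(\overline{\bc}_{fg}(0))$ for each of the five weights, using that $s\ge 4$ (forced by $0\le k_f+k_g<s-3$) so every error term is a genuine power $3^{j}$ with $j\ge 1$, and that the $\tfrac{2}{9}S_1$-term is a multiple of $3$ whenever it actually occurs (its stratum is nonempty only when $k_f+k_g\ge 1$, in which case evenness of $s+k_f+k_g$ forces $s+k_f+k_g\ge 6$). For the dimension $s+1$, I would combine the fact $\mathbf{1}\notin\C_{D_{fg}(0)}$, quoted from \cite{Cheng Y.& Cao[1]}, with the weight list: for $\mu=0$ and $(\alpha,\beta)\ne(0,0)$ the weight lies in the last two strata and is strictly positive (this is where $k_f+k_g<s-3$ is used, to bound $(-3)^{(s+k_f+k_g)/2}$ well below $3^{s-1}$), so the evaluation map $(\alpha,\beta)\mapsto (Tr_3^s(\alpha x+\beta y))_{(x,y)\in D_{fg}(0)}$ is injective, $\C_{D_{fg}(0)}$ has dimension $s$, and augmenting by $\mathbf{1}$ raises the dimension to $s+1$.

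The genuinely delicate part is not any single computation but the bookkeeping: isolating the degenerate point $(\alpha,\beta)=(0,0)$, which is the only place where $S_2\ne 0$; tracking which $S_3$-subcases merge so that exactly five distinct nonzero weights survive; and checking that all weight frequencies are nonnegative and that the minimum weight is actually attained. It is precisely in these nonnegativity and positivity checks that the hypothesis $k_f+k_g<s-3$ is consumed, and getting those inequalities clean is the crux of the argument.
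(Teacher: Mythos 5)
Your proposal is correct and follows essentially the same route as the paper: the identity ${\rm wt}(\overline{{\bf{c}}}_{fg}(0))=2\cdot 3^{s-2}+\frac{2}{9}S_1-\frac{1}{9}S_2-\frac{1}{9}S_3$, the evaluations of $S_1,S_2,S_3$ from Lemmas \ref{S_1}, \ref{S_2}, \ref{S_3}, the counts from Lemma \ref{P_{f*,g*+c=0}}, and divisibility by $3$ via Lemma \ref{lem.SO}. If anything, you are more careful than the paper on the stratum $(\alpha,\beta)\notin\widetilde{\mathcal{S}\mathcal{R}}_f\times\widetilde{\mathcal{S}\mathcal{R}}_g$ (where $\frac{2}{9}S_1=-2\varepsilon_f\varepsilon_g(-3)^{\frac{s+k_f+k_g-4}{2}}$, so the exponent differs from the one printed in the last row of Table \ref{tab: double even }, and your separate argument that $s+k_f+k_g\ge 6$ on a nonempty such stratum is exactly what is needed there) and on the dimension count, which the paper merely asserts.
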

\begin{proof}
  By the definition of $n_{fg}(0)$ in Equation (\ref{eq.n_fg(0)}) and Lemma \ref{S_1}, 
  we have $n_{fg}(0)=3^{s-1}+\varepsilon_f\varepsilon_g(-3)^{\frac{s+k_f+k_g-2}{2}}$ when $s+k_f+k_g$ is even.
  It clear the dimension of $\overline{\mathcal{C}_{D_{fg}(0)}}$ is $s+1$, 
  then this is a $[3^{s-1}+\varepsilon_f\varepsilon_g(-3)^{\frac{s+k_f+k_g-2}{2}},s+1]$ code.
  It follows from the definition of ${\rm wt}(\overline{{\bf{c}}}_{fg}(0))$ and Lemmas \ref{S_1}, \ref{S_2} and \ref{S_3} that 
  if   $(\alpha, \beta) \notin\widetilde{\mathcal{S}\mathcal{R}}_f\times\widetilde{\mathcal{S}\mathcal{R}}_g \backslash \{(0,0)\}$, 
  then ${\rm wt}(\overline{{\bf{c}}}_{fg}(0))=2\cdot3^{s-2}-2\varepsilon_f\varepsilon_g{(-3)}^{\frac{s+k_f+k_g-2}{2}}$; 
  and if $(\alpha, \beta) \in\widetilde{\mathcal{S}\mathcal{R}}_f\times\widetilde{\mathcal{S}\mathcal{R}}_g\bigcup\{(0,0)\}$, then 
  $$ {\rm wt}(\overline{{\bf{c}}}_{fg}(0))=\left\{
      \begin{aligned}
      &0   &&\mu=0~{\rm and}~(\alpha,\beta)=(0,0), \\    
      &2\cdot3^{s-2}+2\varepsilon_f\varepsilon_g{(-3)}^{\frac{s+k_f+k_g-2}{2}}   &&\mu=0~{\rm and}~f^*(\alpha)+g^*(\beta)+\lambda=0,\\
      &2\cdot3^{s-2}+\varepsilon_f\varepsilon_g{(-3)}^{\frac{s+k_f+k_g-2}{2}}   &&\makecell[l]{\mu=0,f^*(\alpha)+g^*(\beta)+\lambda\neq0,(\alpha,\beta) \neq 0,}\\
      &2\cdot 3^{s-2} &&\makecell[l]{\mu=0,f^*(\alpha)+g^*(\beta)+\lambda\neq0,(\alpha,\beta) \neq 0\\{\rm \ or \ }\mu \neq 0, f^*(\alpha)+g^*(\beta)+\lambda=0,(\alpha,\beta) \neq 0,}\\
      &3^{s-1}+\varepsilon_f\varepsilon_g{(-3)}^{\frac{s+k_f+k_g-2}{2}}   &&\mu \neq0~{\rm and}~(\alpha,\beta)=(0,0).\\  
    \end{aligned}
      \right.
      $$ 
   The weight distribution of the code is straightly derived from Lemma \ref{P_{f*,g*+c=0}} and \ref{P_{f*+g*+c}}.

      Since $s+k_f+k_g$ is even and $k_f+k_g<s-3$, then $s-2>1$ and $\frac{s+k_f+k_g-2}{2}\geq 1$ is an integer. 
  It implies that $3\mid {\rm wt}(\overline{{\bf{c}}}_{fg}(0))$ for any ${\overline{{\bf{c}}}_{fg}(0)}\in \overline{\mathcal{C}_{D_{fg}(0)}}$. 
  Then it turns out from Lemma \ref{lem.SO} that $\overline{\mathcal{C}_{D_{fg}(0)}}$ is self-orthogonal and we complete the whole proof.  
\end{proof}

  \begin{table}[H]
    \caption{The weight distribution of $\overline{\mathcal{C}_{D_{fg}(0)}}$ when $s+k_f+k_g$ is even.} 
    \label{tab: double even }       
    \centering
    
    \begin{tabular}{clllc}
            \hline
         &Weight $i$ & Multiplicity $A_i$ \\
            \hline
            \hline
             &$0$ & $1$ \\
             &$2\cdot3^{s-2}+2\varepsilon_f\varepsilon_g{(-3)}^{\frac{s+k_f+k_g-2}{2}}$ & $3^{s-k_f-k_g-1}-(-1)^{s+1}\varepsilon_f\varepsilon_g{(-3)}^{\frac{s-k_f-k_g-2}{2}}$ \\
             &$2\cdot3^{s-2}+\varepsilon_f\varepsilon_g{(-3)}^{\frac{s+k_f+k_g-2}{2}}$ & $4\cdot3^{s-k_f-k_g-1}+2(-1)^{s+1}\varepsilon_f\varepsilon_g{(-3)}^{\frac{s-k_f-k_g-2}{2}}-2$ \\
             &$2\cdot 3^{s-2}$ & $4 \cdot 3^{s-k_f-k_g-1}-(-1)^{s+1}\varepsilon_f\varepsilon_g{(-3)}^{\frac{s-k_f-k_g-2}{2}}-1$ \\ 
             &$3^{s-1}+\varepsilon_f\varepsilon_g{(-3)}^{\frac{s+k_f+k_g-2}{2}}$ & $2$\\
             &$2\cdot3^{s-2}-2\varepsilon_f\varepsilon_g{(-3)}^{\frac{s+k_f+k_g-2}{2}}$ &$3^{s+1}-3^{s-k_f-k_g+1}$\\
             \hline      
            \end{tabular}
      \end{table}
  
      \begin{example}\label{4+2+1,3+1+-1, even}
        Let $n=4$ and $m=3$, then $s=n+m=7$.  Let $f(x)=Tr_{3}^{4}(2x^{92})$ and  $g(x)=Tr_{3}^{3}(\omega x^{13}+\omega^7x^4+\omega^7x^3+\omega x^2)$, 
        where $\omega$ is a primitive element of $\F_{3^3}$ with $\omega^3+2\omega+1=0$. 
        On the one hand, it can be checked that $f(x)$ is a ternary weakly regular 2-plateaued function with $\widetilde{\mathcal{R}}_f(\alpha )\in \{0,-3^3\zeta_3^{f^*(\alpha)}\}$ 
        for all $\alpha \in \F_{3^3}$, where the sign of the Walsh transform of $f(x)$ is $\varepsilon_f=1$ and $f^*$ is an unbalanced function over $\F_{3^3}$ satisfying $f^*(0)=0$.
        On the other hand, it follows that $g(x)$ is a ternary weakly regular 1-plateaued function and the sign of the Walsh transform of $g(x)$ is $\varepsilon_g=-1$. 
        Therefore, it turns out from Theorem \ref{Th. double odd ternary} that for $\lambda\in \F_3^*$, $\overline{\mathcal{C}_{D_{fg}(0)}}$ 
        is a ternary four-weight self-orthogonal $[648,8,486]$ code and its weight enumerator is 
        $$1+24x^{324}+112x^{405}+104x^{486}+6320x^{648}.$$ 
        Verified by Magma \cite{Magma}, these results are true. 
      \end{example}

  \begin{theorem}\label{Th. double odd ternary}
    Let $s=n+m$, where $n$ and $m$ are two positive integers.
    Let $f,g\in \mathcal{WRP}$ with $\widetilde{\mathcal{R}}_f(\alpha )=\varepsilon_f{\sqrt{3^*}}^{n+k_f}\zeta_3^{f^*(\alpha)}$ 
    for every $\alpha \in \widetilde{\mathcal{S}\mathcal{R}}_f$ 
    and $\widetilde{\mathcal{R}}_g(\beta)=\varepsilon_g{\sqrt{3^*}}^{m+k_g}\zeta_3^{g^*(\beta)}$ 
    for every $\beta \in \widetilde{\mathcal{S}\mathcal{R}}_g$,    
  where $\varepsilon_f,\varepsilon_g\in \{ -1, 1\}$, $0\leq k_f\leq n$ and $0\leq k_g\leq m$.
  Assume that $s+k_f+k_g$ is odd and $k_f+k_g<s-3$,  
  then the following statements hold. 
  \begin{enumerate}
    \item [\rm (1)] If $\lambda=1$, then $\overline{\mathcal{C}_{D_{fg}(0)}}$ is a ternary six-weight self-orthogonal 
    $[3^{s-1}-\varepsilon_f\varepsilon_g{(-3)}^{\frac{s+k_f+k_g-1}{2}},s+1]$ code and its weight distribution is given in Table \ref{tab:double odd 1}. 
  
    \item [\rm (2)] If $\lambda=2$, then $\overline{\mathcal{C}_{D_{fg}(0)}}$ is a ternary six-weight self-orthogonal 
    $[3^{s-1}+\varepsilon_f\varepsilon_g{{(-3)}}^{\frac{s+k_f+k_g-1}{2}},s+1]$ code and its weight distribution is given in Table \ref{tab:double odd -1}. 
  \end{enumerate}
  \end{theorem}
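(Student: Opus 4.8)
The plan is to follow the template of the proof of Theorem~\ref{Th. double even ternary}, now invoking the odd-parity branches of the auxiliary lemmas. First I would pin down the length: writing the condition $f(x)+g(y)+\lambda=0$ as an average of additive characters, Equation~(\ref{eq.n_fg(0)}) becomes $n_{fg}(0)=3^{s-1}+\frac{1}{3}S_1$, and substituting the value of $S_1$ from the odd case of Lemma~\ref{S_1} gives $n_{fg}(0)=3^{s-1}-\varepsilon_f\varepsilon_g\,\eta_0(\lambda)\,(-3)^{\frac{s+k_f+k_g-1}{2}}$; since $\eta_0(1)=1$ and $\eta_0(2)=-1$ in $\F_3$, this is exactly $3^{s-1}-\varepsilon_f\varepsilon_g(-3)^{\frac{s+k_f+k_g-1}{2}}$ when $\lambda=1$ and $3^{s-1}+\varepsilon_f\varepsilon_g(-3)^{\frac{s+k_f+k_g-1}{2}}$ when $\lambda=2$. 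The dimension is $s+1$ just as in the even case: $\mathcal{C}_{D_{fg}(0)}$ has dimension $s$ and $\mathbf{1}\notin\mathcal{C}_{D_{fg}(0)}$ by \cite{Cheng Y.& Cao[1]}, so augmenting by $\mathbf{1}$ raises the dimension by one.

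Next I would evaluate ${\rm wt}(\overline{\mathbf{c}}_{fg}(0))$ for the codeword indexed by $(\alpha,\beta)\in\F_{3^n}\times\F_{3^m}$ and $\mu\in\F_3$ from the identity ${\rm wt}(\overline{\mathbf{c}}_{fg}(0))=2\cdot3^{s-2}+\frac{2}{9}S_1-\frac{1}{9}S_2-\frac{1}{9}S_3$ together with the odd branches of Lemmas~\ref{S_1}, \ref{S_2} and \ref{S_3}. When $(\alpha,\beta)\notin\widetilde{\mathcal{S}\mathcal{R}}_f\times\widetilde{\mathcal{S}\mathcal{R}}_g\setminus\{(0,0)\}$ one has $S_3=0$, which together with $S_2=0$ for $(\alpha,\beta)\ne(0,0)$ yields a single common weight; when $(\alpha,\beta)\in\widetilde{\mathcal{S}\mathcal{R}}_f\times\widetilde{\mathcal{S}\mathcal{R}}_g\cup\{(0,0)\}$ the weight splits according to whether $(\alpha,\beta)=(0,0)$, the residue of $f^*(\alpha)+g^*(\beta)+\lambda$ in $\{0,1,2\}$, and the value of $\mu$. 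The codeword at $(\alpha,\beta)=(0,0)$ with $\mu=0$ drops out as the zero codeword, while for $\mu\ne0$ it is $\mu\mathbf{1}$ of weight $n_{fg}(0)$; after fusing the off-support case with the ``residue $0$'' case (both having $S_3=0$) and simplifying the mixed powers of $3$ and $-3$, exactly six distinct nonzero weights survive, namely those in Tables~\ref{tab:double odd 1} and \ref{tab:double odd -1}, the two tables being pulled apart by the sign $\eta_0(\lambda)$ sitting inside $S_1$.

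For the multiplicities I would count, for each target residue, the number of admissible pairs $(\alpha,\beta)$: those with $f^*(\alpha)+g^*(\beta)+\lambda=0$ are given by $P_{f^*g^*}(0)$ from the odd case of Lemma~\ref{P_{f*,g*+c=0}}, those with value $1$ or $2$ by $P_{f^*g^*}(1)$ and $P_{f^*g^*}(2)$ from the odd case of Lemma~\ref{P_{f*+g*+c}}, while $|\widetilde{\mathcal{S}\mathcal{R}}_f\times\widetilde{\mathcal{S}\mathcal{R}}_g|=3^{s-k_f-k_g}$ by Proposition~\ref{dim}; multiplying by the number of $\mu\in\F_3$ realising each weight, adding the two $\mu\ne0$ codewords at the origin, and giving all remaining pairs the common off-support weight produces each $A_i$, which I would cross-check against the first Pless power moment $P_1$ (the $A_i$ must sum to $3^{s+1}$). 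Finally, since $k_f+k_g<s-3$ forces $s\ge4$ and, $s+k_f+k_g$ being odd, $s+k_f+k_g\ge5$, every power of $\pm3$ occurring in the tabulated weights is divisible by $3$, so each weight is a multiple of $3$ and Lemma~\ref{lem.SO} gives self-orthogonality. I expect the main obstacle to be precisely this frequency bookkeeping: unlike the even case, the three residues of $f^*(\alpha)+g^*(\beta)+\lambda$ now produce three \emph{distinct} weights counted by two different lemmas, so matching them with the three choices of $\mu$ and the two degenerate codewords at $(\alpha,\beta)=(0,0)$ --- while folding the mixed powers of $3$ and $-3$ into the clean table entries without double counting --- is the delicate step, the rest being a mechanical assembly from the auxiliary lemmas.
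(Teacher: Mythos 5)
Your proposal is correct and follows essentially the same route as the paper's own proof: the length from the odd branch of Lemma \ref{S_1} (your $\eta_0(\lambda)$ bookkeeping correctly reproduces the sign split between $\lambda=1$ and $\lambda=2$), the weights from the decomposition $2\cdot3^{s-2}+\tfrac{2}{9}S_1-\tfrac{1}{9}S_2-\tfrac{1}{9}S_3$ with the odd cases of Lemmas \ref{S_2} and \ref{S_3} (including the fusion of the off-support codewords with the residue-$0$ on-support codewords), the multiplicities from Lemmas \ref{P_{f*,g*+c=0}} and \ref{P_{f*+g*+c}}, and self-orthogonality from the divisibility of all weights by $3$ under $k_f+k_g<s-3$ via Lemma \ref{lem.SO}. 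The only addition beyond the paper's argument is your sanity check against the first Pless power moment, which is harmless.
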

  \begin{proof}

   (1) Since $\lambda=1$,  
  then by the definition of $n_{fg}(0)$ in Equation (\ref{eq.n_fg(0)}) and Lemma \ref{S_1}, 
  we have $n_{fg}(0)=3^{s-1}-\varepsilon_f\varepsilon_g(-3)^{\frac{s+k_f+k_g-1}{2}}$ when $s+k_f+k_g$ is odd.
  It clear the dimension of $\overline{\mathcal{C}_{D_{fg}(0)}}$ is $s+1$, 
  then this is a $[3^{s-1}-\varepsilon_f\varepsilon_g(-3)^{\frac{s+k_f+k_g-1}{2}},s+1]$ code.
   It follows from the definition of ${\rm wt}(\overline{{\bf{c}}}_{fg}(0))$ and Lemmas \ref{S_1}, \ref{S_2} and \ref{S_3} that  
   if $(\alpha, \beta) \notin\widetilde{\mathcal{S}\mathcal{R}}_f\times\widetilde{\mathcal{S}\mathcal{R}}_g \backslash \{(0,0)\}$, 
   then ${\rm wt}(\overline{{\bf{c}}}_{fg}(0))=2\cdot3^{s-2}+2\varepsilon_f\varepsilon_g{(-3)}^{\frac{s+k_f+k_g-3}{2}}$; 
   and if $(\alpha, \beta) \in\widetilde{\mathcal{S}\mathcal{R}}_f\times\widetilde{\mathcal{S}\mathcal{R}}_g\bigcup\{(0,0)\}$, then 
  $$ {\rm wt}(\overline{{\bf{c}}}_{fg}(0))=\left\{
        \begin{aligned}
        &0   &&\mu =0~{\rm and}~(\alpha,\beta)=(0,0), \\    
        &2\cdot3^{s-2}+2\varepsilon_f\varepsilon_g{(-3)}^{\frac{s+k_f+k_g-3}{2}}   &&(\alpha,\beta)\neq (0,0) ~{\rm and}~ f^*(\alpha)+g^*(\beta)+\lambda=0,\\
        &2\cdot3^{s-2}   &&\mu =0, (\alpha,\beta)\neq(0,0), f^*(\alpha)+g^*(\beta)+\lambda=1,\\
        &2\cdot 3^{s-2}+4\varepsilon_f\varepsilon_g{(-3)}^{\frac{s+k_f+k_g-3}{2}} && \mu =0, (\alpha,\beta)\neq(0,0), f^*(\alpha)+g^*(\beta)+\lambda=2,\\
        &2\cdot 3^{s-2}-\varepsilon_f\varepsilon_g{(-3)}^{\frac{s+k_f+k_g-1}{2}} &&\mu \neq 0, (\alpha,\beta)\neq(0,0), f^*(\alpha)+g^*(\beta)+\lambda=1,\\
        &2\cdot 3^{s-2}+\varepsilon_f\varepsilon_g{(-3)}^{\frac{s+k_f+k_g-3}{2}} &&\mu \neq 0, (\alpha,\beta)\neq(0,0), f^*(\alpha)+g^*(\beta)+\lambda=2,\\
        &3^{s-1}-\varepsilon_f\varepsilon_g{(-3)}^{\frac{s+k_f+k_g-1}{2}}  &&\mu \neq0~{\rm and}~(\alpha,\beta)=(0,0).\\
      \end{aligned}
        \right.
        $$

(2) Since $\lambda=2$,  
  then by the definition of $n_{fg}(0)$ in Equation (\ref{eq.n_fg(0)}) and Lemma \ref{S_1}, 
  we have $n_{fg}(0)=3^{s-1}+\varepsilon_f\varepsilon_g(-3)^{\frac{s+k_f+k_g-1}{2}}$ when $s+k_f+k_g$ is odd.
  It clear the dimension of $\overline{\mathcal{C}_{D_{fg}(0)}}$ is $s+1$, 
  then this is a $[3^{s-1}+\varepsilon_f\varepsilon_g(-3)^{\frac{s+k_f+k_g-1}{2}},s+1]$ code.
  It again follows from the definition of ${\rm wt}(\overline{{\bf{c}}}_{fg}(0))$ and Lemmas \ref{S_1}, \ref{S_2} and \ref{S_3} that  
if $(\alpha, \beta) \notin\widetilde{\mathcal{S}\mathcal{R}}_f\times\widetilde{\mathcal{S}\mathcal{R}}_g \backslash \{(0,0)\}$,
then ${\rm wt}(\overline{{\bf{c}}}_{fg}(0))=2\cdot3^{s-2}-2\varepsilon_f\varepsilon_g{(-3)}^{\frac{s+k_f+k_g-3}{2}}$; 
and if $(\alpha, \beta) \in\widetilde{\mathcal{S}\mathcal{R}}_f\times\widetilde{\mathcal{S}\mathcal{R}}_g\bigcup\{(0,0)\}$, then 
  $$ {\rm wt}(\overline{{\bf{c}}}_{fg}(0))=\left\{
        \begin{aligned}
        &0   &&\mu =0~{\rm and}~(\alpha,\beta)=(0,0), \\    
        &2\cdot3^{s-2}-2\varepsilon_f\varepsilon_g{{(-3)}}^{\frac{s+k_f+k_g-3}{2}}   &&
        (\alpha,\beta)\neq (0,0)~{\rm and}~ f^*(\alpha)+g^*(\beta)+\lambda=0,\\
        &2\cdot3^{s-2}-4\varepsilon_f\varepsilon_g{{(-3)}}^{\frac{s+k_f+k_g-3}{2}}   &&\mu =0, (\alpha,\beta)\neq(0,0), f^*(\alpha)+g^*(\beta)+\lambda=1,\\
        &2\cdot 3^{s-2} &&\mu =0, (\alpha,\beta)\neq(0,0), f^*(\alpha)+g^*(\beta)+\lambda=2,\\
        &2\cdot 3^{s-2}-\varepsilon_f\varepsilon_g{{(-3)}}^{\frac{s+k_f+k_g-3}{2}} &&\mu \neq 0, (\alpha,\beta)\neq(0,0), f^*(\alpha)+g^*(\beta)+\lambda=1,\\
        &2\cdot 3^{s-2}+\varepsilon_f\varepsilon_g{{(-3)}}^{\frac{s+k_f+k_g-1}{2}} &&\mu \neq 0, (\alpha,\beta)\neq(0,0), f^*(\alpha)+g^*(\beta)+\lambda=2,\\
        &3^{s-1}+\varepsilon_f\varepsilon_g{{(-3)}}^{\frac{s+k_f+k_g-1}{2}}   &&\mu \neq0~{\rm and}~(\alpha,\beta)=(0,0).\\
      \end{aligned}
        \right.
        $$
        The weight distributions of the codes are also straightly derived from Lemma \ref{P_{f*,g*+c=0}} and \ref{P_{f*+g*+c}}.
        
        Since $s+k_f+k_g$ is odd and $k_f+k_g<s-3$, then $s-2>1$ and $\frac{s+k_f+k_g-3}{2}\geq 1$ is an integer. 
    It implies that $3\mid {\rm wt}(\overline{{\bf{c}}}_{fg}(0))$ for any ${\bf c}\in \overline{\mathcal{C}_{D_{fg}(0)}}$. 
    Then it turns out from Lemma \ref{lem.SO} that $\overline{\mathcal{C}_{D_{fg}(0)}}$ is self-orthogonal and we complete the proof.  
      \end{proof}

      \begin{remark}\label{rem.1} 
        Note that if we take $m=k_f=k_g=0$ and $\varepsilon_g=1$, then Theorem \ref{Th. double even ternary} can yield an infinite family of ternary self-orthogonal 
        $[3^{n-1}+\varepsilon_f(-3)^{\frac{n-2}{2}},n+1]$ codes, which coincide with \cite[Theorem 1]{Ternary & bent functions}. 
        Hence, Theorem \ref{Th. double even ternary} can be seen as a generalization of \cite{Ternary & bent functions} and  
        it can produce ternary self-orthogonal codes with more flexible parameters.  
        In addition, it is easy to see that Theorem \ref{Th. double odd ternary} can also similarly yield new infinite families of ternary self-orthogonal codes with flexible parameters. 
      \end{remark}

      \begin{table}[H]
        \caption{The weight distribution of $\overline{\mathcal{C}_{D_{fg}(0)}}$ when $k_f+k_g$ is odd and $\lambda=1$.} 
        \label{tab:double odd 1}       
        \centering
        
        \begin{tabular}{clllc}
                \hline
             &Weight $i$ & Multiplicity $A_i$ \\
                \hline
                \hline
                 &$0$ & $1$ \\
                 &$2\cdot3^{s-2}+2\varepsilon_f\varepsilon_g{(-3)}^{\frac{s+k_f+k_g-3}{2}}$ & $3^{s+1}-2\cdot3^{s-k_f-k_g}+(-1)^{s}\varepsilon_f\varepsilon_g{(-3)}^{\frac{s-k_f-k_g+1}{2}}$ \\
                 &$2\cdot3^{s-2}$ & $3^{s-k_f-k_g-1}-1$ \\
                 &$2\cdot 3^{s-2}+4\varepsilon_f\varepsilon_g{(-3)}^{\frac{s+k_f+k_g-3}{2}}$ & $3^{s-k_f-k_g-1}+(-1)^{s}\varepsilon_f\varepsilon_g{(-3)}^{\frac{s-k_f-k_g-1}{2}}$\\ 
                 &$2\cdot 3^{s-2}-\varepsilon_f\varepsilon_g{(-3)}^{\frac{s+k_f+k_g-1}{2}}$ & $2 \cdot 3^{s-k_f-k_g-1}-2$\\
                 &$2\cdot 3^{s-2}+\varepsilon_f\varepsilon_g{(-3)}^{\frac{s+k_f+k_g-3}{2}}$ & $2\cdot 3^{s-k_f-k_g-1}+2 (-1)^{s}\varepsilon_f\varepsilon_g{(-3)}^{\frac{s-k_f-k_g-1}{2}}$\\
                 &$3^{s-1}-\varepsilon_f\varepsilon_g{(-3)}^{\frac{s+k_f+k_g-1}{2}}$ &$2$\\
                 \hline       
                \end{tabular}
          \end{table}

          \begin{table}[H]
            \caption{The weight distribution of $\overline{\mathcal{C}_{D_{fg}(0)}}$ when $k_f+k_g$ is odd and $\lambda=2$.} 
            \label{tab:double odd -1}       
            \centering
            
            \begin{tabular}{clllc}
                    \hline
                 &Weight $i$ & Multiplicity $A_i$ \\
                    \hline
                    \hline
                     &$0$ & $1$ \\
                     &$2\cdot3^{s-2}-2\varepsilon_f\varepsilon_g{(-3)}^{\frac{s+k_f+k_g-3}{2}}$ & $3^{s+1}-2\cdot3^{s-k_f-k_g}-(-1)^{s}\varepsilon_f\varepsilon_g{(-3)}^{\frac{s-k_f-k_g+1}{2}}$ \\
                     &$2\cdot3^{s-2}-4\varepsilon_f\varepsilon_g{(-3)}^{\frac{s+k_f+k_g-3}{2}}$ & $3^{s-k_f-k_g-1}-(-1)^{s}\varepsilon_f\varepsilon_g{(-3)}^{\frac{s-k_f-k_g-1}{2}}$ \\
                     &$2\cdot 3^{s-2}$ & $3^{s-k_f-k_g-1}-1$\\ 
                     &$2\cdot 3^{s-2}-\varepsilon_f\varepsilon_g{(-3)}^{\frac{s+k_f+k_g-3}{2}}$ & $2\cdot 3^{s-k_f-k_g-1}-2(-1)^{s}\varepsilon_f\varepsilon_g{(-3)}^{\frac{s-k_f-k_g-1}{2}}$\\
                     &$2\cdot 3^{s-2}+\varepsilon_f\varepsilon_g{(-3)}^{\frac{s+k_f+k_g-1}{2}}$ & $2\cdot 3^{s-k_f-k_g-1}-2$\\
                     &$3^{s-1}+\varepsilon_f\varepsilon_g{(-3)}^{\frac{s+k_f+k_g-1}{2}}$ &$2$\\
                     \hline      
                    \end{tabular}
                \end {table}

                    \begin{example}\label{4+2+1,3+0+-1, odd 1}
                      Let $n=4$ and $m=3$, then $s=n+m=7$.
                      Let $f(x)=Tr_{3}^{4}(2x^{92})$ and $g(x)=Tr_{3}^{3}(\xi x^4)$, 
                      where $\xi$ is a primitive element of $\F_{3^3}$ .
                      On the one hand, it can be checked that $f(x)$ is a ternary weakly regular 2-plateaued function with $\widetilde{\mathcal{R}}_f(\alpha )\in \{0,-3^3\zeta_3^{f^*(\alpha)}\}$ 
                      for all $\alpha \in \F_{3^4}$, where the sign of the Walsh transform of $f(x)$ is $\varepsilon_f=1$ and $f^*$ is an unbalanced function over $\F_{3^3}$ satisfying $f^*(0)=0$.
                      On the other hand, it follows that $g(x)$ is a ternary weakly regular 0-plateaued  and the sign of the Walsh transform of $g(x)$ is $\varepsilon_g=-1$.
                      Hence, it turns out from Theorem \ref{Th. double odd ternary} that for $\lambda=1$, $\overline{\mathcal{C}_{D_{fg}(0)}}$ is a ternary six-weight self-orthogonal [810,8,486] code 
                      and its weight enumerator is  $$1+80x^{486}+180x^{513}+6048x^{540}+160x^{567}+90x^{594}+2x^{810}.$$
                      Verified by Magma \cite{Magma}, these results are true.
                    \end{example}

              \begin{example}\label{3+1+-1,3+0+-1, odd -1}
                Let $n=3$ and $m=3$, then $s=m+n=6$.
                Let $f(x)=Tr_{3}^{3}(\xi^{22}x^{13}+\xi^7x^4+\xi x^2)$, where $\xi$ is a primitive element of $\F_{3^3}$, 
                then it follows that $f(x)$ is a ternary weakly regular 1-plateaued function with $\widetilde{\mathcal{R}}_f(\alpha )\in \{0,9\varepsilon_f\zeta_3^{f^*(\alpha)}\}=\{0,-9,-9\zeta_3,-9\zeta_3^2\}$ 
                for all $\alpha \in \F_{3^3}$, where the sign of the Walsh transform of $f(x)$ is $\varepsilon_f=-1$ and $f^*$ is an unbalanced function over $\F_{3^3}$ satisfying $f^*(0)=0$.
                Let $g(x)=Tr_{3}^{3}(\omega x^2)$, where $\omega$ is a primitive element of $\F_{3^3}$,
                then it can be checked that $g(x)$ is a ternary weakly regular 0-plateaued and the sign of the Walsh transform of $g(x)$ is $\varepsilon_g=-1$.
                Hence, it turns out from Theorem \ref{Th. double odd ternary} that for $\lambda=2$, $\overline{\mathcal{C}_{D_{fg}(0)}}$ is a ternary six-weight self-orthogonal [216,7,126] code 
                and its weight enumerator is $$1+72x^{126}+160x^{135}+1728x^{144}+144x^{153}+80x^{162}+2x^{216}.$$
                Verified by Magma \cite{Magma}, these results are true.
              \end{example}

              \subsection{Two new families of ternary self-orthogonal codes from $f \notin \mathcal{WRP}$ and $g \in \mathcal{WRP}$. }
              In this subsection, we let notations be the same as above and assume that $f(x)=Tr_{3}^{n}(x)$ and $g(y) \in \mathcal{WRP}$. 
              Let $s=n+m$ and denote by 
              \[\begin{split}
                D_{g}(0)&=\left\{(x,y)\in \F_{3^n}\times\F_{3^m}:Tr_{3}^{n}(x)+g(y)+\lambda=0\right\},
              \end{split}\] 
               where $\lambda \in \F_3^*$.
              Let $n_g(0)$ denote the length of the augmented codes $\overline{\mathcal{C}_{D_{g}(0)}}$ 
              and ${\rm wt}(\overline{{\bf{c}}}_{g}(0))$ denote the weight of nonzero codewords $\overline{{\bf{c}}}_g(0)$.
              It can be verified that 
    \begin{align}\label{eq.n_fg(0)}
      n_{g}(0)=\#\left\{(x,y)\in \F_{3^n}\times \F_{3^m}:Tr_3^n(x)+g(y)+\lambda=0\right\}
    \end{align}
    and then  
    \begin{align*}
      n_{g}(0)=&\frac{1}{3}\sum_{t\in \F_3}\sum_{(x,y)\in \F_{3^n}\times \F_{3^m}}\zeta_3^{t(Tr_{3}^{n}(x)+g(y)+\lambda)}
        =3^{s-1}.
    \end{align*}
                   Then we determine their weight distributions in the following theorems.

              \begin{theorem}\label{Th.simple even  0,SQ,NSQ}
              Let $g\in \mathcal{WRP}$ with $\widetilde{\mathcal{R}}_g(\beta)=\varepsilon_g{\sqrt{3^*}}^{m+k_g}\zeta_3^{g^*(\beta)}$ for every $\beta \in \widetilde{\mathcal{S}\mathcal{R}}_g$,
                where $\varepsilon_g\in \{ -1, 1\}$ and $0\leq k_g\leq m$.
                If $m+k_g$ is even and $\lambda \in \F_3^*$, then $\overline{\mathcal{C}_{D_{g}(0)}}$ is a four-weight self-orthogonal $[3^{s-1},s+1]$ ternary linear code with weight distribution given in Table \ref{tab:simple even 0,SQ,NSQ}. 
              \end{theorem}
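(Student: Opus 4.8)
The plan is to mimic the proof of Theorem~\ref{Th. double even ternary}, now with $f(x)=Tr_{3}^{n}(x)$: this $f$ plays the role of an ``$n$-plateaued'' function whose Walsh transform is supported on the single point $1\in\F_{3^n}$, so the character-sum apparatus carries over with $k_f$ replaced by $n$. First I would record the two structural facts. The length is $n_g(0)=3^{s-1}$; this is exactly the computation displayed just before the statement, which uses $\sum_{x\in\F_{3^n}}\zeta_3^{Tr_{3}^{n}(tx)}=0$ for $t\in\F_3^*$. The dimension of $\overline{\mathcal{C}_{D_{g}(0)}}$ is $s+1$: the natural parametrisation by $(\alpha,\beta,\mu)\in\F_{3^n}\times\F_{3^m}\times\F_3$ has a domain of size $3^{s+1}$, and it is a bijection onto the code because ${\bf{1}}\notin\mathcal{C}_{D_{g}(0)}$ by \cite{Cheng Y.& Cao[1]} while, as the weight count below shows, no nonzero $(\alpha,\beta)$ produces the zero vector when $\mu=0$. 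Hence $\overline{\mathcal{C}_{D_{g}(0)}}$ has parameters $[3^{s-1},s+1]$, just as in Theorem~\ref{Th. double even ternary}.

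Next I would determine the weight of the codeword $\overline{{\bf{c}}}_g(0)$ attached to $(\alpha,\beta,\mu)$ via ${\rm wt}(\overline{{\bf{c}}}_g(0))=n_g(0)-N_g(0)=3^{s-1}-N_g(0)$, where $N_g(0)$ is the quantity evaluated in Lemma~\ref{Ng(0)}(1) (recall $m+k_g$ is even). When $(\alpha,\beta)=(0,0)$ the codeword is $\mu{\bf{1}}$, of weight $0$ for $\mu=0$ and weight $3^{s-1}$ for $\mu\in\F_3^*$. When $(\alpha,\beta)\neq(0,0)$, substituting the three cases of Lemma~\ref{Ng(0)}(1) into $3^{s-1}-N_g(0)$ yields the weights
\[
2\cdot 3^{s-2}-2\varepsilon_g(-3)^{\frac{m+k_g}{2}}3^{n-2},\qquad
2\cdot 3^{s-2}+\varepsilon_g(-3)^{\frac{m+k_g}{2}}3^{n-2},\qquad
2\cdot 3^{s-2},
\]
according as $\alpha\in\F_3^*$, $\beta\in\widetilde{\mathcal{S}\mathcal{R}}_g$ and $g^*(\beta/\alpha)+\lambda-\mu/\alpha=0$; the same with nonzero value; or $(\alpha,\beta)$ fails one of the first two conditions. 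Together with $0$ and $3^{s-1}$, these are four nonzero weights, pairwise distinct because $g$ is non-trivial (so $k_g<m$); hence $\overline{\mathcal{C}_{D_{g}(0)}}$ is a four-weight code.

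For the weight distribution I would enumerate the $3^{s+1}$ indexing tuples. The weight $0$ occurs once and the weight $3^{s-1}$ exactly twice (the two choices $\mu\in\F_3^*$ at $(\alpha,\beta)=(0,0)$). There are $2\cdot 3^{m-k_g}$ pairs $(\alpha,\beta)$ with $\alpha\in\F_3^*$ and $\beta\in\widetilde{\mathcal{S}\mathcal{R}}_g$ (using $|\widetilde{\mathcal{S}\mathcal{R}}_g|=3^{m-k_g}$); for each such pair the expression $g^*(\beta/\alpha)+\lambda-\mu/\alpha$ runs bijectively over $\F_3$ as $\mu$ runs over $\F_3$, so the pair contributes one codeword to the first of the three weights above and two to the second---equivalently, one may sum $P_{g^*}(0)$ from Lemma~\ref{Pg*}(1) over $\mu\in\F_3$. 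This gives multiplicities $2\cdot 3^{m-k_g}$ and $4\cdot 3^{m-k_g}$; the remaining $3\big((3^s-1)-2\cdot 3^{m-k_g}\big)$ tuples all give weight $2\cdot 3^{s-2}$. One checks the multiplicities sum to $3^{s+1}$, which determines Table~\ref{tab:simple even 0,SQ,NSQ}. Finally, one verifies that each weight above is divisible by $3$ (this is where the hypothesis $m+k_g$ even, together with the mild size assumptions, is used), so Lemma~\ref{lem.SO} shows that $\overline{\mathcal{C}_{D_{g}(0)}}$ is self-orthogonal.

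The analytic ingredients (Lemmas~\ref{S_2}, \ref{S_4}, \ref{Pg*}, \ref{Ng(0)}) are already in place, so the real work is bookkeeping: partitioning the $3^{s+1}$ parameter tuples across the weight classes, handling the $\mu=0$ versus $\mu\in\F_3^*$ split, and---the point most easily overlooked---checking that the value $3^{s-1}$, which arises only from $(\alpha,\beta)=(0,0)$, does not collide with any of the three bulk weights, so that the code is genuinely four-weight.
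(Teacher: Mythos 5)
Your proposal is correct and follows essentially the same route the paper intends: the paper's own proof is only a one-line reference to the argument of Theorem~\ref{Th. double even ternary} together with Lemmas~\ref{S_5} and~\ref{Ng(0)}, and your write-up fills in exactly that skeleton (length $3^{s-1}$, dimension $s+1$, weights from $3^{s-1}-N_g(0)$ via Lemma~\ref{Ng(0)}(1), multiplicities from $|\widetilde{\mathcal{S}\mathcal{R}}_g|=3^{m-k_g}$ and the bijectivity in $\mu$, then Lemma~\ref{lem.SO}), with all values matching Table~\ref{tab:simple even 0,SQ,NSQ}. Your closing caveats (the implicit size assumptions needed for $3\mid{\rm wt}$ and for the weight $3^{s-1}$ not to collide with the others) are points the paper itself leaves unstated, so they do not indicate a gap in your argument.
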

              \begin{proof}
                 The proof is very similar to that of Theorem \ref{Th. double even ternary} and the main difference is that we use 
                 Lemmas \ref{S_5} and \ref{Ng(0)} here.  
                \end{proof}
                
                \begin{theorem}\label{Th.simple odd 0,SQ,NSQ}
                  Let $g\in \mathcal{WRP}$ with $\widetilde{\mathcal{R}}_g(\beta)=\varepsilon_g{\sqrt{3^*}}^{m+k_g}\zeta_p^{g^*(\beta)}$ for every $\beta \in \widetilde{\mathcal{S}\mathcal{R}}_g$,
                    where $\varepsilon_g\in \{ -1, 1\}$ and $0\leq k_g\leq m$.
                    If $m+k_g$ is odd and $\lambda \in \F_3^*$, then $\overline{\mathcal{C}_{D_{g}(0)}}$
                    is a four-weight self-orthogonal $[3^{s-1},s+1]$ ternary linear code with weight distribution given in Table \ref{tab:simple odd 0,SQ,NSQ}. 
                  \end{theorem}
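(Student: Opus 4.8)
The plan is to mimic the proof of Theorem~\ref{Th.simple even 0,SQ,NSQ}, replacing the even-parity estimate Lemma~\ref{Ng(0)}(1) by its odd-parity counterpart Lemma~\ref{Ng(0)}(2). First I would settle the length and the dimension. Since $f(x)=Tr_3^n(x)$ is balanced and surjective, the fibre $\{x\in\F_{3^n}:Tr_3^n(x)=v\}$ has $3^{n-1}$ elements for every $v\in\F_3$, so summing over $y\in\F_{3^m}$ gives $n_g(0)=\#D_g(0)=3^{n-1}\cdot 3^m=3^{s-1}$, in agreement with the computation of $n_g(0)$ recorded above. Because $\mathbf{1}\notin\mathcal{C}_{D_g(0)}$ and $\dim\mathcal{C}_{D_g(0)}=s$ by~\cite{Cheng Y.& Cao[1]}, the augmented code has dimension $s+1$; consequently $(\alpha,\beta,\mu)\mapsto\overline{\mathbf{c}}_g(0)$ is a bijection from $\F_{3^n}\times\F_{3^m}\times\F_3$ onto $\overline{\mathcal{C}_{D_g(0)}}$, which reduces the determination of the weight enumerator to counting triples $(\alpha,\beta,\mu)$.

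Next I would compute the weight of the codeword indexed by $(\alpha,\beta,\mu)$. Its number of zero coordinates is exactly $N_g(0)$ in the sense of Lemma~\ref{Ng(0)}, so ${\rm wt}(\overline{\mathbf{c}}_g(0))=3^{s-1}-N_g(0)$. If $(\alpha,\beta)=(0,0)$ the codeword is $\mu\mathbf{1}$, of weight $0$ when $\mu=0$ and $3^{s-1}$ when $\mu\neq0$. If $(\alpha,\beta)\neq(0,0)$, Lemma~\ref{Ng(0)}(2) gives ${\rm wt}(\overline{\mathbf{c}}_g(0))=2\cdot 3^{s-2}$ unless $\alpha\in\F_3^*$ and $\beta\in\widetilde{\mathcal{S}\mathcal{R}}_g$, in which case the weight equals $2\cdot 3^{s-2}-3^{n-2}\varepsilon_g(-3)^{\frac{m+k_g+1}{2}}$ when $g^*(\beta/\alpha)+\lambda-\mu/\alpha=1$ and $2\cdot 3^{s-2}+3^{n-2}\varepsilon_g(-3)^{\frac{m+k_g+1}{2}}$ when it equals $2$. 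Hence the only nonzero weights are the four values listed in Table~\ref{tab:simple odd 0,SQ,NSQ}.

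For the multiplicities I would argue elementarily rather than through a fresh character sum. The set $\widetilde{\mathcal{S}\mathcal{R}}_g$ has $3^{m-k_g}$ elements and, since $h_g$ (equivalently $l_g$) is even so that $\F_3^*$ acts trivially on $g$ and on $g^*$, it is stable under multiplication by $\F_3^*$; thus there are $2\cdot 3^{m-k_g}$ pairs $(\alpha,\beta)$ with $\alpha\in\F_3^*$ and $\beta\in\widetilde{\mathcal{S}\mathcal{R}}_g$, and for each such pair the map $\mu\mapsto g^*(\beta/\alpha)+\lambda-\mu/\alpha$ is a bijection of $\F_3$. Therefore each of the weights $2\cdot 3^{s-2}\pm 3^{n-2}\varepsilon_g(-3)^{\frac{m+k_g+1}{2}}$ occurs for exactly $2\cdot 3^{m-k_g}$ triples, the weight $3^{s-1}$ for $2$ triples, and the weight $2\cdot 3^{s-2}$ for the remaining $3^{s+1}-3-4\cdot 3^{m-k_g}$ triples, the total being $3^{s+1}$ as a consistency check. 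Finally, each of these four nonzero weights is divisible by $3$ (this uses $s\ge 3$ together with $n-2+\frac{m+k_g+1}{2}\ge 1$ for the two perturbed weights), so $\overline{\mathcal{C}_{D_g(0)}}$ is self-orthogonal by Lemma~\ref{lem.SO}.

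The main difficulty is organizational rather than analytic: all the substantive content sits in Lemma~\ref{Ng(0)} and in the facts $\dim\mathcal{C}_{D_g(0)}=s$ and $\mathbf{1}\notin\mathcal{C}_{D_g(0)}$ for the present mixed situation ($f=Tr_3^n\notin\mathcal{WRP}$, $g\in\mathcal{WRP}$), which I would import from~\cite{Cheng Y.& Cao[1]}. The error-prone step is the bookkeeping that sorts the $3^{s+1}$ triples into the five weight classes, correctly isolates the degenerate class $(\alpha,\beta)=(0,0)$, and confirms that the four listed nonzero values are pairwise distinct and $\equiv 0\pmod 3$ throughout the parameter range considered.
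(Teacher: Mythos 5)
Your proposal is correct and follows essentially the same route as the paper, which omits the proof as being analogous to that of Theorem \ref{Th.simple even  0,SQ,NSQ}: length $n_g(0)=3^{s-1}$, dimension $s+1$, weights computed as $3^{s-1}-N_g(0)$ via Lemma \ref{Ng(0)} (part (2) for odd $m+k_g$), and self-orthogonality from divisibility by $3$ together with Lemma \ref{lem.SO}. The one place you diverge is the multiplicity count: instead of the character-sum machinery (Lemma \ref{Pg*} and its relatives), you observe that for each of the $2\cdot 3^{m-k_g}$ pairs $(\alpha,\beta)$ with $\alpha\in\F_3^*$, $\beta\in\widetilde{\mathcal{S}\mathcal{R}}_g$, the map $\mu\mapsto g^*(\beta/\alpha)+\lambda-\mu/\alpha$ is a bijection of $\F_3$; this is a cleaner and entirely elementary way to reach the entries of Table \ref{tab:simple odd 0,SQ,NSQ}, and it agrees with them. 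You are also right to flag, as the theorem statement does not, that the divisibility-by-$3$ step silently requires $s\ge 3$ and $n-2+\frac{m+k_g+1}{2}\ge 1$; your proof is if anything slightly more careful than the paper's on this point.
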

                \begin{proof}
                     The proof is very similar to that of Theorem \ref{Th.simple even 0,SQ,NSQ} and we omit it here. 
                \end{proof}
              
                \begin{table}[H]
                  \caption{The weight distribution of $\overline{\mathcal{C}_{D_{g}(0)}}$ when $m+k_g$ is even.} 
                  \label{tab:simple even 0,SQ,NSQ}       
                  \centering
                  
                  \begin{tabular}{clllc}
                          \hline
                       &Weight $i$ & Multiplicity $A_i$ \\
                          \hline
                          \hline
                           &$0$ & $1$ \\
                           &$2\cdot3^{s-2}-2\cdot3^{n-2}\varepsilon_g{(-3)}^{\frac{m+k_g}{2}}$ & $2\cdot3^{m-k_g}$ \\
                           &$2\cdot3^{s-2}+3^{n-2}\varepsilon_g{(-3)}^{\frac{m+k_g}{2}}$ & $4\cdot3^{m-k_g}$ \\
                           &$2\cdot3^{s-2}$ &$3^{s+1}-2\cdot3^{m-k_g+1}-3$\\
                           &$3^{s-1}$ & $2$\\
                           \hline      
                          \end{tabular}
                    \end{table}
                    
                    \begin{table}[H]
                      \caption{The weight distribution of $\overline{\mathcal{C}_{D_{g}(0)}}$ when $m+k_g$ is odd.} 
                      \label{tab:simple odd 0,SQ,NSQ}       
                      \centering
                      
                      \begin{tabular}{clllc}
                              \hline
                           &Weight $i$ & Multiplicity $A_i$ \\
                              \hline
                              \hline
                               &$0$ & $1$ \\
                               &$2\cdot3^{s-2}-3^{n-2}\varepsilon_g{(-3)}^{\frac{m+k_g+1}{2}}$ & $2\cdot3^{m-k_g}$ \\
                               &$2\cdot3^{s-2}+3^{n-2}\varepsilon_g{(-3)}^{\frac{m+k_g+1}{2}}$ & $2\cdot3^{m-k_g}$ \\
                               &$2\cdot3^{s-2}$ &$3^{s+1}-4\cdot3^{m-k_g}-3$\\
                               &$3^{s-1}$ & $2$\\
                               \hline      
                              \end{tabular}
                        \end{table}
              
                        \begin{example}\label{1,4+0+-1, even}
                          Let $n=1$ and $m=4$, then $s=n+m=5$.
                          Let $g(x)=Tr_{3}^{4}(\xi x^2)$, where $\xi$ is a primitive element of $\F_{3^4}$,
                          it follows that $g(x)$ is a ternary weakly regular 0-plateaued function and the sign of the Walsh transform of $g(x)$ is $\varepsilon_g=-1$.
                          Therefore, it turns out from Theorem \ref{Th.simple even 0,SQ,NSQ} that $\overline{\mathcal{C}_{D_{g}(0)}}$ is a ternary four-weight self-orthogonal [81,6,51] code 
                          and its weight enumerator is  $$1+324x^{51}+240x^{54}+162x^{60}+2x^{81},$$
                          which is optimal according to \cite{Codetables}. 
                          Verified by Magma \cite{Magma}, these results are true. 
                        \end{example}

                        \begin{example}\label{1,3+0+-1, odd}
                        Let $n=1$ and $m=3$, then $s=n+m=4$.  
                        Let $g(x)=Tr_{3}^{3}(\xi x^4)$, where $\xi$ is a primitive element of $\F_{3^3}$,
                        it follows that $g(x)$ is a ternary weakly regular 0-plateaued function and the sign of the Walsh transform of $g(x)$ is $\varepsilon_g=-1$.
                        Therefore, it  turns out from Theorem \ref{Th.simple odd 0,SQ,NSQ} that $\overline{\mathcal{C}_{D_{g}(1)}}$  is a ternary four-weight self-orthogonal [27,5,15] code 
                        and its  weight enumerator is  $$1+54x^{15}+132x^{18}+54x^{21}+2x^{27},$$
                        which is almost optimal since the optimal code with length 27 and dimension 5 has minimum weight 16 by \cite{Codetables}. 
                        Verified by Magma \cite{Magma}, these results are true. 
                        \end{example}

    \section{Dual codes of several families of ternary self-orthogonal codes}\label{sec.Dual SO}

    In this section, we study dual codes of these ternary self-orthogonal linear codes constructed in Section \ref{sec.SO}. 
    Let $\overline{\mathcal{C}_{D_{fg}(0)}}^\bot$ and $\overline{\mathcal{C}_{D_{g}(0)}}^\bot$ 
    denote the dual codes of $\overline{\mathcal{C}_{D_{fg}(0)}}$ and  $\overline{\mathcal{C}_{D_{g}(0)}}$, respectively.  
    Next, we determine their parameters.

    \begin{theorem}\label{Th.double dual}
      Let $s=n+m$, where $n$ and $m$ are two positive integers.
      Let $f,g\in \mathcal{WRP}$ with $\widetilde{\mathcal{R}}_f(\alpha )=\varepsilon_f{\sqrt{p^*}}^{n+k_f}\zeta_p^{f^*(\alpha)}$ for every $\alpha \in \widetilde{\mathcal{S}\mathcal{R}}_f$
     and $\widetilde{\mathcal{R}}_g(\beta)=\varepsilon_g{\sqrt{p^*}}^{m+k_g}\zeta_p^{g^*(\beta)}$ for every $\beta \in \widetilde{\mathcal{S}\mathcal{R}}_g$,
      where $\varepsilon_f,\varepsilon_g\in \{ -1, 1\}$, $0\leq k_f\leq n$ and $0\leq k_g\leq m$.
     Then the following statements hold. 
       \begin{enumerate}
         \item [\rm (1)] If $s+k_f+k_g$ is even, $k_f+k_g<s-3$ and $\lambda \in \F_3^*$, then $\overline{\mathcal{C}_{D_{fg}(0)}}^\bot$ is a ternary $[3^{s-1}+\varepsilon_f\varepsilon_g{(-3)}^{\frac{s+k_f+k_g-2}{2}},3^{s-1}+\varepsilon_f\varepsilon_g{(-3)}^{\frac{s+k_f+k_g-2}{2}}-s-1,3]$ linear code.
         
         \item [\rm (2)] If $s+k_f+k_g$ is odd, $k_f+k_g<s-3$ and $\lambda=1$, then $\overline{\mathcal{C}_{D_{fg}(0)}}^\bot$ is a ternary $[3^{s-1}-\varepsilon_f\varepsilon_g{(-3)}^{\frac{s+k_f+k_g-1}{2}},3^{s-1}-\varepsilon_f\varepsilon_g{(-3)}^{\frac{s+k_f+k_g-1}{2}}-s-1,3]$ linear code.
   
         \item [\rm (3)] If $s+k_f+k_g$ is odd, $k_f+k_g<s-3$ and $\lambda=2$, then $\overline{\mathcal{C}_{D_{fg}(0)}}^\bot$ is a ternary $[3^{s-1}+\varepsilon_f\varepsilon_g{(-3)}^{\frac{s+k_f+k_g-1}{2}},3^{s-1}+\varepsilon_f\varepsilon_g{(-3)}^{\frac{s+k_f+k_g-1}{2}}-s-1,3]$ linear code.
       \end{enumerate}
        \end{theorem}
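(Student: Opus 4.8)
The length $N$ of $\overline{\mathcal{C}_{D_{fg}(0)}}$ (one of the three values displayed in the statement) and its dimension $s+1$ were already pinned down in Theorems~\ref{Th. double even ternary} and \ref{Th. double odd ternary}, so $\overline{\mathcal{C}_{D_{fg}(0)}}^{\bot}$ automatically has length $N$ and dimension $N-s-1$; the only thing to prove in all three cases is that the minimum distance $d^{\bot}$ equals $3$. The plan is to exploit the fact that the entire weight distribution of $\overline{\mathcal{C}_{D_{fg}(0)}}$ is known (Table~\ref{tab: double even } in case (1), Tables~\ref{tab:double odd 1} and \ref{tab:double odd -1} in cases (2) and (3)) and to read off the first dual weights $A_1^{\bot},A_2^{\bot},A_3^{\bot}$ from the first four Pless power moments $P_1$--$P_4$ with $q=3$ and $k=s+1$.

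Concretely, I would first compute $\sum_j jA_j$ from the relevant table and substitute into $P_2$ to solve for $A_1^{\bot}$, obtaining $A_1^{\bot}=0$; then compute $\sum_j j^{2}A_j$ and substitute into $P_3$ (now with $A_1^{\bot}=0$) to get $A_2^{\bot}=0$; finally compute $\sum_j j^{3}A_j$ and substitute into $P_4$ (with $A_1^{\bot}=A_2^{\bot}=0$) to obtain a closed form for $A_3^{\bot}$. The hypothesis $k_f+k_g<s-3$ together with the parity assumption forces $s\ge 4$, so all the half-integer exponents occurring in $N$ and in the weights are genuine positive integers and $N$ is large; this makes the leading term dominate and yields $A_3^{\bot}>0$, whence $d^{\bot}=3$. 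A conceptual shortcut for the vanishing of $A_1^{\bot}$ and $A_2^{\bot}$ is also available: fix a generator matrix of $\overline{\mathcal{C}_{D_{fg}(0)}}$ whose columns are indexed by $(x,y)\in D_{fg}(0)$ and whose last row is $\mathbf{1}$; every column has last entry $1$, so no column is zero ($A_1^{\bot}=0$), and if two distinct columns were $\F_3$-proportional, comparing last entries forces the constant to be $1$, and then comparing the trace rows gives $Tr_{3}^{n}(\gamma(x_1-x_2))=0$ and $Tr_{3}^{m}(\delta(y_1-y_2))=0$ for all $\gamma\in\F_{3^n}$, $\delta\in\F_{3^m}$, hence $(x_1,y_1)=(x_2,y_2)$, a contradiction ($A_2^{\bot}=0$).

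The step I expect to be the main obstacle is purely computational bookkeeping: substituting the five-weight (Table~\ref{tab: double even }) or six-weight (Tables~\ref{tab:double odd 1}, \ref{tab:double odd -1}) distributions into the cubic moment $P_4$ and simplifying, while keeping track of the sign $\varepsilon_f\varepsilon_g$, the parity of $s$, and the value $\lambda\in\{1,2\}$, so as to confirm positivity of $A_3^{\bot}$ uniformly. If this simplification becomes unwieldy, a cleaner alternative is to exhibit a weight-$3$ dual codeword directly: over $\F_3$ the constraint coming from the $\mathbf{1}$ row forces its three nonzero coordinates to carry equal coefficients, so one only needs three pairwise distinct points $(x_i,y_i)\in D_{fg}(0)$ with $x_1+x_2+x_3=0$ and $y_1+y_2+y_3=0$, and a short character-sum count in the spirit of Lemmas~\ref{P_{f*,g*+c=0}} and \ref{P_{f*+g*+c}} shows such triples exist once $N$ is large enough, which is guaranteed by $k_f+k_g<s-3$. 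Either route gives $d^{\bot}=3$ in all three cases, completing the proof.
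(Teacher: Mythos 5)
Your proposal follows essentially the same route as the paper: the length and dimension are inherited from Theorems~\ref{Th. double even ternary} and \ref{Th. double odd ternary}, $d^{\bot}\geq 2$ comes from the structure of the generator matrix, and the Pless power moments $P_3$ and $P_4$ applied to the known weight distributions give $A_2^{\bot}=0$ and $A_3^{\bot}>0$, hence $d^{\bot}=3$. Your added column-proportionality argument for $A_1^{\bot}=A_2^{\bot}=0$ and the direct weight-$3$ dual codeword construction are sound alternatives, but the core argument coincides with the paper's.
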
             
       \begin{proof}
         (1) Let $d(\overline{\mathcal{C}_{D_{fg}(0)}}^{\bot})$ denote the minimum weight of $\overline{\mathcal{C}_{D_{fg}(0)}}^\bot$.
         It follows from the definition of $\overline{\mathcal{C}_{D_{fg}(0)}}$ that $d(\overline{\mathcal{C}_{D_{fg}(0)}}^{\bot}) \geq 2$.
         Let $(1,A_1,A_2,\cdots,A_n)$ and $(1,A_1^\bot,A_2^\bot,\cdots,A_n^\bot)$ denote the weight distributions of $\overline{\mathcal{C}_{D_{fg}(0)}}$ and $\overline{\mathcal{C}_{D_{fg}(0)}}^\bot$, respectively.
         If $s+k_f+k_g$ is even,  it follows from the third Pless power moment $(P_3)$ that 
   \begin{align}\label{eq.double third Pless power moment}
     \sum_{j = 0}^{n}j^2A_j&=3^{k-2}\left(2n(3n-n+1)-(6n-3-2n+2)A_1^\bot+2A_2^\bot\right).
    \end{align}
    Substituting the weight distribution of $\overline{\mathcal{C}_{D_{fg}(0)}}$ we obtain in Theorem \ref{Th. double even ternary} 
    into Equation (\ref{eq.double third Pless power moment}), we immediately have 
   \begin{align*}
     A_2^\bot=0.
   \end{align*}    
     Note also that the fourth Pless power moment ($P_4$) yields that  
     \begin{align}\label{eq.double fourth Pless power moment}
     \sum_{j = 0}^{n}j^3A_j&=3^{k-3}[2n(3^2n^2-6n^2+9n-3+n^2-3n+2)-6A_3^\bot].  
   \end{align}
   By combining the weight distribution of $\overline{\mathcal{C}_{D_{fg}(0)}}$ we present in Theorem \ref{Th. double even ternary} with  Equation (\ref{eq.double fourth Pless power moment}), 
   we get 
   \begin{align*}
     A_3^\bot=&\left(10\cdot 3^{2s-2}-16\cdot3^{s-2}-32\cdot3^{2s-k_f-k_g-3}-3^{-1}\right)(-3)^{\frac{s+k_f+k_g-2}{2}}\\
     &-88\cdot3^{3s-k_f-k_g-6}-4\cdot3^{2s+2}-8\cdot 3^{2s-k_f-k_g-3}+20\cdot3^{2s-5}>0 
   \end{align*}
   and hence, $d(\overline{\mathcal{C}_{D_{fg}(0)}}^{\bot})=3$. This completes the desired result (1). 
   
   (2) and (3) By taking an argument completely similar to that of (1) above, 
   the desired results (2) and (3) hold. 
   \end{proof}  
    
   \begin{theorem}\label{Th.simple dual}
    Let $s=n+m$, where $n$ and $m$ are two positive integers.
    Let $\lambda \in \F_3^*$ and $g\in \mathcal{WRP}$ with $\widetilde{\mathcal{R}}_g(\beta)=\varepsilon_g{\sqrt{p^*}}^{m+k_g}\zeta_p^{g^*(\beta)}$ 
    for every $\beta \in \widetilde{\mathcal{S}\mathcal{R}}_g$, where $\varepsilon_g\in \{ -1, 1\}$ and $0\leq k_g\leq m$.   
    Then the following statements hold. 
    \begin{enumerate}
        \item [\rm (1)] If $m+k_g$ is even, then $\overline{\mathcal{C}_{D_{g}(0)}}^\bot$ is a ternary $[3^{s-1},3^{s-1}-s-1,3]$ linear code.
        
        \item [\rm (2)] If $m+k_g$ is odd, then $\overline{\mathcal{C}_{D_{g}(0)}}^\bot$ is a ternary $[3^{s-1},3^{s-1}-s-1,3]$ linear code. 
     \end{enumerate}
       \end{theorem}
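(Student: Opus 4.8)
The plan is to argue exactly as in the proof of Theorem~\ref{Th.double dual}: the length and dimension of $\overline{\mathcal{C}_{D_{g}(0)}}^\bot$ are forced, and the minimum distance is read off from the first four Pless power moments together with the weight distribution of $\overline{\mathcal{C}_{D_{g}(0)}}$ obtained in Theorems~\ref{Th.simple even 0,SQ,NSQ} and~\ref{Th.simple odd 0,SQ,NSQ}. Since $\overline{\mathcal{C}_{D_{g}(0)}}$ is a $[3^{s-1},s+1]$ code, its dual automatically has length $3^{s-1}$ and dimension $3^{s-1}-s-1$, so the entire content of the theorem is the claim $d(\overline{\mathcal{C}_{D_{g}(0)}}^{\bot})=3$; the two parity cases ($m+k_g$ even and odd) are handled identically, only substituting the respective table.

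First I would check that $A_1^{\bot}=0$, so that $d(\overline{\mathcal{C}_{D_{g}(0)}}^{\bot})\ge 2$. Because $\mathbf{1}\notin\mathcal{C}_{D_{g}(0)}$, we may take a generator matrix of $\overline{\mathcal{C}_{D_{g}(0)}}$ whose last row is $\mathbf{1}$, and then every column ends in the entry $1$ coming from $\mu$. Hence no column is the zero vector, i.e. no coordinate of $\overline{\mathcal{C}_{D_{g}(0)}}$ is identically zero, which forces $A_1^{\bot}=0$.

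Next I would pin down $d(\overline{\mathcal{C}_{D_{g}(0)}}^{\bot})=3$ by the Pless power moments. Using $P_3$ with $q=3$, length $3^{s-1}$, dimension $k=s+1$ and $A_1^{\bot}=0$ gives
\[
  \sum_{j=0}^{3^{s-1}} j^{2}A_j=3^{s-1}\bigl(2\cdot 3^{s-1}\bigl(2\cdot 3^{s-1}+1\bigr)+2A_2^{\bot}\bigr);
\]
substituting the explicit weight distribution of $\overline{\mathcal{C}_{D_{g}(0)}}$ from Table~\ref{tab:simple even 0,SQ,NSQ} (when $m+k_g$ is even) or Table~\ref{tab:simple odd 0,SQ,NSQ} (when $m+k_g$ is odd) into the left-hand side and solving for $A_2^{\bot}$ should yield $A_2^{\bot}=0$, the $\varepsilon_g$- and $(-3)^{(m+k_g)/2}$-dependent parts cancelling. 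Then, with $A_1^{\bot}=A_2^{\bot}=0$, the moment $P_4$ reduces to
\[
  \sum_{j=0}^{3^{s-1}} j^{3}A_j=3^{s-2}\bigl(2\cdot 3^{s-1}\bigl(4\cdot 3^{2s-2}+6\cdot 3^{s-1}-1\bigr)-6A_3^{\bot}\bigr),
\]
and the same substitution, solved for $A_3^{\bot}$, should produce an explicit value that is a positive integer for all admissible $s,n,m,k_g$ and all $\varepsilon_g\in\{-1,1\}$, $\lambda\in\F_3^*$, the terms of order $3^{2s}$ dominating the $(-3)^{(m+k_g)/2}$-corrections. Combining $A_1^{\bot}=A_2^{\bot}=0<A_3^{\bot}$ then gives the claimed minimum distance $3$.

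The step I expect to be the main obstacle is purely the bookkeeping in the last two displays: one has to push the weights listed in Tables~\ref{tab:simple even 0,SQ,NSQ} and~\ref{tab:simple odd 0,SQ,NSQ} (keeping in mind that $n$ may equal $1$, so that, e.g., $3^{n-2}\varepsilon_g(-3)^{(m+k_g)/2}$ must still be an integer, which is already guaranteed by the divisibility-by-$3$ argument used to prove those two theorems) through the $P_3$ and $P_4$ identities and verify both that the cancellations genuinely force $A_2^{\bot}=0$ and that the resulting closed form for $A_3^{\bot}$ is strictly positive uniformly in the parameters. If only the inequality $d(\overline{\mathcal{C}_{D_{g}(0)}}^{\bot})\le 3$ were wanted, one could instead exhibit a weight-$3$ parity check directly from three positions of $D_g(0)$ on which the coordinate functionals are linearly dependent; but the Pless-moment approach is preferable here since it simultaneously delivers $A_1^{\bot}=A_2^{\bot}=0$.
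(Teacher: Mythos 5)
Your proposal is correct and follows essentially the same route as the paper: the length and dimension are immediate, $A_1^{\bot}=A_2^{\bot}=0$ is extracted from the definition of the augmented code and the third Pless power moment, and $A_3^{\bot}>0$ from the fourth, using the weight distributions in Tables~\ref{tab:simple even 0,SQ,NSQ} and~\ref{tab:simple odd 0,SQ,NSQ}. Your explicit justification that no coordinate functional vanishes (every column of the generator matrix ends in the all-one row) is a slightly more careful version of the paper's one-line claim that $d^{\bot}\geq 2$, but the argument is otherwise identical.
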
             
      \begin{proof}
        Let $d^\bot$ denote the minimum weight of $\overline{\mathcal{C}_{D_{g}(0)}}^\bot$
        and it follows from the definition of $\overline{\mathcal{C}_{D_{g}(0)}}^\bot$ that $d^\bot \geq 2$.
        Let $(1,A_1,A_2,\cdots,A_n)$ and $(1,A_1^\bot,A_2^\bot,\cdots,A_n^\bot)$ denote the weight distributions of $\overline{\mathcal{C}_{D_{g}(0)}}$ and $\overline{\mathcal{C}_{D_{g}(0)}}^\bot$, respectively.
        
        For even $m+k_g$, it follows from the third Pless power moment ($P_3$) that 
  \begin{align}\label{eq.simple third Pless power moment}
    \sum_{j = 0}^{n}j^2A_j&=3^{k-2}\left(2n(3n-n+1)-(6n-3-2n+2)A_1^\bot+2A_2^\bot\right).
   \end{align}
   Substituting the  weight distribution of $\overline{\mathcal{C}_{D_{g}(0)}}$ we obtain in Theorem \ref{Th. double even ternary} into Equation (\ref{eq.simple third Pless power moment}), 
   we have 
  \begin{align*}
    A_2^\bot=0.
  \end{align*}    
    From the fourth Pless power moment, we also derive 
    \begin{align}\label{eq.simple fourth Pless power moment}
    \sum_{j = 0}^{n}j^3A_j&=3^{k-3}[2n(3^2n^2-6n^2+9n-3+n^2-3n+2)-6A_3^\bot].  
  \end{align}
  Substituting the weight distribution of $\overline{\mathcal{C}_{D_{g}(0)}}$ we obtain in Theorem \ref{Th. double even ternary} into Equation (\ref{eq.simple fourth Pless power moment}), then we obtain
  \begin{align*}
    A_3^\bot=3^{s-2}\left(3^{n-2}\left(2 \cdot \varepsilon_g(-3)^\frac{m+k_g}{2}+3^m\right)-1\right)>0,
  \end{align*}
which further deduces that $d^\bot=3$. 
  
  For odd $m+k_g$, it follows from Theorem \ref{Th.simple odd 0,SQ,NSQ} and Equation (\ref{eq.simple third Pless power moment}) that $$A_2^\bot=0.$$ 
  With Theorem \ref{Th.simple odd 0,SQ,NSQ} and Equation (\ref{eq.simple fourth Pless power moment}) again, we  have 
     \begin{align*}
      A_3^\bot=3^{s-2}\left(8\cdot3^{2s-2}+3^{s-2}-1\right)>0
    \end{align*}
  and hence, $d^\bot=3$. 
  This completes the proof. 
  \end{proof}

\begin{remark}
  Although the dual codes $\overline{\mathcal{C}_{D_{g}(0)}}^\bot$ in Parts (1) and (2)  of Theorem \ref{Th.simple dual} have the same parameters, 
  they must be not equivalent since their original codes $\overline{\mathcal{C}_{D_{g}(0)}}$ have different weight distributions according to Tables \ref{tab:simple even 0,SQ,NSQ} and Table \ref{tab:simple odd 0,SQ,NSQ}.   
\end{remark}

\section{Application to ternary LCD codes}\label{sec.LCD codes}

In this section, we apply our new ternary self-orthogonal codes constructed in Section \ref{sec.SO} to obtain new families of ternary LCD codes. 
To this end, we need to recall some useful definitions and results as follows. 

Let $I$ be an {\em identity matrix} and $O$ be a {\em zero matrix}. 
For any matrix $G$, if $GG^T=I$, we call $G$ a {\em row-orthogonal matrix}; 
and if $GG^T=O$, we call $G$ a {\em row-self-orthogonal matrix}. 
A {\em leading-systematic linear code} is referred as a linear code generated by a matrix of the form $G=(I \ A)$. 
Moreover, the matrix $G$ is called {\em the systematic generator matrix} of the code. 
The relationship between LCD codes and row-orthogonal matrices has been proposed in \cite{jg.LCD}. 
\begin{lemma}[\cite{jg.LCD}]\label{lm.jg.LCD}
    A leading-systematic linear code $\mathcal{C}$ is an LCD code if its systematic generator matrix $G=(I \ A)$ is row-orthogonal, 
    $i.e.,$ the matrix $A$ is row-self orthogonal.  
\end{lemma}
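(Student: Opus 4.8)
The plan is to reduce the claim to Massey's classical characterization of LCD codes, already recalled in the Introduction via \cite{LCD-Massey}: a linear code with generator matrix $G$ of full row rank is LCD if and only if $GG^T$ is nonsingular. Hence it suffices to verify that, under the stated hypothesis, the Gram matrix $GG^T$ is invertible.

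First I would note that $G=(I\ A)$ is a genuine generator matrix of $\mathcal{C}$: the leading $k\times k$ identity block forces $\rank(G)=k$, so $\dim\mathcal{C}=k$ and Massey's criterion indeed applies. Next, exploiting the block structure, I would compute
\[
GG^T=\begin{pmatrix} I & A\end{pmatrix}\begin{pmatrix} I\\ A^T\end{pmatrix}=I\cdot I^T+A\cdot A^T=I+AA^T.
\]
By hypothesis $A$ is row-self-orthogonal, i.e.\ $AA^T=O$ (equivalently, $G$ is row-orthogonal, $GG^T=I$); in either phrasing one obtains $GG^T=I$, which is plainly nonsingular. Therefore $\mathcal{C}$ is LCD.

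There is essentially no obstacle here: the statement is a one-line consequence of Massey's theorem once the block multiplication is carried out. The only points worth a brief remark are that one must check $G$ has full row rank so that it really is a generator matrix (immediate from the identity block), and that the argument is field-independent, hence in particular valid over $\F_3$, which is the setting needed for the ternary LCD codes constructed in this section.
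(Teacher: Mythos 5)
Your argument is correct and complete: the computation $GG^T = I\cdot I^T + AA^T = I + AA^T$ shows that row-orthogonality of $G$ is equivalent to row-self-orthogonality of $A$, and then $GG^T=I$ is nonsingular, so Massey's criterion (recalled in the paper's introduction from \cite{LCD-Massey}) gives that $\mathcal{C}$ is LCD. The paper itself offers no proof of this lemma --- it is simply quoted from \cite{jg.LCD} --- so there is nothing to compare against; your derivation is the standard one and fills that gap correctly, including the observation that the identity block guarantees $G$ has full row rank so that the criterion applies.
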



\begin{lemma}\label{lm.generator matrix}
  Let $p$ be an odd prime and $\xi$ be a generator element of $\F_{p^{s}}^*=\langle\xi\rangle$.
  For any $p$-ary linear augmented code $\overline{\mathcal{C}_{D}}$ defined in Equation (\ref{AC_D}) with the defining set $D$, 
  where $D=\{d_1,d_2,\cdots,d_t\}$, 
  a generator matrix of $\overline{\mathcal{C}_{D}}$ is given by 
  \begin{equation}\label{eq.generator matrix}
    G=\left(\begin{array}{cccc} 
    Tr_{p}^{s}(d_1) & Tr_{p}^{s}(d_2) &\cdots & Tr_{p}^{s}(d_t)  \\
    Tr_{p}^{s}(\xi d_1) & Tr_{p}^{s}(\xi d_2) &\cdots & Tr_{p}^{s}(\xi d_t)  \\
    \vdots & \vdots & \ddots  & \vdots \\
    Tr_{p}^{s}(\xi^{s-1}d_1) & Tr_{p}^{s}(\xi^{s-1}d_2) &\cdots & Tr_{p}^{s}(\xi^{s-1}d_t)  \\
    1 & 1 & \cdots & 1 \\
    \end{array}\right).
    \end{equation}
\end{lemma}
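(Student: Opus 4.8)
The plan is to show that the $\F_p$-linear row span of the matrix $G$ in Equation~(\ref{eq.generator matrix}) is precisely the code $\overline{\mathcal{C}_D}$, and then to check that $G$ has full row rank. The single structural fact that drives the whole argument is that, since $\xi$ generates the cyclic group $\F_{p^s}^*$, one has $\F_p(\xi)=\F_{p^s}$; consequently the minimal polynomial of $\xi$ over $\F_p$ has degree $s$, so $\{1,\xi,\dots,\xi^{s-1}\}$ is an $\F_p$-basis of $\F_{p^s}$. I would record this as the first step.

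Next I would prove $\overline{\mathcal{C}_D}\subseteq$ (row span of $G$). Take an arbitrary codeword $\left(Tr_{p}^{s}(xd_1),\dots,Tr_{p}^{s}(xd_t)\right)+\mu{\bf 1}$ with $x\in\F_{p^s}$, $\mu\in\F_p$, and expand $x=\sum_{i=0}^{s-1}a_i\xi^i$ in the basis above, with $a_i\in\F_p$. Since $Tr_{p}^{s}$ is $\F_p$-linear, $Tr_{p}^{s}(xd_j)=\sum_{i=0}^{s-1}a_i\,Tr_{p}^{s}(\xi^i d_j)$ for every $j$; hence the trace part of the codeword equals $\sum_{i=0}^{s-1}a_i\,r_{i+1}$, where $r_{i+1}$ denotes the $(i+1)$-st row of $G$, and the term $\mu{\bf 1}$ is $\mu$ times the last row of $G$. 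Thus the codeword lies in the row span of $G$.

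For the reverse inclusion I would note that every row of $G$ is itself a codeword of $\overline{\mathcal{C}_D}$: the $(i+1)$-st row (for $0\le i\le s-1$) arises by choosing $x=\xi^i$, $\mu=0$, and the all-ones row arises by choosing $x=0$, $\mu=1$. By $\F_p$-linearity of $\overline{\mathcal{C}_D}$, the entire row span of $G$ lies in $\overline{\mathcal{C}_D}$, so the two codes coincide and $G$ generates $\overline{\mathcal{C}_D}$. To see that $G$ is a genuine generator matrix of full row rank $s+1$, I would invoke $\mathbf 1\notin\mathcal{C}_D$ established in \cite{Cheng Y.& Cao[1]}, which forces $\dim\overline{\mathcal{C}_D}=s+1$; as $G$ has $s+1$ rows spanning a code of that dimension, its rows must be linearly independent.

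I do not anticipate a genuine obstacle: once the polynomial-basis fact is available, the remainder is routine linear algebra with the trace form. The only point that deserves an explicit sentence is that a \emph{multiplicative} generator $\xi$ of $\F_{p^s}^*$ automatically provides the \emph{additive} basis $1,\xi,\dots,\xi^{s-1}$ of $\F_{p^s}$ over $\F_p$; the rest is bookkeeping with the linearity of $Tr_{p}^{s}$.
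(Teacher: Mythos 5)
Your proposal is correct and follows essentially the same route as the paper, whose proof is a one-line appeal to the definition of $\overline{\mathcal{C}_{D}}$ together with the fact that $\{1,\xi,\dots,\xi^{s-1}\}$ is an $\F_p$-basis of $\F_{p^s}$. You simply spell out the two inclusions and the rank check (via $\mathbf{1}\notin\mathcal{C}_D$) that the paper leaves implicit.
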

\begin{proof}
  The generator matrix $G$ follows from the definition of the augmented code $\overline{\mathcal{C}_{D}}$ in Equation (\ref{AC_D}) 
  and the fact that $\{1,\xi,\cdots,\xi^{s-1}\}$ forms a basis of $\F_{p^s}$ over $\F_p$. 
\end{proof}

\begin{theorem}\label{Th. LCD even ternary}
  Let $s=n+m$, where $n$ and $m$ are two positive integers.
  Let $f,g\in \mathcal{WRP}$ with $\widetilde{\mathcal{R}}_f(\alpha )=\varepsilon_f{\sqrt{3^*}}^{n+k_f}\zeta_3^{f^*(\alpha)}$ for every $\alpha \in \widetilde{\mathcal{S}\mathcal{R}}_f$
  and $\widetilde{\mathcal{R}}_g(\beta)=\varepsilon_g{\sqrt{3^*}}^{m+k_g}\zeta_3^{g^*(\beta)}$ for every $\beta \in \widetilde{\mathcal{S}\mathcal{R}}_g$,
   where $\varepsilon_f,\varepsilon_g\in \{ -1, 1\}$, $0\leq k_f\leq n$ and $0\leq k_g\leq m$.
    Assume that $s+k_f+k_g$ is even and $k_f+k_g<s-3$.
    Let $\lambda \in \F_3^*$ and $\overline{\mathcal{C}_{D_{fg}(0)}}$ be defined in Equation (\ref{AC_D}),
    then the matrix $\overline{G}=(I \ G)$ generates a ternary LCD $[3^{s-1}+\varepsilon_f\varepsilon_g(-3)^{\frac{s+k_f+k_g-2}{2}}+s+1,s+1,d]$ LCD code $\mathcal{C}$, 
    where $d\geq \min\left\{1+2\cdot3^{s-2},1+2\cdot3^{s-2}+2\varepsilon_f\varepsilon_g(-3)^{\frac{s+k_f+k_g-2}{2}}\right\}$ 
    and $G$ is the generator matrix of $\overline{\mathcal{C}_{D_{fg}(0)}}$.  
    Besides, $\mathcal{C}^{\bot}$ is a ternary LCD  $[3^{s-1}+\varepsilon_f\varepsilon_g(-3)^{\frac{s+k_f+k_g-2}{2}}+s+1,3^{s-1}+\varepsilon_f\varepsilon_g(-3)^{\frac{s+k_f+k_g-2}{2}},3]$ code 
  and it is at least almost optimal with respect to the Sphere-packing bound given in Equation (\ref{eq.Sphere-packing}).    
  \end{theorem}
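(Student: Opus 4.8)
The plan is to obtain the LCD property from the self-orthogonality of $\overline{\mathcal{C}_{D_{fg}(0)}}$ already proved in Theorem~\ref{Th. double even ternary}, read off the length and dimension from the shape of $\overline{G}=(I\mid G)$, and then bound the minimum distances of $\mathcal{C}$ and $\mathcal{C}^{\bot}$ from the weight distribution in Table~\ref{tab: double even }.

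First I would record that, by Theorem~\ref{Th. double even ternary}, $\overline{\mathcal{C}_{D_{fg}(0)}}$ is self-orthogonal, so the generator matrix $G$ of Lemma~\ref{lm.generator matrix} (with the all-ones row included) satisfies $GG^{T}=O$; i.e.\ $G$ is row-self-orthogonal. Hence $\overline{G}\,\overline{G}^{T}=I\cdot I^{T}+GG^{T}=I$, so $\overline{G}=(I\mid G)$ is a row-orthogonal systematic generator matrix, and Lemma~\ref{lm.jg.LCD} gives that $\mathcal{C}$ is LCD; since the dual of an LCD code is LCD, $\mathcal{C}^{\bot}$ is LCD too. Because the left block of $\overline{G}$ is $I_{s+1}$, the matrix $\overline{G}$ has full rank $s+1$, so $\mathcal{C}$ has dimension $s+1$ and length $(s+1)+n_{fg}(0)$; since Theorem~\ref{Th. double even ternary} gives $n_{fg}(0)=3^{s-1}+\varepsilon_f\varepsilon_g(-3)^{\frac{s+k_f+k_g-2}{2}}$, this is the claimed length, and $\mathcal{C}^{\bot}$ then has the stated length and dimension $3^{s-1}+\varepsilon_f\varepsilon_g(-3)^{\frac{s+k_f+k_g-2}{2}}$.

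For the minimum distance of $\mathcal{C}$, every nonzero codeword has the form $(\mathbf{v},\mathbf{v}G)$ with $\mathbf{v}\in\F_3^{s+1}\setminus\{\mathbf{0}\}$, and since $\overline{G}$ has full rank $\mathbf{v}G$ is a nonzero codeword of $\overline{\mathcal{C}_{D_{fg}(0)}}$, so $\mathrm{wt}(\mathbf{v},\mathbf{v}G)=\mathrm{wt}(\mathbf{v})+\mathrm{wt}(\mathbf{v}G)\ge 1+\mathrm{wt}(\mathbf{v}G)$; comparing with the nonzero weights in Table~\ref{tab: double even } yields the asserted lower bound on $d$. For $\mathcal{C}^{\bot}$, a parity-check matrix of $\mathcal{C}$ is $(-G^{T}\mid I_{n_{fg}(0)})$, which generates $\mathcal{C}^{\bot}$, so $d(\mathcal{C}^{\bot})$ is the least number of $\F_3$-dependent columns of $\overline{G}$. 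The $s+1$ columns of $I_{s+1}$ are independent standard basis vectors, while every column of $G$ has last coordinate $1$ (the all-ones row) and is nonzero because $(0,0)\notin D_{fg}(0)$ (indeed $f(0)+g(0)+\lambda=\lambda\ne0$); hence no two columns of $\overline{G}$ are proportional and $d(\mathcal{C}^{\bot})\ge 3$. To get equality I would use that $D_{fg}(0)$ is invariant under $(x,y)\mapsto(-x,-y)$, which holds since $h_f,h_g$ are even so $f(-x)+g(-y)+\lambda=f(x)+g(y)+\lambda$; a pair $d,-d\in D_{fg}(0)$ then gives three dependent columns of $\overline{G}$ and hence a weight-$3$ codeword of $\mathcal{C}^{\bot}$, so $d(\mathcal{C}^{\bot})=3$. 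Finally, writing $\ell=3^{s-1}+\varepsilon_f\varepsilon_g(-3)^{\frac{s+k_f+k_g-2}{2}}+s+1$, the Sphere-packing bound~(\ref{eq.Sphere-packing}) applied with dimension $\ell-s-1$ and $d=5$ would force $3^{s+1}\ge 1+2\ell+2\ell(\ell-1)$, which fails because $\ell$ grows like $3^{s-1}$; thus no $[\ell,\ell-s-1,d']$ code with $d'\ge 5$ exists and the $[\ell,\ell-s-1,3]$ code $\mathcal{C}^{\bot}$ is at least almost optimal.

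The step I expect to be the main obstacle is the exact minimum-distance claim for $\mathcal{C}$: the crude estimate $\mathrm{wt}\ge 1+\mathrm{wt}(\mathbf{v}G)$ must be combined with a careful analysis of which weights in Table~\ref{tab: double even } can occur simultaneously with $\mathrm{wt}(\mathbf{v})=1$ — that is, when $(\alpha,\beta)$ is a scalar multiple of a single basis element with $\mu=0$, or $(\alpha,\beta)=0$ with $\mu\ne0$ — so that the small weights of $\overline{\mathcal{C}_{D_{fg}(0)}}$ attached to $(\alpha,\beta)\notin\widetilde{\mathcal{S}\mathcal{R}}_f\times\widetilde{\mathcal{S}\mathcal{R}}_g$ are properly controlled in the bound $\min\{1+2\cdot3^{s-2},\,1+2\cdot3^{s-2}+2\varepsilon_f\varepsilon_g(-3)^{\frac{s+k_f+k_g-2}{2}}\}$.
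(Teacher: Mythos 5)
Your proposal follows the paper's own route for the LCD property (self\--orthogonality of $\overline{\mathcal{C}_{D_{fg}(0)}}$ makes $G$ row-self-orthogonal, so $\overline{G}\,\overline{G}^{T}=I$ and Lemma \ref{lm.jg.LCD} applies), for the length and dimension, and for the sphere-packing step. It genuinely diverges at the upper bound $d(\mathcal{C}^{\bot})\le 3$: the paper imports this from Theorem \ref{Th.double dual} (the Pless-power-moment computation showing $A_3^{\bot}>0$ for $\overline{\mathcal{C}_{D_{fg}(0)}}^{\bot}$, hence three dependent columns already inside $G$), whereas you exhibit an explicit weight-$3$ dual codeword from the symmetry $D_{fg}(0)=-D_{fg}(0)$ (valid since $h_f,h_g$ are even): the columns indexed by $d$ and $-d$ sum to $2\,(0,\dots,0,1)^{T}$ and are therefore dependent together with the last column of $I$. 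This is more elementary and self-contained, at the price of using the homogeneity hypothesis on $\mathcal{WRP}$ explicitly.

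Two points need attention. First, your argument for $d(\mathcal{C}^{\bot})\ge 3$ is incomplete as written: knowing that every column of $G$ is nonzero with last coordinate $1$ does not exclude a column of $G$ being equal to $(0,\dots,0,1)^{T}$ (hence proportional to the last column of $I$), nor two columns of $G$ being equal to each other. Both are excluded by nondegeneracy of the trace form: $Tr_{3}^{s}(\xi^{j}d)=0$ for $j=0,\dots,s-1$ forces $d=0\notin D_{fg}(0)$. This is precisely the computation the paper carries out for the first case, while it gets pairwise independence of the columns of $G$ from $d\bigl(\overline{\mathcal{C}_{D_{fg}(0)}}^{\bot}\bigr)=3$. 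Second, the obstacle you flag for the bound on $d(\mathcal{C})$ is real, but it is a defect of the statement rather than of your write-up: the paper's proof also only uses $d\ge d\bigl(\overline{\mathcal{C}_{D_{fg}(0)}}\bigr)+1$ together with Table \ref{tab: double even }. Since the weight $2\cdot3^{s-2}-2\varepsilon_f\varepsilon_g(-3)^{\frac{s+k_f+k_g-2}{2}}$ occurs with multiplicity $3^{s+1}-3^{s-k_f-k_g+1}>0$ whenever $k_f+k_g>0$, and can already arise from a vector $\mathbf{v}$ of Hamming weight $1$ (take $\mu=0$ and $(\alpha,\beta)=c\xi^{j}\notin\widetilde{\mathcal{S}\mathcal{R}}_f\times\widetilde{\mathcal{S}\mathcal{R}}_g$, which is not excluded because $|\widetilde{\mathcal{S}\mathcal{R}}_f\times\widetilde{\mathcal{S}\mathcal{R}}_g|=3^{s-k_f-k_g}<3^{s}$), only $d\ge 1+2\cdot3^{s-2}-2\cdot3^{\frac{s+k_f+k_g-2}{2}}$ actually follows when $\varepsilon_f\varepsilon_g(-3)^{\frac{s+k_f+k_g-2}{2}}>0$; the stated $\min$ is justified by neither your argument nor the paper's in that case.
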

  \begin{proof}
  According to Theorem \ref{Th. double even ternary} and Lemma \ref{lm.jg.LCD}, the matrix $\overline{G}=(I \ G)$ generates a ternary LCD code $\mathcal{C}$, 
  where  

  \begin{equation}\label{eq.G.generator matrix}
    \overline{G}=\left(\begin{array}{ccccccccc} 
    1&0& \cdots&0&0& Tr_{3}^{s}(d_1) & Tr_{3}^{s}(d_2) &\cdots & Tr_{3}^{s}(d_t)  \\
    0&1& \cdots&0&0&Tr_{3}^{s}(\xi d_1) & Tr_{3}^{s}(\xi d_2) &\cdots  & Tr_{3}^{s}(\xi d_t)  \\
    \vdots&\vdots& \ddots&\vdots&\vdots&\vdots & \vdots & \ddots  & \vdots \\
    1&0& \cdots&1&0&Tr_{3}^{s}(\xi^{s-1}d_1) & Tr_{3}^{s}(\xi^{s-1}d_2) &\cdots & Tr_{3}^{s}(\xi^{s-1}d_t)  \\
    1&0& \cdots&0&1&1 & 1 & \cdots & 1 \\
    \end{array}\right)
    \end{equation}
    and $D_{fg}(0)=\{d_1,d_2,\cdots,d_t\}$. 
    Note also that a dual code of an LCD code is still LCD. Hence, $\mathcal{C}^{\bot}$ is also a ternary LCD code. 
    Let $d(\overline{\mathcal{C}_{D_{fg}(0)}})$ denote the minimum weight of $\overline{\mathcal{C}_{D_{fg}(0)}}$ and $d$ denote the minimum weight  of $\mathcal{C}$. 
    Then it is clear that $d\geq d(\overline{\mathcal{C}_{D_{fg}(0)}})+1$. 
    Moreover, it turns out from Table \ref{tab: double even } that $d\geq \min\left\{1+2\cdot3^{s-2},1+2\cdot3^{s-2}+2\varepsilon_f\varepsilon_g(-3)^{\frac{s+k_f+k_g-2}{2}}\right\}$. 
    
    By Theorem \ref{Th.double dual}, the minimum weight of the dual code $\overline{\mathcal{C}_{D_{fg}(0)}}^{\bot}$ is $3$, 
    which implies that any two columns of $G$ are linearly independent and there exist three dependent columns of $G$ over $\F_3$.
    Let $d^{\bot}$ denote the minimum weight of $\mathcal{C}^\bot$. Now, we  prove that  $d^{\bot}=3$.  
    Then it suffices to prove that the column vector $(0,0,\cdots,0,1)^T$ and any column of $G$ are linearly independent over $\F_3$. 
    Suppose that $(0,0,\cdots,0,1)^T$ and $(Tr_{3}^{s}(d_i),Tr_{3}^{s}(\xi d_i),\cdots,Tr_{3}^{s}(\xi^{s-1}d_i),1)^T$ are linearly dependent over $\F_3$
    for some $d_i\in D_{fg}(0)$, 
    then we have 
  $$ \left\{
    \begin{array}{l}
      Tr_{3}^{s}( d_i)= 0, \\
      Tr_{3}^{s}(\xi d_i)= 0, \\
       ~~~~~~~\vdots\\
      Tr_{3}^{s}(\xi^{s-1}d_i)= 0. \\ 
    \end{array}
  \right.
  $$
  For any $x=\sum_{i-0}^{s-1}k_i\xi^i \in \F_{3^{s}}$ and $k_i \in \F_3$,
  it then deduces that $$Tr_{3}^{s}(x d_i)=\sum_{i=0}^{s-1}k_iTr_{3}^{s}(\xi^i d_i)=0.$$
  Note that $Tr_{3}^{s}(x)$ is a $3$-ary function from $\F_{3^{s}}$ to $\F_3$ and $\vert {\rm Ker}(Tr_{3}^{s})\vert=3^{s-1}$ since $0 \notin D_{fg}(0)$ for any $\lambda \in \F_3^*$. 
  This produces a contradiction and hence, any two columns of $\overline{G}$ are linearly independent over $\F_3$.  
  That is $d^{\bot}=3$. 
  
  Moreover, for any fixed length $3^{s-1}+\varepsilon_f\varepsilon_g(-3)^{\frac{s+k_f+k_g-2}{2}}+s+1$ and dimension $3^{s-1}+\varepsilon_f\varepsilon_g(-3)^{\frac{s+k_f+k_g-2}{2}}$, 
  it follows from the Sphere-packing bound given in Equation (\ref{eq.Sphere-packing}) that $d^\bot$ is at most $4$. 
  Hence, $\mathcal{C}^\bot$ is at least almost optimal with respect to the sphere-packing bound.     
  
  We have completed the whole proof. 
\end{proof}

\begin{theorem}\label{Th. LCD odd ternary}
  Let $s=n+m$, where $n$ and $m$ are two positive integers.
  Let $f,g\in \mathcal{WRP}$ with $\widetilde{\mathcal{R}}_f(\alpha )=\varepsilon_f{\sqrt{3^*}}^{n+k_f}\zeta_3^{f^*(\alpha)}$ for every $\alpha \in \widetilde{\mathcal{S}\mathcal{R}}_f$
  and $\widetilde{\mathcal{R}}_g(\beta)=\varepsilon_g{\sqrt{3^*}}^{m+k_g}\zeta_3^{g^*(\beta)}$ for every $\beta \in \widetilde{\mathcal{S}\mathcal{R}}_g$,
   where $\varepsilon_f,\varepsilon_g\in \{ -1, 1\}$, $0\leq k_f\leq n$ and $0\leq k_g\leq m$.
    Assume that $s+k_f+k_g$ is odd, $k_f+k_g<s-3$ and $\lambda \in \F_3^*$, 
    $\overline{\mathcal{C}_{D_{fg}(0)}}$ be defined in Equation (\ref{AC_D}) with a generator matrix $G$. 
    Then the following statements hold.  
\begin{enumerate}
  \item [\rm (1)] If $\lambda=1$, then the matrix $\overline{G}=(I\ G)$ generates a ternary LCD  $[3^{s-1}-\varepsilon_f\varepsilon_g(-3)^{\frac{s+k_f+k_g-1}{2}}+s+1,s+1,d]$ code $\mathcal{C}$, 
  where $d\geq \min\left\{1+2\cdot3^{s-2},1+2\cdot3^{s-2}+4\varepsilon_f\varepsilon_g(-3)^{\frac{s+k_f+k_g-3}{2}}\right\}$.
  Besides, $\mathcal{C}^{\bot}$ is a ternary LCD $[3^{s-1}-\varepsilon_f\varepsilon_g(-3)^{\frac{s+k_f+k_g-1}{2}}+s+1,3^{s-1}-\varepsilon_f\varepsilon_g(-3)^{\frac{s+k_f+k_g-1}{2}},3]$ code 
  and it is at least almost optimal with respect to the sphere-packing bound given in Equation (\ref{eq.Sphere-packing}). 
  \item [\rm (2)] If $\lambda=2$, then the matrix $\overline{G}=(I\ G)$ generates a ternary LCD $[3^{s-1}+\varepsilon_f\varepsilon_g(-3)^{\frac{s+k_f+k_g-1}{2}}+s+1,s+1,d]$ code $\mathcal{C}$, 
  where $d\geq \min\left\{1+2\cdot3^{s-2},1+2\cdot3^{s-2}-4\varepsilon_f\varepsilon_g(-3)^{\frac{s+k_f+k_g-3}{2}}\right\}$. 
  Besides, $\mathcal{C}^{\bot}$ is a ternary LCD $[3^{s-1}+\varepsilon_f\varepsilon_g(-3)^{\frac{s+k_f+k_g-1}{2}}+s+1,3^{s-1}+\varepsilon_f\varepsilon_g(-3)^{\frac{s+k_f+k_g-1}{2}},3]$ code 
  and it is at least almost optimal with respect to the sphere-packing bound given in Equation (\ref{eq.Sphere-packing}).
\end{enumerate}
\end{theorem}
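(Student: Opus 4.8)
The plan is to transcribe the proof of Theorem \ref{Th. LCD even ternary} almost verbatim, replacing the even self-orthogonal code of Theorem \ref{Th. double even ternary} by its odd counterpart from Theorem \ref{Th. double odd ternary}. Since $s+k_f+k_g$ is odd and $k_f+k_g<s-3$, that theorem shows $\overline{\mathcal{C}_{D_{fg}(0)}}$ is a self-orthogonal $[n_{fg}(0),s+1]$ code, with $n_{fg}(0)=3^{s-1}-\varepsilon_f\varepsilon_g(-3)^{(s+k_f+k_g-1)/2}$ if $\lambda=1$ and $n_{fg}(0)=3^{s-1}+\varepsilon_f\varepsilon_g(-3)^{(s+k_f+k_g-1)/2}$ if $\lambda=2$. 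Let $G$ be the generator matrix of $\overline{\mathcal{C}_{D_{fg}(0)}}$ given in Lemma \ref{lm.generator matrix}. Self-orthogonality is exactly the statement that $G$ is row-self-orthogonal, i.e.\ $GG^{T}=O$, so $\overline{G}=(I\ G)$ is row-orthogonal and Lemma \ref{lm.jg.LCD} gives that the code $\mathcal{C}$ generated by $\overline{G}$ is a ternary LCD $[n_{fg}(0)+s+1,\,s+1]$ code; as the dual of an LCD code, $\mathcal{C}^{\bot}$ is also LCD, with length $n_{fg}(0)+s+1$ and dimension $n_{fg}(0)$.

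For the minimum distance $d$ of $\mathcal{C}$, I would use that $G$ has full row rank $s+1$, so every nonzero codeword of $\mathcal{C}$ has the shape $(\aaa,\aaa G)$ with $\aaa\neq\mathbf{0}$ and $\aaa G$ a nonzero codeword of $\overline{\mathcal{C}_{D_{fg}(0)}}$; hence $\wt((\aaa,\aaa G))=\wt(\aaa)+\wt(\aaa G)\geq 1+d(\overline{\mathcal{C}_{D_{fg}(0)}})$. To evaluate $d(\overline{\mathcal{C}_{D_{fg}(0)}})$ I would pull the common factor $(-3)^{(s+k_f+k_g-3)/2}$ out of each nonzero weight in Table \ref{tab:double odd 1} (for $\lambda=1$), respectively Table \ref{tab:double odd -1} (for $\lambda=2$): the hypothesis $k_f+k_g<s-3$ makes $(s+k_f+k_g-3)/2$ a positive integer, forces the full-length weight $n_{fg}(0)$ in the last table row to exceed $2\cdot 3^{s-2}$, and keeps every weight positive. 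Since all the remaining entries equal $2\cdot 3^{s-2}$ plus an integer multiple of the single quantity $\varepsilon_f\varepsilon_g(-3)^{(s+k_f+k_g-3)/2}$, whose sign is fixed, one reads off $d(\overline{\mathcal{C}_{D_{fg}(0)}})=\min\{2\cdot 3^{s-2},\,2\cdot 3^{s-2}+4\varepsilon_f\varepsilon_g(-3)^{(s+k_f+k_g-3)/2}\}$ when $\lambda=1$, and the same expression with $-4$ in place of $+4$ when $\lambda=2$; adding $1$ gives the two asserted lower bounds on $d$.

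Turning to $\mathcal{C}^{\bot}$, its length and dimension are already fixed, so only $d^{\bot}=3$ remains. Here $d^{\bot}$ is the least number of $\F_3$-linearly dependent columns of the parity-check matrix $\overline{G}=(I\ G)$ of $\mathcal{C}^{\bot}$. No column is zero, so $d^{\bot}\geq 2$. To see that no two columns are dependent: two unit columns are independent; a unit column $e_i$ with $i\leq s$ and any column of $G$ differ in the last coordinate; two columns of $G$ are independent because Theorem \ref{Th.double dual} gives $d(\overline{\mathcal{C}_{D_{fg}(0)}}^{\bot})=3$; and if the last unit column were dependent with a column $(Tr_{3}^{s}(d_j),\dots,Tr_{3}^{s}(\xi^{s-1}d_j),1)^{T}$ of $G$, then $Tr_{3}^{s}(\xi^{i}d_j)=0$ for $0\leq i\leq s-1$, hence $Tr_{3}^{s}(xd_j)=0$ for every $x\in\F_{3^{s}}$, which is impossible since $d_j\neq 0$ (as $0\notin D_{fg}(0)$ for $\lambda\in\F_3^{*}$). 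Thus $d^{\bot}\geq 3$, and $d(\overline{\mathcal{C}_{D_{fg}(0)}}^{\bot})=3$ already exhibits three dependent columns among the columns of $G$, so $d^{\bot}=3$. Finally, feeding length $n_{fg}(0)+s+1$ and dimension $n_{fg}(0)$ into the Sphere-packing bound in Equation (\ref{eq.Sphere-packing}) forces $d^{\bot}\leq 4$, so $\mathcal{C}^{\bot}$ is at least almost optimal.

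Conceptually nothing new beyond Theorem \ref{Th. LCD even ternary} is needed; the only mildly delicate point, and the place I expect to spend the most care, is the bookkeeping in the second paragraph---normalizing the six weights of Table \ref{tab:double odd 1} or Table \ref{tab:double odd -1} by a common power of $-3$, tracking the sign of $\varepsilon_f\varepsilon_g(-3)^{(s+k_f+k_g-3)/2}$, and using $k_f+k_g<s-3$ both to guarantee positivity of all the weights and to single out the $\pm 4$ entry as the extremal one.
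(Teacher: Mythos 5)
Your proposal is correct and follows exactly the route the paper takes: its own proof of this theorem is a one-line deferral to the argument of Theorem \ref{Th. LCD even ternary}, invoking Theorems \ref{Th. double odd ternary} and \ref{Th.double dual} together with Lemma \ref{lm.jg.LCD}, which is precisely what you have written out in detail (including the column-independence argument for $d^{\bot}=3$ and the sphere-packing check). The only part the paper leaves implicit and you make explicit is the normalization of the six weights in Tables \ref{tab:double odd 1} and \ref{tab:double odd -1} by the common factor $\varepsilon_f\varepsilon_g(-3)^{(s+k_f+k_g-3)/2}$, and that bookkeeping is done correctly.
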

\begin{proof}
  The proof is very similar to that of Theorem \ref{Th. LCD even ternary} and the main difference is that we use Theorems \ref{Th. double even ternary}, \ref{Th. double odd ternary},  \ref{Th.double dual}
  and Lemma \ref{lm.jg.LCD} here.  
\end{proof}

\begin{theorem}\label{Th. LCD simple even ternary}
  Let $s=n+m$, where $n$ and $m$ are two positive integers.
  Let $g\in \mathcal{WRP}$ with $\widetilde{\mathcal{R}}_g(\beta)=\varepsilon_g{\sqrt{3^*}}^{m+k_g}\zeta_3^{g^*(\beta)}$
for $\beta \in \widetilde{\mathcal{S}\mathcal{R}}_g$, 
where $\varepsilon_g\in \{ -1, 1\}$ and $0\leq k_g\leq m$.    
Let codes $\overline{\mathcal{C}_{D_{g}(0)}}$ be defined in Equation (\ref{AC_D})
and $G$ be the generator matrix of $\overline{\mathcal{C}_{D_{g}(0)}}$.
Let $\lambda \in \F_3^*$,  then we have the following.
  \begin{enumerate}
    \item [\rm (1)] If $m+k_g$ be even, then the matrix $\overline{G}=(I\ G)$ generates a ternary LCD $[3^{s-1}+s+1,s+1,d]$ code $\mathcal{C}$,
    where $d\geq min\left\{1+2\cdot3^{s-2}-2\cdot3^{n-2}\varepsilon_g(-3)^{\frac{m+k_g}{2}},
    1+2\cdot3^{s-2}+3^{n-2}\varepsilon_g(-3)^{\frac{m+k_g}{2}}\right\}$.
    Besides, $\mathcal{C}^{\bot}$ is a ternary LCD $[3^{s-1}+s+1,3^{s-1},3]$ code
    which is at least almost optimal with respect to the sphere-packing bound given in Equation (\ref{eq.Sphere-packing}). 
    \item [\rm (2)] If $m+k_g$ be odd, then the matrix $\overline{G}=(I\ G)$ generates a ternary LCD $[3^{s-1}+s+1,s+1,d]$ code $\mathcal{C}$,
    where $d \geq min\left\{1+2\cdot3^{s-2}-3^{n-2}\varepsilon_g(-3)^{\frac{m+k_g}{2}},
    1+2\cdot3^{s-2}+3^{n-2}\varepsilon_g(-3)^{\frac{m+k_g}{2}}\right\}$.
    Besides, $\mathcal{C}^{\bot}$ is a ternary LCD $[3^{s-1}+s+1,3^{s-1},3]$ code
    which is at least almost optimal with respect to the sphere-packing bound given in Equation (\ref{eq.Sphere-packing}). 
  \end{enumerate}
\end{theorem}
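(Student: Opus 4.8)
The plan is to transcribe the proof of Theorem \ref{Th. LCD even ternary} to the present setting, this time feeding in the self-orthogonal codes $\overline{\mathcal{C}_{D_{g}(0)}}$ from Theorems \ref{Th.simple even 0,SQ,NSQ} and \ref{Th.simple odd 0,SQ,NSQ} and their duals from Theorem \ref{Th.simple dual}. First I would observe that, by Theorems \ref{Th.simple even 0,SQ,NSQ} and \ref{Th.simple odd 0,SQ,NSQ}, the code $\overline{\mathcal{C}_{D_{g}(0)}}$ is self-orthogonal, so the generator matrix $G$ of $\overline{\mathcal{C}_{D_{g}(0)}}$ described in Lemma \ref{lm.generator matrix} is row-self-orthogonal, i.e. $GG^{T}=O$. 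Lemma \ref{lm.jg.LCD} then shows that $\overline{G}=(I\ G)$ generates a ternary LCD code $\mathcal{C}$, and since the dual of an LCD code is again LCD, $\mathcal{C}^{\bot}$ is LCD as well. Counting the columns of $\overline{G}$, the length of $\mathcal{C}$ is $n_{g}(0)+(s+1)=3^{s-1}+s+1$ and its dimension is $s+1$ in both parts.

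Next I would determine the two distances. For $d=d(\mathcal{C})$, the systematic form $\overline{G}=(I\ G)$ forces $d\ge 1+d(\overline{\mathcal{C}_{D_{g}(0)}})$, and the explicit lower bounds of Parts (1) and (2) are obtained by taking the two smallest nonzero weights appearing in Table \ref{tab:simple even 0,SQ,NSQ} (even case) and Table \ref{tab:simple odd 0,SQ,NSQ} (odd case) and adding $1$. For $d(\mathcal{C}^{\bot})$, Theorem \ref{Th.simple dual} gives $d(\overline{\mathcal{C}_{D_{g}(0)}}^{\bot})=3$ in both parities, so any two columns of $G$ are $\F_{3}$-linearly independent while some three are dependent; hence it only remains to rule out that the extra unit column $(0,\dots,0,1)^{T}$ of $\overline{G}$ is a scalar multiple of a column $(Tr_{3}^{s}(d_{i}),\dots,Tr_{3}^{s}(\xi^{s-1}d_{i}),1)^{T}$. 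Exactly as in the proof of Theorem \ref{Th. LCD even ternary}, such a dependence would give $Tr_{3}^{s}(\xi^{j}d_{i})=0$ for all $0\le j\le s-1$, whence $Tr_{3}^{s}(xd_{i})=0$ for every $x\in\F_{3^{s}}$ after expanding $x$ in the basis $\{1,\xi,\dots,\xi^{s-1}\}$; this contradicts the fact that $x\mapsto Tr_{3}^{s}(xd_{i})$ is a nonzero $\F_{3}$-linear functional, the nonzeroness coming from $d_{i}\ne 0$, which holds because $0\notin D_{g}(0)$ whenever $\lambda\in\F_{3}^{*}$. Therefore $d(\mathcal{C}^{\bot})=3$.

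Finally, for the fixed length $3^{s-1}+s+1$ and dimension $3^{s-1}$, the Sphere-packing bound in Equation (\ref{eq.Sphere-packing}) forces the minimum distance of any such code to be at most $4$, so $\mathcal{C}^{\bot}$ is at least almost optimal, which finishes Parts (1) and (2). I do not expect a genuine obstacle: the only step with any content is the verification $d(\mathcal{C}^{\bot})=3$, and that is essentially a copy of the corresponding step in Theorem \ref{Th. LCD even ternary}. The one point to be careful about is bookkeeping --- using $n_{g}(0)=3^{s-1}$ consistently, and invoking the hypothesis $\lambda\in\F_{3}^{*}$ (which guarantees $0\notin D_{g}(0)$) precisely where the kernel argument needs it.
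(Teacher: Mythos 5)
Your proposal is correct and follows exactly the route the paper intends: the paper's own proof of this theorem is a one-line reference saying it is "similar to Theorem \ref{Th. LCD even ternary}" using Theorems \ref{Th.simple even 0,SQ,NSQ}, \ref{Th.simple odd 0,SQ,NSQ}, \ref{Th.simple dual} and Lemma \ref{lm.jg.LCD}, and you have simply carried out that transcription in full, including the key kernel argument showing $d(\mathcal{C}^{\bot})=3$ and the sphere-packing step for almost optimality.
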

\begin{proof}
  The proof is similar to Theorem \ref{Th. LCD even ternary} and we obtain the desired results from Theorems \ref{Th.simple even 0,SQ,NSQ}, \ref{Th.simple odd 0,SQ,NSQ}, \ref{Th.simple dual}
  and Lemma \ref{lm.jg.LCD}.  
\end{proof}

\begin{example}\label{LCD 3+0+-1,3+1+-1, odd -1}
  Let $G$ denote the generator matrix of code $\overline{\mathcal{C}_{D_{fg}(0)}}$ in Example \ref{3+1+-1,3+0+-1, odd -1}. 
  According to Theorem \ref{Th. LCD even ternary},  $(I\ G)$ generates a ternary LCD $[223,7,127]$ code $\mathcal{C}$ 
  and $\mathcal{C}^\bot$ is a ternary LCD $[223,216,3]$ code. Note also that $\mathcal{C}^\bot$ is optimal according to \cite{Codetables}.  
  Verified by Magma \cite{Magma}, these results are true. 
\end{example}

\begin{example}\label{LCD 1,4+0+-1, even}
  Let $G$ denote the generator matrix of code $\overline{\mathcal{C}_{D_{g}(0)}}$ in Example \ref{1,4+0+-1, even}. 
  According to Theorem \ref{Th. LCD simple even ternary}, $(I\ G)$ generates a ternary LCD [87,6,52] code $\mathcal{C}$ 
  and $\mathcal{C}^\bot$ is a optimal ternary LCD [87,81,3] code.
  Note also that $\mathcal{C}^\bot$ is optimal according to \cite{Codetables}.
  Verified by Magma \cite{Magma}, these results are true.  
\end{example}

\begin{example}\label{LCD 1,3+0+-1, odd}
  Let $G$ denote the generator matrix of code $\overline{\mathcal{C}_{D_{g}(0)}}$ in Example \ref{1,3+0+-1, odd}.
  According to Theorem \ref{Th. LCD simple even ternary}, $(I\ G)$ generates a ternary [32,5,16] LCD code $\mathcal{C}$ 
  and $\mathcal{C}^\bot$ is a optimal ternary LCD [32,27,3] code.
  Note also that $\mathcal{C}^\bot$ is optimal according to \cite{Codetables}.
  Verified by Magma \cite{Magma}, these results are true. 
\end{example}

\section{Conclusions}\label{conclude} 
In this paper, we constructed several new infinite families of ternary self-orthogonal augmented codes with flexible parameters from weakly regular plateaued functions 
and we completely determine their parameters and weight distributions as well as the parameters of their dual codes. 
It is also worth noting that these families of ternary self-orthogonal codes contain some (almost) optimal codes. 
As an application, we further derive several new infinite families of ternary LCD codes from these self-orthogonal codes 
and some of them are at least almost optimal according to the Sphere-packing bound given in Equation (\ref{eq.Sphere-packing}). 
For future research, it would be interesting to construct more infinite families of $p$-ary self-orthogonal codes with good parameters by using other constructions and functions.

\section*{Acknowledgments} 
This research is supported by the National Natural Science Foundation of China under Grant No. U21A20428 and 12171134. \\

\noindent\textbf{Data availability} No data.  \\ 

\noindent\textbf{Conflict of Interest} The authors declare that there is no possible conflict of interest.

\section*{Acknowledgments} 
This research is supported by the National Natural Science Foundation of China under Grant No. U21A20428 and 12171134. 

\section*{}

\end{sloppypar}
\end{document}